\DeclareFontFamily{U}{mathx}{\hyphenchar\font45}
\DeclareFontShape{U}{mathx}{m}{n}{
      <5> <6> <7> <8> <9> <10>
      <10.95> <12> <14.4> <17.28> <20.74> <24.88>
      mathx10
      }{}
\DeclareSymbolFont{mathx}{U}{mathx}{m}{n}
\DeclareMathSymbol{\bigtimes}{1}{mathx}{"91}
\definecolor{DarkRed}{rgb}{0.5,0.1,0.1}
\definecolor{DarkBlue}{rgb}{0.1,0.1,0.5}
\definecolor{ForestGreen}{rgb}{0.1333,0.5451,0.1333}
\definecolor{Red}{rgb}{0.9,0,0}
\crefname{property}{property}{Property}
\crefname{equation}{eq}{Eq}
\def\BState{\State\hskip-\ALG@thistlm}
\newtheorem{theorem}{Theorem}
\newtheorem{lemma}{Lemma}[section]
\newtheorem{proposition}[lemma]{Proposition}
\newtheorem{corollary}[lemma]{Corollary}
\newtheorem{claim}[lemma]{Claim}
\newtheorem{fact}[lemma]{Fact}
\newtheorem{definition}[lemma]{Definition}
\newtheorem{problem}{Problem}
\newtheorem*{claim*}{Claim}
\newtheorem*{assumption*}{Assumption}
\newtheorem*{proposition*}{Proposition}
\newtheorem*{lemma*}{Lemma}
\newtheorem*{problem5*}{Problem}
\crefname{lemma}{Lemma}{Lemmas}
\crefname{claim}{claim}{claims}
\newtheorem{mdresult}{Result}
\newenvironment{result}{\begin{mdframed}[backgroundcolor=lightgray!40,topline=false,rightline=false,leftline=false,bottomline=false,innertopmargin=2pt]\begin{mdresult}}{\end{mdresult}\end{mdframed}}
\newtheorem{remark}[lemma]{Remark}
\newtheorem{observation}[lemma]{Observation}
\newtheoremstyle{restate}{}{}{\itshape}{}{\bfseries}{~(restated).}{.5em}{\thmnote{#3}}
\theoremstyle{restate}
\theoremstyle{definition}
\newtheorem{mdinvariant}[lemma]{Definition}
\newenvironment{Definition}{\begin{mdframed}[roundcorner=10pt, hidealllines=false,innerleftmargin=5pt,backgroundcolor=white!10,innertopmargin=2pt]\begin{mdinvariant}}{\end{mdinvariant}\end{mdframed}}
\crefname{mdinvariant}{Definition}{Definitions}
\renewcommand{\geq}{\geqslant}
\renewcommand{\leq}{\leqslant}
\renewcommand{\qed}{\nobreak \ifvmode \relax \else
      \ifdim\lastskip<1.5em \hskip-\lastskip
      \hskip1.5em plus0em minus0.5em \fi \nobreak
      \vrule height0.75em width0.5em depth0.25em\fi}
\newcommand{\tvd}[2]{\ensuremath{\norm{#1 - #2}_{\mathrm{tvd}}}}
\newcommand{\Ot}{\ensuremath{\widetilde{O}}}
\newcommand{\eps}{\ensuremath{\varepsilon}}
\newcommand{\Paren}[1]{\Big(#1\Big)}
\newcommand{\bracket}[1]{\left[#1\right]}
\newcommand{\paren}[1]{\ensuremath{\mleft(#1\mright)}\xspace}
\newcommand{\card}[1]{\left\vert{#1}\right\vert}
\newcommand{\norm}[1]{\ensuremath{\|#1\|}}
\newcommand{\prob}[1]{\Pr\paren{#1}}
\newcommand{\expect}[1]{\Exp\bracket{#1}}
\newcommand{\set}[1]{\ensuremath{\left\{ #1 \right\}}}
\newcommand{\polylog}{\mbox{\rm  polylog}}
\DeclareMathOperator*{\Exp}{\ensuremath{{\mathbb{E}}}}
\DeclareMathOperator*{\Prob}{\ensuremath{\textnormal{Pr}}}
\renewcommand{\Pr}{\Prob}
\newenvironment{tbox}{\begin{tcolorbox}[
		enlarge top by=5pt,
		enlarge bottom by=5pt,
		breakable,
		boxsep=0pt,
		left=4pt,
		right=4pt,
		top=10pt,
		arc=0pt,
		boxrule=1pt,toprule=1pt,
		colback=white
		]
	}
	{\end{tcolorbox}}
\newcommand{\event}{\ensuremath{\mathcal{E}}}
\newcommand{\II}{\ensuremath{\mathbb{I}}}
\newcommand{\mireal}[1][]{
	\ifx\relax#1\relax%
	\II(\mione \,; \mitwo)%
	\else%
	\II(\mione \,; \mitwo\mid #1)%
	\fi
}
\newcommand{\commentout}[1]{}
\newtheorem{mdalg}{Algorithm}
\newenvironment{Algorithm}{\begin{tbox}\begin{mdalg}}{\end{mdalg}\end{tbox}}
\crefname{mdalg}{algorithm}{Algorithms}
\newcommand{\istar}{\ensuremath{h^*}}
\newcommand{\ngc}{\NGC}
\newcommand{\distHNGC}{\mathcal{H}\xspace}
\newcommand{\outs}{\ensuremath{\textnormal{\textsf{out}}}}
\newcommand{\xormatch}[1]{\ensuremath{\textnormal{\textsf{XOR-Matching}}\paren{#1}}\xspace}
\newcommand{\permmatch}[1]{\ensuremath{\textnormal{\textsf{Perm-Matching}}\paren{#1}}\xspace}
\newcommand{\block}[1]{\ensuremath{\textnormal{\textsf{Block}}(#1)}\xspace}
\newcommand{\multiblock}[1]{\ensuremath{\textnormal{\textsf{Multi-Block}}(#1)}\xspace}
\newcommand{\distNGC}{\mu_{\textnormal{\texttt{ngc}}}\xspace}
\newcommand{\distDHX}{\mu_{\textnormal{\texttt{dhx}}}\xspace}
\newcommand{\NGC}{\ensuremath{\textnormal{\textbf{NGC}}}\xspace}
\newcommand{\DHXOR}{\ensuremath{\textnormal{\textbf{DHX}}}\xspace}
\newcommand{\LG}{\ensuremath{\mathcal{L}}}
\newcommand{\oDD}[2]{\ensuremath{\overrightarrow{\textnormal{D}}_{\!#1,#2}}\xspace}
\newcommand{\oRR}[1]{\ensuremath{\overrightarrow{\textnormal{R}}_{#1}}\xspace}
\newcommand{\ww}{w}
\newcommand{\dd}{d}
\newcommand{\conc}{\mathbin\Vert}
\newcommand{\cU}{\ensuremath{\mathcal{X}}}
\newcommand{\bias}[2]{\ensuremath{\textnormal{\textsf{bias}}_{#1}\paren{#2}}}
\newcommand{\Tau}{\Phi}
\newcommand{\ttau}{\phi}
\newcommand{\Left}{\ensuremath{f_L}}
\newcommand{\Mid}{\ensuremath{f_M}}
\newcommand{\Right}{\ensuremath{f_R}}
\newcommand{\Aedge}{\ensuremath{\alpha}}
\newcommand{\Bedge}{\ensuremath{\beta}}
\newcommand{\indclean}{\ensuremath{R}}
\newcommand{\msg}{\ensuremath{\pi}}
\newcommand{\goodmsg}{\ensuremath{\textnormal{\textsf{GoodMsg}}}\xspace}
\newcommand{\clninv}[1]{\ensuremath{\textnormal{\textsf{clean}}^{-1}(#1)}}
\newcommand{\sigmaandx}[2]{\ensuremath{\mathcal{Q}}(#1, #2)}
\newcommand{\sigmaxunclean}[1]{\ensuremath{\mathcal{Q}^{\textnormal{\textsf{uc}}}(#1)}}
\newcommand{\sigmaxclean}[1]{\ensuremath{\mathcal{Q}^{\textnormal{\textsf{c}}}(#1)}}
\newcommand{\tomsg}[1]{\ensuremath{I(#1)}}
\newcommand{\msglgth}{\ensuremath{c}}
\newcommand{\actt}{\ensuremath{t_a}}
\newcommand{\clw}{\ensuremath{\ww_c}}
\newcommand{\NGCst}{\ensuremath{\textnormal{\textbf{SGC}}}}
\newcommand{\DHXORst}{\ensuremath{\textnormal{\textbf{SHX}}}}
\newcommand{\cons}{\ensuremath{c}}
\newcommand{\invsigma}{\ensuremath{\sigma_{-1}}}
\renewcommand{\eps}{\epsilon}
\title{(Noisy) Gap Cycle Counting Strikes Back: 
Random Order Streaming Lower Bounds for Connected Components and Beyond\footnote{An extended abstract of this paper appears at ACM Symposium on Theory of Computing (STOC) 2023. \medskip}}
\author{Sepehr Assadi\footnote{Department of Computer Science, Rutgers University. Research supported in part by an NSF CAREER Grant CCF-2047061, a Sloan Research Fellowship, a Google Research gift, and a Fulcrum award from Rutgers Research Council. Emails: \texttt{sepehr@assadi.info} and \texttt{sun.j@rutgers.edu}.} \and Janani Sundaresan\footnotemark[2]}
\date{}
\begin{document}
\maketitle

\pagenumbering{roman}

\begin{abstract}

We continue the study of the communication complexity of gap cycle counting problems. These problems have been introduced  by Verbin and Yu [SODA 2011] and have found numerous applications
in proving {streaming} lower bounds. In the {noisy} gap cycle counting problem (NGC), there is a small integer $k \geq 1$ and an $n$-vertex graph consisted of  vertex-disjoint union of either $k$-cycles or $2k$-cycles, plus $O(n/k)$ disjoint paths of length $k-1$ in both cases (``noise''). The edges of this graph are partitioned between Alice and Bob whose goal is to decide which case the graph belongs to with minimal communication  from Alice to Bob.  

\medskip

We study the \textbf{robust communication complexity}---\`a la Chakrabarti, Cormode, and McGregor [STOC 2008]---of NGC, namely, 
when edges are partitioned {randomly} between the  players. This is in contrast to all prior work on gap cycle counting problems in adversarial partitions. 
While NGC can be solved trivially with zero communication when $k < \log{n}$, we prove that when $k$ is a constant factor larger than $\log{n}$, the robust ({one-way}) communication complexity of NGC is $\Omega(n)$ bits. 

\medskip

As a corollary of this result, we can prove several new graph streaming lower bounds for \textbf{random order streams}. In particular, we show that any streaming algorithm that for every $\eps > 0$ estimates the number of connected components 
of a graph presented in a random order stream to within an $\eps \cdot n$ additive factor requires $2^{\Omega(1/\eps)}$ space, settling a conjecture of Peng and Sohler [SODA 2018]. 
We further discuss new implications of our lower bounds to other problems such as estimating size of maximum matchings and independent sets on planar graphs, random walks, as well as to stochastic streams. 

\end{abstract}

\clearpage

\setcounter{tocdepth}{3}
\tableofcontents

\clearpage

\pagenumbering{arabic}
\setcounter{page}{1}

\section{Introduction}

The streaming model of computation, introduced by~\cite{AlonMS96}, is studied extensively in the literature as a way of capturing some challenges of processing massive datasets, in particular, space efficiency. 
In this model, the input is presented to the algorithm in a stream, and the algorithm needs to process this stream on the fly with minimal space, ideally polylogarithmic in the input size. 
Numerous such efficient algorithms are known for various statistical estimation problems like frequency moments, heavy hitters, quantiles, and alike (see, e.g.~\cite{Muthukrishnan05} for a summary of early work on this model). 
But, estimating graph parameters, say, the number of connected components or size of maximum matchings, from a stream of their edges turned out to be quite more challenging, and   
for many graph problems of interest, one can rule out existence of such space efficient algorithms (see, e.g.,~\cite{McGregor14}).   

This provable impossibility of efficient  streaming algorithms for most graph problems has led researchers to consider natural relaxations of the model, e.g., by allowing more memory or a  few more passes\footnote{Often in the semi-streaming 
model~\cite{FeigenbaumKMSZ05} wherein the memory is proportional to the number of vertices $n$. In this work, we do \emph{not} consider semi-streaming algorithms and  solely focus on algorithms with $o(n)$ space.}.
A particularly successful relaxation has been to go beyond the doubly worst case analysis of streaming algorithms and instead consider \emph{random order} streams, wherein the input graph is still adversarial, but its edges
are presented in a uniformly random order to the streaming algorithm~\cite{KapralovKS14,KapralovKS15,MonemizadehMPS17,PengS18,KapralovMNT20,CzumajFPS20,KallaugherKP22,ChipKKP22,SaxenaSSV23}. 
Our goal in this paper is to study the inherent limitations of random order graph streaming algorithms. 
 
One of the most successful techniques in designing random order graph streaming algorithms is through implementing \emph{local exploration} (query) algorithms. For instance, consider
the problem of estimating the number of connected components in an $n$-vertex graph to within an $\eps \cdot n$ factor. A seminal work of~\cite{ChazelleRT01} gave a local exploration query algorithm for this problem
by performing truncated BFS of size $O(1/\eps)$ from few random vertices of the graph. In general, computing BFS trees in the (adversarial arrival) streaming model is  quite challenging 
(see, e.g.,~\cite{FeigenbaumKMSZ08,AssadiKSY20,AssadiN21}). Yet,~\cite{PengS18} gave an elegant implementation of this algorithm in random order streams by crucially exploiting the fact that in such streams, edges of a BFS tree of size $O(1/\eps)$ may 
come in the ``right'' order of exploration
in the stream with probability  $(1/\eps)^{-\Theta(1/\eps)}$. Thus, by sampling roughly $(1/\eps)^{O(1/\eps)}$ vertices and performing a BFS from them by extending each one directly given the arriving edges,~\cite{PengS18}, and later~\cite{ChipKKP22}, showed 
that one can indeed estimate the number of connected components in random order streams in small space\footnote{The  algorithm of~\cite{PengS18} required $2^{O(1/\eps^3)} \cdot (\log{n})$ space but it was  improved to $(1/\eps)^{O(1/\eps)} \cdot (\log{n})$ 
 by~\cite{ChipKKP22}.}. 

While the random order streaming algorithms of~\cite{PengS18,ChipKKP22} for connected components are quite efficient for constant values of $\eps > 0$, their exponential dependence on $\eps$ in the space can become 
 prohibitive even for moderately small values of $\eps$. Yet, these exponential-in-$\eps$ (or even much larger) dependences are quite common among  random order graph streaming algorithms such as $\eps$-property testers in~\cite{MonemizadehMPS17,CzumajFPS20}, $\eps \cdot n$ additive approximation of  matching size in bounded degree graphs~\cite{MonemizadehMPS17}, $(1+\eps)$-approximation of minimum spanning tree weight~\cite{PengS18}, 
$(1+\eps)$-approximation of maximum independent set in planar graphs~\cite{PengS18}, and random walks~\cite{KallaugherKP22}. 

But, are such exponential dependences also necessary? This question was originally posed by~\cite{PengS18} who stated that: 
\begin{quote}
	\emph{``It seems to be plausible to conjecture that approximating the number of connected
components requires space exponential in $1/\eps$.
 It would be nice to have lower bounds that confirm this conjecture.''}
\end{quote}

Very recently~\cite{ChipKKP22} provided a strong evidence in favor of this conjecture by relaxing the question. They showed that: $(i)$ if we slightly relax the random order streams by allowing some correlation in the ordering of the edges\footnote{\cite{ChipKKP22} defined the \emph{hidden-batch} random order model wherein the edges of the graph are first adversarially batched together in groups of $O(1)$ size, and then the groups are randomly permuted and presented to the algorithm.}, 
and $(ii)$ if we further relax our goal in \emph{finding} a  connected component of size $O(1/\eps)$ (as opposed to estimating their numbers), then they can indeed prove an exponential (even factorial) lower bound on $\eps$ in the space of 
such algorithms\footnote{Among these two relaxations, the latter seems to us a stronger condition to impose on algorithms as proving \emph{search} lower bounds can be quite different than \emph{decision} lower bounds in this context; see, e.g.,
the classical communication/streaming lower bounds on the Hidden Matching problem (search)~\cite{BarJK04} and Boolean Hidden Matching (decision)~\cite{GavinskyKKRW07}, or recent advances on search lower bounds for better-than-$2$-approximation of multi-pass streaming MAX-CUT~\cite{ChenKPSSY23} in absence of any decision lower bounds.}. 

Nevertheless, the original conjecture for (truly) random order streams and, more importantly, for  estimation problem remains unanswered. This state-of-affairs is the starting point of our work. 

We develop a  new tool for proving random order streaming lower bounds that leads to $2^{\Omega(1/\eps)}$ space lower bounds for a wide range of graph estimation problems, including  the 
connected component problem, settling the conjecture of~\cite{PengS18}. Similar to most other graph streaming lower bounds, our proofs are through communication complexity. 
In the following, we first define the main communication problem we work with and the lower bound we establish for it. We then present  rather direct implications of this communication lower bound
to  random order streams. 

\subsection{Local Exploration in Graph Streams and (Noisy) Gap Cycle Counting}\label{sec:gcc}

At its core, the question we study in this paper boils down to understanding the power of random order streaming algorithms in performing local exploration. In general, 
such questions have a rich history in the graph streaming literature. Already more than a decade ago,~\cite{VerbinY11} introduced gap cycle counting problems in communication complexity as a way to prove lower bounds 
for streaming problems that rely on these explorations. In this problem, Alice and Bob are given edges of a graph $G$ and an integer $k \geq 1$ and want to test if $G$ is  
a disjoint union of $k$-cycles or $2k$-cycles. By building on the Fourier-analytic approach of~\cite{GavinskyKKRW07},~\cite{VerbinY11} proved that any one-way communication protocol wherein Alice sends a single message to Bob requires $n^{1-O(1/k)}$ communication to solve this problem. As communication complexity 
implies streaming lower bounds, this leads to an $n^{1-O(1/k)}$-space lower bounds for single-pass streaming algorithms for gap cycle counting. 

Gap cycle counting is a highly versatile problem that captures the difficulty in solving various graph streaming problems. For instance, since the number of connected components in the two cases of this problem differs by an $n/2k$ additive factor,
one can use the lower bound of~\cite{VerbinY11} to also obtain that estimating the number of connected components to within an $\eps \cdot n$ factor in single-pass (adversarial order) streams requires $n^{1-O(\eps)}$ space; see~\cite{HuangP16} for this lower 
bound and its optimality. Gap cycle counting problems have been extensively used for proving space lower bounds for various streaming
problems, e.g., in~\cite{KapralovKS15,KoganK15,BuryS15,EsfandiariHLMO15,HuangP16,LiW16,GuruswamiVV17,BravermanCKLWY18,KapralovMTWZ22} (see~\cite[Appendix B]{AssadiKSY20} for a survey of these results).  
As a result, this problem has found its way among the few canonical communication problems---alongside, say, Index, Disjointness, and Gap Hamming Distance---for proving streaming lower bounds. 

Given the canonical role of gap cycle counting in streaming lower bounds, recent years have witnessed several attempts in extending the lower bounds for this problem (or its natural extensions) to other streaming settings as well. 
In particular,~\cite{AssadiKSY20} extended the lower bounds for this problem to multi-round communication (multi-pass streaming) algorithms, proving that $\Omega(\log{k})$ rounds/passes or $n^{\Omega(1)}$ communication/space is needed. Subsequently~\cite{AssadiN21} extended these lower bounds to an optimal $\Omega(k)$-pass for $n^{o(1)}$-space streaming algorithms for a ``noisy'' variant of this problem wherein, the input graph additionally contains $\Theta(n/k)$ paths 
of length $k$ in either cases\footnote{The lower bound of~\cite{AssadiN21} is not based on two-player communication complexity and is proven  directly for streaming algorithms but can also be stated as a 
(NIH) multi-party communication lower bound.} -- in other words, depth-$k$ exploration requires $\Omega(k)$ passes in adversarially ordered streams. 
Combining this with prior reductions, the results in~\cite{AssadiKSY20,AssadiN21} provided the first, and in many cases the only, known lower bounds for multi-pass streaming algorithms for graph estimation problems. 

\paragraph{(Noisy) gap cycle counting in random order streams.} All prior lower bounds for gap cycle counting problems were focused on adversarial ordered streams\footnote{The only exception we are aware of is the $\Omega(\sqrt{n})$-space lower
bound of~\cite{KapralovKS15} for better-than-$2$-approximation of MAX-CUT in random order stream and its very recent extension in~\cite{SaxenaSSV23} to other streaming CSPs. These results however do not imply
random order lower bounds for gap cycle counting and instead borrow ideas from this problem to establish their own lower bounds using several other technical ingredients.}. 
However, given the previous successes  in using this problem for addressing the role of local exploration in streaming algorithms and the myriad of reductions known already from this problem, it is quite natural to wonder if we can 
also prove an analogue of these lower bounds for random order streaming algorithms. This is precisely the approach we take in this paper. 

To be able to lift our communication lower bounds to random order streaming algorithms, we work with the notion of \emph{robust communication complexity} introduced by~\cite{ChakrabartiCM08}. In this model, a graph $G$ is adversarially chosen
but then each of its edges is sent to Alice and Bob uniformly at random. This random partitioning of the input then allows one to extend communication lower bounds in this model to streaming algorithms on random order streams (see~\Cref{prop:stream-cc}). 
This motivates the definition of our main communication problem (similar-in-spirit to~\cite{AssadiN21}).

\begin{problem}[\textbf{Noisy Gap Cycle Counting (NGC)}] \label{def:ngc}
	For integers $n,k \geq 1$, in $\NGC_{n,k}$, we have a graph $G = (V,E)$ on $n $ vertices such that $G$ either contains: $(i)$ $(n/4k)$ vertex-disjoint cycles of length $2k$, or $(ii)$ $(n/2k)$ vertex-disjoint cycles of length $k$.
	In addition, in both cases $G$ contains $(n/2k)$ vertex-disjoint paths of length $k-1$.\footnote{This choice of number of vertex-disjoint paths is  for simplicity of exposition and can be replaced with any $o(n/k)$ or even smaller; see~\Cref{rem:reduce}.}
	
	Each edge $e \in E$ is sent uniformly at random to one of the sets $E_A$ and $E_B$, given to Alice and Bob, respectively. Alice can send one message to Bob, and Bob outputs
	which of the above two cases the graph belongs to. 
\end{problem}

It is easy to see that  when $k$ is sufficiently smaller than $\log{n}$, $\NGC_{n,k}$, admits a trivial zero communication protocol. In this case, one of the $(n/4k)$ vertex-disjoint cycles of length $2k$ 
of case $(i)$ appear entirely in $E_B$ given to Bob with constant probability, which allows him to distinguish this case from the other one. 

Our main technical result shows that, in contrast, the moment $k$ becomes sufficiently larger than $\log{n}$, this problem requires 
a large communication. 

\begin{result}\label{res:ngc}
	There exists an absolute constant $b > 0$ such that the following is true. Any one-way protocol for $\NGC_{n,k}$ for $k \geq b \cdot \log{n}$ requires $\Omega(n)$ communication from Alice
	to Bob to solve the problem with probability of success at least $2/3$. 
\end{result}

\Cref{res:ngc} establishes that while ``short-range'' exploration, namely, depth-$o(\log{n})$ exploration, can be trivial in the robust communication complexity model, exploring slightly longer ranges requires exponential-in-range communication. 
Prior to our work, no non-trivial robust communication lower bounds were known for this or any other gap cycle counting problems. We hope this result paves the path toward a more systematic study of lower bounds 
in random order streams. 

\subsection{Random Order Streaming Lower Bounds from NGC}

We can now use~\Cref{res:ngc} alongside existing reductions in~\cite{VerbinY11,AssadiKSY20,AssadiN21}, as well as some new ones, to establish several new random order graph streaming lower bounds (see~\Cref{sec:stream} for the definition
and background on each problem).

\begin{result}\label{res:stream}
	None of the following problems admit an algorithm in random order streams that for every $\eps > 0$ uses $2^{o(1/\eps)}$ space and computes the  answer with probability at least $2/3$ (the references list the known random order
	streaming algorithms for these problems): 
	\begin{enumerate}[label=$(\roman*)$]
		\item estimating connected components to within an $\eps \cdot n$ additive factor (cf.~\cite{PengS18,ChipKKP22}); 
		\item $(1+\eps)$-approximation of weight of the minimum spanning tree (cf.~\cite{PengS18}); 
		\item $(1+\eps)$-approximation of  matching size in bounded degree graphs (cf.~\cite{MonemizadehMPS17}); 
		\item $(1+\eps)$-approximation of maximum independent size in planar graphs (cf.~\cite{PengS18}); 
		\item random walk generation; see~\Cref{sec:rw} for the precise definition (cf.~\cite{KallaugherKP22}). 
	\end{enumerate}
\end{result}

Prior to our work, no non-trivial random order streaming lower bounds were known for any of these problems (but as mentioned earlier, \cite{ChipKKP22} proved lower bounds for the search version of some of these problems and in almost random order streams). 
Our~\Cref{res:stream} establishes the necessity of an exponential-in-$\eps$ space-dependence in general, for all of these problems. In particular, part $(i)$ of this result now fully settles the conjecture of~\cite{PengS18} mentioned earlier. 

An important remark on~\Cref{res:stream} is in order. Direct reductions from NGC when proving~\Cref{res:stream} typically require setting $\eps \approx 1/\log{n}$ so that $2^{o(1/\eps)} = o(n)$ to apply~\Cref{res:ngc}.  This is reminiscent of other communication complexity lower bounds, 
most prominently for  $(1+\eps)$-approximation of the number of distinct elements from the Gap Hamming Distance problem~\cite{IndykW03,ChakrabartiR11}. Similar to those cases, for some problems (e.g., parts $(iii)$ and $(iv)$ of~\Cref{res:stream}), this can be easily extended to other choices of $\eps > 0$
through a simple padding argument, but this approach does not work for all problems, in particular for part $(i)$. Thus, for some problems, our lower bound only holds for certain values of $\eps$ and not all choices (see~\Cref{rem:small-eps} for a detailed discussion of this topic)\footnote{This in particular implies that while our lower bounds using communication complexity bypasses the relaxations imposed by~\cite{ChipKKP22} and target the original estimation problem on random order streams, unlike~\cite{ChipKKP22}, our results cannot be applied to the full range of parameter $\eps > 0$. It is however also worth mentioning that the $(1/\eps)^{\Omega(1/\eps)}$ lower bounds in~\cite{ChipKKP22} for constant $\eps > 0$ (or even $\eps \approx 1/\log\log{n}$) are subsumed by the $\Omega(\log{n})$ bits lower bound needed simply to read an edge from the stream. Thus, even in~\cite{ChipKKP22}, one needs slightly sub-constant values of $\eps$ to obtain non-trivial lower bounds, although those bounds are exponentially better than ours.}. 

We note that our lower bounds in~\Cref{res:stream} are somewhat stronger than what is stated and in most cases rule out even $2^{o(1/\eps)} \cdot n^{o(1)}$ space algorithms. This is necessary to argue near-optimality of these results 
as all existing algorithms also have some mild dependence on $n$, typically $O(\log{n})$, in addition to $\eps$, simply for storing the edges or some counters. 

Finally, we also extend our lower bounds to \emph{stochastic} graph streams defined in~\cite{CrouchMVW16}, wherein the stream consists of independently chosen random samples from the edges of the graph,
and the goal is to solve the problem with minimal space and minimal number of samples.  To obtain this, we prove an analogue of~\Cref{res:ngc} but for a communication model that corresponds to stochastic streams, and 
use that to obtain similar lower bounds as in~\Cref{res:stream} for stochastic streams as well; see~\Cref{sec:stochastic} for more information on these results. 

\subsection{Our Techniques} 

The bulk of the technical effort of our paper is in proving~\Cref{res:ngc} and the lower bounds in~\Cref{res:stream} follow easily from this result. So, in this section, we only 
focus on the former result, namely, a robust communication lower bound for $\NGC$, and postpone the details of the latter one to~\Cref{sec:stream}. 

The first step of our lower bound follows a by-now familiar ``decorrelation” step to break the strong promise in the input graphs (that the cycles are
either all short or all long). This approach dates all the way back  to~\cite{GavinskyKKRW07,VerbinY11} and was done more explicitly in~\cite{AssadiKSY20,AssadiN21,KapralovMTWZ22,AssadiCLMW22}. At a high level, the intuition is as follows. 
A natural strategy for a  protocol to solve $\NGC$ is to attempt to solve the problem for ``many'' cycles (i.e., determine whether they are $k$-cycles or $2k$-cycles), but each one with only a ``minor'' chance of success -- given the strong correlation of the cycles in the input, the protocol is likely to solve the problem correctly for one of the cycles and use that to infer which case it belongs to. These decorrelation steps use a hybrid argument to show that the converse is also true: any protocol 
for $\NGC$ leads to a protocol for a ``hybrid problem'' that roughly speaking corresponds to the following: given a fixed vertex $v \in V$, is this vertex part of a $k$-cycle or a $2k$-cycle without the strong promise on remaining cycles (as in, now, combination of  length-$k$ and length-$2k$ cycles are also possible). Nevertheless, the challenge is that this new protocol only has a ``tiny'' advantage over random guessing the answer, i.e., 
it solves this new problem with probability only $1/2+O(k/n)$. Thus, for the next step of the argument, we should prove a ``low probability of success'' lower bound for our new hybrid problem. 

The next step, and the main part of the argument, is to prove this low probability of success lower bound. The main recipe for this part is to apply some ``hardness amplification'' technique, typically an XOR-lemma~\cite{GavinskyKKRW07,VerbinY11,AssadiN21} 
or a direct product argument~\cite{AssadiKSY20}. This allows one to obtain a low probability of success lower bound from a more typical lower bound by amplifying the hardness often through certain gadgets embedded in the input. 
This is where we entirely deviate from these prior works, as unlike all of them, our goal is to prove a robust communication complexity lower bound, 
i.e., when the inputs are partitioned randomly between Alice and Bob. This random partitioning of the input has this undesired side effect that whenever we fix a certain gadget in our input, it is quite likely that the edges of this gadget are partitioned 
the ``wrong way'' across the players, entirely destroying the role of this gadget. 

To address this challenge, we design a hard input distribution for $\NGC$ which is supported on highly structured input graphs. This then allows us to argue that after this decorrelation step, the new hybrid problem we have to deal with is also highly structured. 
Quite informally speaking, this means that the answer to our hybrid problem now is a function of a series of ``localized'' gadgets that we can keep track of. In particular, we can show that even under random partitioning of the input, with high probability, many of these gadgets ``survive'', namely, retain their hardness amplifying property. Moreover, the ``failed'' gadgets now are not able to leak any information about the surviving ones or negate their effects. This then, in a highly non-black-box way, 
bring us to the more familiar setting of proving low probability of success lower bounds using hardness amplification ideas, and more specifically, an XOR-lemma-type argument (although we emphasize that we neither prove nor use 
any form of a generic XOR-lemma, say, as in~\cite{AssadiN21} in this work, as the random partitioning of the input does not allow using these results in a black-box way). 

The final step will be then to establish the lower bound using these surviving gadgets which is based on a white-box application of the Fourier analytic approaches in~\cite{GavinskyKKRW07,VerbinY11} (although interestingly, a technical corollary of the main Fourier analytic tool used in these works, namely, the KKL inequality~\cite{KahnKL88} suffices for our purpose and thus 
we only work with this corollary, without us performing any Fourier analysis in this paper; see~\Cref{par:biasxor} for more details). 






\clearpage

\clearpage

\section{Preliminaries}

\paragraph{Notation.} For any integer $m \geq 1$, let $[m] := \set{1,2,\ldots,m}$ and $\mathcal{S}_m$ be the set of all permutations on $[m]$. We use $\invsigma$ to define the inverse of permutation $\sigma \in \mathcal{S}_{m}$. 
In any graph $G = (V, E)$, we use $V$ to denote the vertices and $E$ for the edges. Throughout, we set $n := \card{V}$ to denote the number of vertices. 
For any two vertices $s,t \in V$, we write $s \rightarrow t$ to denote that there is a path of length one, namely, an edge, from $s$ to $t$, and $s \rightsquigarrow t$ to denote there is a path of arbitrary length (the direction of the path will be clear from the context). 

The \emph{total variation distance} of  any two distributions $\mu$ and $\nu$ over the same support $\Omega$ is: 
\[
\tvd{\mu}{\nu} = \max_{\Omega' \subseteq \Omega} (\mu(\Omega') - \nu(\Omega')) = \frac12 \sum_{s \in \Omega} \card{\mu(s)-\nu(s)}. 	
\]

\begin{fact}\label{prop:tvdsample}
	Given a single sample $s$ chosen uniformly at random from one of the distributions $\mu$ and $\nu$, the best probability of successfully deciding whether $s$ came from $\mu$ or $\nu$ is $\frac{1}{2} + \frac{1}{2} \cdot \tvd{\mu}{\nu}$ (obtained 
	via the maximum likelihood estimator). 
\end{fact}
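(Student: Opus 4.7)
The plan is to prove this by directly computing the success probability of an arbitrary decision rule and then invoking the variational characterization of total variation distance recalled just above the statement. First I would parametrize the space of strategies: any (possibly randomized) tester is specified by a map $f : \Omega \to \{0,1\}$, and since randomization is a convex combination of deterministic rules, it suffices to restrict attention to deterministic $f$. Each such $f$ is determined by the set $A := \{s \in \Omega : f(s) = 0\} \subseteq \Omega$, where the tester outputs ``$s$ came from $\mu$'' on inputs in $A$ and ``$s$ came from $\nu$'' otherwise.

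Next I would expand the success probability. Because the source is chosen uniformly from $\{\mu,\nu\}$, the probability of a correct answer equals
\[
\tfrac{1}{2}\Pr_{s \sim \mu}[s \in A] + \tfrac{1}{2}\Pr_{s \sim \nu}[s \notin A] = \tfrac{1}{2} + \tfrac{1}{2}\bigl(\mu(A) - \nu(A)\bigr).
\]
Maximizing over $A \subseteq \Omega$ and substituting the definition $\tvd{\mu}{\nu} = \max_{A \subseteq \Omega}(\mu(A) - \nu(A))$ immediately gives the upper bound $\frac{1}{2} + \frac{1}{2}\tvd{\mu}{\nu}$.

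Finally I would verify that this upper bound is attained by the maximum likelihood estimator, that is, by choosing $A^* := \{s : \mu(s) \geq \nu(s)\}$. For this set,
\[
\mu(A^*) - \nu(A^*) = \sum_{s : \mu(s) \geq \nu(s)} \bigl(\mu(s) - \nu(s)\bigr),
\]
which is the positive part of the signed measure $\mu - \nu$; by the identity $\tvd{\mu}{\nu} = \tfrac{1}{2}\sum_s |\mu(s) - \nu(s)|$ this equals $\tvd{\mu}{\nu}$, matching the upper bound. The statement is entirely elementary and I do not expect any real obstacle; the only point worth flagging explicitly is the reduction from randomized to deterministic rules, which follows from a one-line averaging argument.
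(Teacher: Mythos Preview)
Your proof is correct and is the standard elementary argument. The paper does not actually prove this statement: it is recorded as a \emph{Fact} in the preliminaries and used without justification, so there is nothing to compare against beyond noting that your approach is exactly the classical one the authors are implicitly relying on.
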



\begin{proposition}[cf.~\cite{DubhashiP09}]\label{prop:chernoff}
	For any $m$ independent $0/1$-random variables $X_1, X_2, \ldots, X_m$ with $X = \sum_{i=1}^{m} X_i$, and $ 0 \leq \eps \leq 1$,
	\[
	\prob{X \leq (1-\eps) \cdot \expect{X}} \leq \exp\paren{\frac{-\eps^2 \cdot \expect{X}}{3}}.
	\]
\end{proposition}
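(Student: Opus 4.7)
The plan is to deploy the classical exponential moment (Chernoff--Bernstein) method. Write $\mu := \expect{X} = \sum_{i=1}^m p_i$ where $p_i := \prob{X_i = 1}$. For a parameter $t > 0$ to be chosen later, I would apply Markov's inequality to the nonnegative random variable $e^{-tX}$:
\[
\prob{X \leq (1-\eps)\mu} \;=\; \prob{e^{-tX} \geq e^{-t(1-\eps)\mu}} \;\leq\; e^{t(1-\eps)\mu} \cdot \expect{e^{-tX}}.
\]

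Next I would use independence of $X_1, \ldots, X_m$ to factor the moment generating function, and then bound each factor using the elementary inequality $1 + x \leq e^x$:
\[
\expect{e^{-tX}} \;=\; \prod_{i=1}^m \paren{1 + p_i(e^{-t} - 1)} \;\leq\; \eexp{(e^{-t}-1)\mu}.
\]
Substituting this back and making the (near-)optimal choice $t = \ln\!\paren{1/(1-\eps)}$, so that $e^{-t} = 1-\eps$, would yield
\[
\prob{X \leq (1-\eps)\mu} \;\leq\; \eexp{-\mu \cdot \bigl[(1-\eps)\ln(1-\eps) + \eps\bigr]}.
\]

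The only remaining step, and really the main (minor) obstacle, is a deterministic analytic inequality: $(1-\eps)\ln(1-\eps) + \eps \geq \eps^2/3$ for all $0 \leq \eps \leq 1$. I would verify this via the Taylor expansion $\ln(1-\eps) = -\sum_{k \geq 1} \eps^k / k$, which after a short computation gives
\[
(1-\eps)\ln(1-\eps) + \eps \;=\; \sum_{k \geq 2} \frac{\eps^k}{k(k-1)} \;\geq\; \frac{\eps^2}{2} \;\geq\; \frac{\eps^2}{3},
\]
valid on $[0,1]$ since every term of the series is nonnegative there. (In fact this yields the slightly stronger exponent $\eps^2/2$ customary for the lower tail; the paper's choice of $\eps^2/3$ is the weaker constant that also handles the upper tail uniformly.) No step is technically delicate, which is consistent with the statement being cited as standard textbook material from \cite{DubhashiP09}.
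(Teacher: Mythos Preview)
Your argument is correct: the exponential moment method with the choice $t = \ln(1/(1-\eps))$ gives the standard lower-tail Chernoff bound, and your Taylor-series verification of $(1-\eps)\ln(1-\eps) + \eps \geq \eps^2/2 \geq \eps^2/3$ is clean. The paper itself gives no proof of this proposition at all; it is stated with a citation to \cite{DubhashiP09} and used as a black box, so there is nothing to compare against.
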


\subsection{Robust Communication Complexity}\label{sec:cc}

Our lower bounds for random order streaming algorithms are proven through (one-way) robust communication complexity introduced by~\cite{ChakrabartiCM08}. We provide some brief context here and refer
the interested reader to~\cite{ChakrabartiCM08} for more information. 

Let $\mathcal{G}_n$ be any  subset of $n$-vertex graphs  and $f: \mathcal{G}_n \rightarrow \set{0,1}$ be a  function. 
In this model, we pick a graph $G$ from $\mathcal{G}_n$ adversarially, and then each edge of $G$ is assigned uniformly at random to either $G_A$ or $G_B$, given to Alice and Bob, respectively. 
The players have access to a shared public tape of random bits $R$ and follow a protocol $\Pi$. In the protocol $\Pi$, Alice sends a single message $\pi$ based on $(G_A,R)$ to Bob, and Bob needs to output $f(G_A \cup G_B)$ based on
$\pi$, $G_B$, and $R$. We use $\Pi(G_A, G_B, R)$ to denote the output of the protocol on inputs $G_A,G_B$ and random bits $R$.  Protocol $\Pi$ is said to be a $\delta$-error protocol for $0 \leq \delta \leq 1$ for function $f$
if for any graph $G$ in $\mathcal{G}_n$, we have, 
\[
\Pr\paren{\Pi(G_A,G_B,R) = f(G)} \geq 1-\delta,
\]
where the probability is over the random partitioning into $(G_A,G_B)$ and random bits of $R$. 
The communication cost of a protocol $\Pi$ for $f$ is the worst-case length of its message. 

\begin{Definition}\label{def:ccf}
	The \textbf{randomized one-way communication complexity} of a function $f$ with error at most $\delta$, denoted by $\oRR{\delta}(f)$ is the minimum possible communication cost of a protocol for $f$ with error at most $\delta$ for every 
	graph $G$ chosen from $\mathcal{G}_n$. 
	
	The \textbf{distributional one-way communication complexity} of a function $f$ with error at most $\delta$ over a distribution $\mu$ on $\mathcal{G}_n$, denoted by $\oDD{\delta}{\mu}(f)$ is the minimum possible communication cost of a deterministic protocol for $f$ with error at most $\delta$ over graphs $G$ chosen from $\mu$. 
\end{Definition}

\begin{proposition}[Yao's minimax principle~\cite{Yao77}]\label{prop:minmax}
	For any function $f$ and error $0 \leq \delta \leq 1$, 
	\[
	\oRR{\delta}(f) = \max_{\mu} \oDD{\delta}{\mu}(f).
	\]
\end{proposition}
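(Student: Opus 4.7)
The plan is to establish the two inequalities $\oRR{\delta}(f) \geq \max_{\mu}\oDD{\delta}{\mu}(f)$ and $\oRR{\delta}(f) \leq \max_{\mu}\oDD{\delta}{\mu}(f)$ separately. The first is a short averaging argument; the second will be a textbook application of von Neumann's minimax theorem to a finite two-player zero-sum game.

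For the easy direction, take any $\delta$-error randomized protocol $\Pi$ of communication cost $c$. Viewing $\Pi$ as a distribution over deterministic protocols $\Pi_r$ indexed by the public tape $r$, each $\Pi_r$ has cost at most $c$, and the randomized guarantee says that for every graph $G$, the error of $\Pi$ on $G$ (averaged over $r$ and the random partitioning) is at most $\delta$. For any distribution $\mu$ on $\mathcal{G}_n$, taking an additional expectation over $G \sim \mu$ keeps this bound at $\delta$; swapping the order of expectations (Fubini) produces some fixed tape $r^\ast$ for which the deterministic protocol $\Pi_{r^\ast}$ of cost at most $c$ errs with probability at most $\delta$ under $G \sim \mu$. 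Hence $\oDD{\delta}{\mu}(f) \leq c$ for every $\mu$, proving $\max_{\mu}\oDD{\delta}{\mu}(f) \leq \oRR{\delta}(f)$.

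For the hard direction, set $c := \max_{\mu}\oDD{\delta}{\mu}(f)$ and consider the zero-sum game in which the row player picks a deterministic one-way protocol $D$ of communication cost at most $c$, the column player picks an input graph $G \in \mathcal{G}_n$, and the payoff to the column player is $\Pr_{\text{partition}}\bracket{D(G_A, G_B) \neq f(G)}$. The set of distinct deterministic protocols of cost at most $c$ is finite (their behavior is determined by a finite truth table on Alice's inputs plus Bob's decision rule), and $\mathcal{G}_n$ is finite, so von Neumann's minimax theorem applies. Mixed row strategies are precisely randomized protocols with public coins of cost at most $c$, and mixed column strategies are precisely input distributions $\mu$ on $\mathcal{G}_n$. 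By the choice of $c$, the max-min value of the game is at most $\delta$, so by minimax the min-max value is also at most $\delta$, producing a randomized protocol of cost $c$ with worst-case error at most $\delta$. This shows $\oRR{\delta}(f) \leq c$, completing the argument.

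The main delicacy, and essentially the only place to be careful, is bookkeeping the two sources of randomness throughout: the public coins $R$, which the protocol controls and over which the row player randomizes via mixed strategies, versus the random partitioning of edges, which is a fixed stochastic feature of each input $G$ and which sits on the ``input'' side of both $\oRR{\delta}(f)$ and $\oDD{\delta}{\mu}(f)$ on equal footing. Once the random partitioning is absorbed into the input, there is no technical obstacle: the easy direction is averaging, and the hard direction is a standard invocation of the minimax theorem on a finite game, which is why this proposition is attributed directly to~\cite{Yao77}.
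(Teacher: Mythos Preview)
The paper does not actually prove this proposition; it is stated as a classical result and attributed to~\cite{Yao77} without an accompanying proof environment. Your argument is a correct and standard proof of Yao's minimax principle, appropriately adapted to the robust communication setting by folding the random edge-partitioning into the payoff of the zero-sum game, and there is nothing to compare it against in the paper itself.
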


The following standard result relates robust communication complexity to streaming algorithms. 
\begin{proposition}[cf.~\cite{ChakrabartiCM08}]\label{prop:stream-cc}
	Any single-pass streaming algorithm that given edges of a graph $G$ from $\mathcal{G}_n$ in a random-order stream computes $f(G)$ w.p. at least $1-\delta$ requires $\Omega(\oRR{\delta}(f))$ space. 
\end{proposition}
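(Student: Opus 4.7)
The plan is a standard simulation argument: from any random-order streaming algorithm $\ALG$ that uses $s$ bits of space and errs with probability at most $\delta$ on every $G\in\mathcal{G}_n$, I would build a robust one-way protocol $\Pi$ with $s$ bits of communication and the same error guarantee. By \Cref{def:ccf} this gives $\oRR{\delta}(f)\leq s$, which is the claim up to a constant.

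The key probabilistic fact I would establish first is a sampling equivalence: a uniformly random permutation of a set $E$ of $m$ edges has the same distribution as the output of the following two-stage process: (i) independently label each edge ``A'' or ``B'' with probability $1/2$ each, producing a partition $(E_A,E_B)$; (ii) sample uniformly random permutations $\pi_A,\pi_B$ of $E_A$ and $E_B$ respectively; (iii) output the concatenation of $\pi_A$ followed by $\pi_B$. The verification is a short counting check: any fixed permutation $\sigma$ of $E$ arises only when, for some $k$, the first $k$ elements of $\sigma$ are all A-labeled and the remaining $m-k$ are B-labeled, and summing over $k$ yields $\sum_{k=0}^{m}(1/2)^{m}/(k!(m-k)!)=(1/2)^{m}\cdot 2^{m}/m!=1/m!$, matching the uniform distribution on permutations of $E$.

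Given this lemma, the protocol essentially writes itself. Alice, with edges $E_A$ from the robust partition, uses public randomness to sample a uniform random permutation $\pi_A$ of $E_A$, simulates $\ALG$ on $\pi_A$, and sends the resulting $s$-bit memory state to Bob. Bob samples a uniform random permutation $\pi_B$ of $E_B$, resumes the simulation of $\ALG$ from Alice's state on $\pi_B$, and outputs whatever $\ALG$ outputs. The robust model partitions edges exactly as in step (i) of the sampler above, so the stream $\pi_A$ followed by $\pi_B$ that $\ALG$ sees inside $\Pi$ is distributed as a uniformly random ordering of $E=E_A\cup E_B$. Hence $\Pi$ inherits the error bound $\delta$ of $\ALG$ on every $G\in\mathcal{G}_n$ and uses at most $s$ bits of communication.

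I do not expect any serious obstacle here: the main (and only) substantive step is verifying the sampling equivalence; everything else is routine bookkeeping. The internal randomness consumed by $\ALG$ together with the permutations $\pi_A,\pi_B$ can all be drawn from the shared public tape $R$, so no extra communication is needed beyond the single memory-state transfer.
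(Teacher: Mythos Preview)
Your proposal is correct and takes essentially the same approach as the paper: Alice permutes $E_A$, runs the streaming algorithm, sends the memory state, and Bob continues on a random permutation of $E_B$. The paper's proof simply asserts that ``the random partitioning of the input plus the random permutation of the edges leads to a random order stream,'' whereas you supply the explicit counting verification of this sampling equivalence; this extra detail is fine and the argument is sound.
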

\begin{proof}
	Consider the following protocol $\Pi$: Alice randomly permutes the edges of $G_A$, runs the streaming algorithm over them, and sends the memory content as the message $\pi$ to Bob. 
	Bob takes a random permutation of $G_B$, continues running the algorithm, and outputs the answer. The claim follows since the random partitioning of the input plus the random permutation of the edges leads 
	to a random order stream, and size of $\pi$ is bounded by the communication cost of $\Pi$.  
\end{proof}

\subsection{Biases and XORs}\label{par:biasxor}

We use a basic application of the \emph{KKL inequality}~\cite{KahnKL88} from Fourier analysis in our results. This approach has been at the core of various Fourier analytic lower bounds in streaming (see, e.g.~\cite{GavinskyKKRW07,VerbinY11,KapralovKS15}). In this work, we directly apply a technical corollary of this result stated in~\cite{Wol08} without relying on the Fourier analytic parts explicitly. As such, in the following, we only provide the necessary background for 
this corollary.

Let $X$ be any random variable with support $\set{0,1}$. The \emph{bias} of $X$ is defined as: 
\[
\bias{}{X} := \card{\prob{X=1}-\prob{X=0}}. 
\]
Notice that this way, for any $b \in \set{0,1}$, $\Pr\paren{X=b} \leq ({1+\bias{}{X}})/2$. 

Let $A \subseteq \set{0,1}^m$ and $Y$ be a random variable denoting a string chosen uniformly at random from $A$. For any set $S \subseteq [m]$, 
we slightly abuse the notation and write: 
\[
\bias{A}{S} =  \bias{}{\bigoplus_{i \in S}Y_i},
\]
where $\bigoplus$ denotes the XOR function. In words, $\bias{A}{S}$ is the bias of XOR of indices in $S$ of a random string from $A$. 

The following corollary of {KKL inequality}~\cite{KahnKL88} captures the property we need from biases. 
\begin{proposition}[cf.~{\cite[Section 4.2]{Wol08}}]\label{prop:kklparity}
	For any integers $m,k \geq 1$, let $A \subseteq \set{0,1}^m$ be arbitrary and $S$ be chosen uniformly at random from $k$-subsets of $[m]$. Then, 
	\begin{align*}
		\Exp_S\bracket{\bias{A}{S}^2} = O\paren{\frac1m \cdot \log\paren{\frac{2^m}{\card{A}}}}^k.
	\end{align*}
\end{proposition}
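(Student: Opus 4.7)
The plan is to re-cast $\bias{A}{S}$ as a normalized Fourier coefficient of the indicator function of $A$, and then appeal to the standard level-$k$ inequality that KKL and hypercontractivity yield for Boolean functions. Let $f\colon \set{0,1}^m \to \set{0,1}$ be the indicator of $A$, set $p := \card{A}/2^m$, and use the characters $\chi_S(y) = (-1)^{\sum_{i \in S} y_i}$ together with $\hat f(S) := \Exp_y\bracket{f(y)\,\chi_S(y)}$, where $y$ is uniform on $\set{0,1}^m$. A one-line computation shows that if $Y$ is uniform over $A$, then
\[
\Exp_Y\bracket{(-1)^{\bigoplus_{i \in S} Y_i}} \;=\; \frac{1}{\card{A}} \sum_{y \in A} \chi_S(y) \;=\; \frac{1}{p}\,\hat f(S),
\]
and hence $\bias{A}{S}^2 = \hat f(S)^2 / p^2$.

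Averaging over a uniformly random $k$-subset $S$ of $[m]$ then converts the quantity of interest into a statement about the level-$k$ Fourier weight of $f$:
\[
\Exp_S\bracket{\bias{A}{S}^2} \;=\; \frac{1}{p^2\,\binom{m}{k}} \cdot \sum_{\card{S}=k} \hat f(S)^2.
\]
The crux of the argument is to bound this level-$k$ weight. This is exactly where the corollary of the KKL inequality recorded in~\cite[Section 4.2]{Wol08} kicks in: for any Boolean $f$ with mean $p$, one has $\sum_{\card{S}=k} \hat f(S)^2 \le p^2 \cdot \paren{C\,\log(1/p)/k}^k$ for an absolute constant $C>0$. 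Plugging this into the display above and using $\binom{m}{k} \geq (m/k)^k$ collapses the $k$ factors and produces exactly $\paren{O(\log(2^m/\card{A})/m)}^k$, which matches the desired bound.

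Essentially nothing here is delicate; the main obstacle is merely isolating the appropriate level-$k$ form of KKL in the literature, since the $\log(2^m/\card{A})/m$ ratio only appears natural after the normalization by $\binom{m}{k}$. Once that corollary is in hand, the proof is a two-line Fourier calculation followed by a substitution. Concretely, I would cite the corollary from~\cite[Section 4.2]{Wol08} as a black box, verify the identity $\bias{A}{S}^2 = \hat f(S)^2/p^2$, and then carry out the substitution, taking care that the constants absorbed into the $O(\cdot)$ notation are handled correctly.
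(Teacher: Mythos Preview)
The paper does not provide its own proof of this proposition; it is stated as a black-box citation to \cite[Section 4.2]{Wol08} (note the ``cf.'' in the attribution) and is used throughout without further justification. Your derivation---rewriting $\bias{A}{S}$ as $\hat f(S)/p$ for $f=\mathbf{1}_A$ and then invoking the level-$k$ inequality $\sum_{|S|=k}\hat f(S)^2 \le p^2\,(C\ln(1/p)/k)^k$ together with $\binom{m}{k}\ge (m/k)^k$---is exactly the standard argument underlying this corollary, and it is correct. One small caveat worth recording if you flesh this out: the level-$k$ inequality in its usual form carries the hypothesis $k \le 2\ln(1/p)$; outside that range the stated bound either follows from Parseval or is vacuous (exceeds $1$), so the $O(\cdot)$ absorbs this, and in any case the paper only applies the proposition in the regime where the hypothesis is comfortably satisfied.
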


An interpretation of~\Cref{prop:kklparity} is in order. Suppose $\card{A} = 2^{m-t}$ for some (small) $t \geq 1$, thus the RHS above is $O(\frac{t}{m})^k$. 
When $k=1$, the bound above is rather straightforward as we expect the bias-squared of a single coordinate in $Y$ chosen randomly from $A$ to be roughly $O(t/m)$. This is because the entropy
of a random coordinate of $Y$ is at least $1-t/m$, making this bit quite close to uniform. But for $k=2$, already~\Cref{prop:kklparity} becomes quite interesting. It is an easy exercise
to show that XOR of \emph{independent} bits has a dampened bias compared to each bit, i.e., $\bias{}{P \oplus Q} = \bias{}{P} \cdot \bias{}{Q}$ for two independent random variables $P,Q \in \set{0,1}$. 
\Cref{prop:kklparity} extends this property to random indices $(i,j)$ of $Y$ chosen uniformly at random from $A$, even though coordinates $Y_i$ and $Y_j$ are \emph{not} independent anymore! 
Thus,~\Cref{prop:kklparity} is a vast generalization of the bias-dampening property of XOR  to any \emph{arbitrary} set $A$ (of sufficiently large size).

\clearpage

\section{A Robust Lower Bound for Noisy Gap Cycle Counting}

The following theorem is our main technical contribution which formalizes~\Cref{res:ngc}. 

\begin{theorem}\label{thm:main-ngc}
	For sufficiently large $n \geq 1$ and $k \geq 1600 \cdot \ln{n}$,  
	\[
	\oRR{1/3}(\NGC_{n,k}) = \Omega(n). 
	\]
\end{theorem}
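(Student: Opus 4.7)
The plan is to invoke Yao's minimax principle (\Cref{prop:minmax}) to reduce to bounding the distributional complexity $\oDD{1/3}{\distNGC}(\NGC_{n,k})$ for a carefully chosen hard distribution $\distNGC$. The distribution should be highly structured in the spirit of Verbin--Yu: choose a random permutation $\sigma \in \mathcal{S}_{n}$ together with a random labeling of the $n$ vertices into $n/2k$ ``slots'' of $2k$ vertices each (plus a noise component that is identical across the two cases and thus contributes no information), and use $\sigma$ to lay down either $n/4k$ cycles of length $2k$ or $n/2k$ cycles of length $k$ in a way that, conditioned on the edge-incidence structure, the two cases differ only by an XOR-type parity across each slot. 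The random partition $(E_A,E_B)$ is then sampled by flipping an independent fair coin per edge.

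Next I would perform the standard decorrelation/hybrid argument: define $n/2k$ intermediate distributions $\distNGC^{(0)},\ldots,\distNGC^{(n/2k)}$ that interpolate between case $(i)$ and case $(ii)$ by switching one slot at a time from ``long'' to ``short''. Since any $1/3$-error protocol distinguishes the extremes with advantage $\Omega(1)$, a telescoping/averaging argument produces an index $h^\star$ and a deterministic protocol that, on inputs drawn from a single ``local'' hybrid centered at slot $h^\star$, decides whether that slot is short or long with advantage $\Omega(k/n)$. Crucially, because the distribution is built out of independent slots (glued only through $\sigma$), this local problem is essentially to compute the bias of a particular XOR of $k$ vertex-labels belonging to the slot $h^\star$, conditioned on Alice's message and Bob's half of the edges.

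The main obstacle is the random partitioning: the $2k$ edges of the targeted cycle are split randomly across the two players, which threatens to destroy the XOR gadget. Here I exploit the structural form of $\distNGC$. For $k \geq 1600\ln n$, a Chernoff bound (\Cref{prop:chernoff}) guarantees that for almost every slot the partition induces a ``canonical'' pattern in which Alice sees roughly half of the cycle's edges and the connected sub-paths owned by Alice have predictable lengths; call such a slot \emph{surviving}. Conditioning on the partition pattern of all slots other than $h^\star$ (which is independent of the slot's label XOR), we reduce the local advantage of $\Omega(k/n)$ to a bias-of-XOR statement about Alice's message restricted to the coordinates corresponding to Alice's vertex-labels in slot $h^\star$. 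A ``failed'' slot cannot help Bob because, by construction, the non-surviving structure carries no information about the slot's parity.

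Finally, I would apply \Cref{prop:kklparity} to finish. Writing $A \subseteq \{0,1\}^m$ for the set of label-strings consistent with a fixed message of Alice of length $c$, we have $|A| \geq 2^{m-c}$ with $m = \Theta(n)$; the surviving-slot argument shows that the local advantage is upper bounded (up to a polynomial factor) by $\bigl(\Exp_S[\bias{A}{S}^2]\bigr)^{1/2} \leq O(c/m)^{k/2}$ for a uniformly random $k$-subset $S$ (the coordinates of the surviving XOR). Combining with the $\Omega(k/n)$ lower bound from the hybrid step yields
\[
\Paren{\frac{k}{n}}^{2} \;\leq\; O\!\Paren{\frac{c}{n}}^{k},
\]
and taking $k$-th roots, together with $k \geq 1600 \ln n$ so that $(k/n)^{2/k} = n^{-\Theta(1/\ln n)} = \Omega(1)$, forces $c = \Omega(n)$. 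The hardest part, which I expect to consume most of the technical work, is the middle step: designing $\distNGC$ so that the decorrelated hybrid is genuinely captured by an XOR-bias under random partitioning, and arguing that failed slots neither negate the surviving gadgets' hardness nor leak usable information.
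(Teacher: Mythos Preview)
Your high-level plan---Yao's principle, a hybrid/decorrelation step reducing to a single-slot problem with advantage $\Omega(k/n)$, then a KKL-based bias bound---matches the paper, and your final arithmetic is correct in spirit.

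The gap is the middle step, which you correctly flag as the hard part but whose sketched mechanism would not work as stated. Your ``surviving slot'' notion, based on a Chernoff bound for Alice's sub-path lengths, does not produce the structure that \Cref{prop:kklparity} needs: what is actually required is that, after conditioning, Alice holds certain \emph{data bits} whose \emph{addresses} she cannot determine, while Bob knows the addresses but not the bits. The paper engineers this with a very specific layered construction: the graph is a concatenation of $t=\Theta(k)$ blocks, each of the form $\permmatch{\sigma^i}\conc\xormatch{x^i}\conc\permmatch{(\sigma^i)^{-1}}$, so that the answer factors as $z_G(1)=\bigoplus_{i}x^i_{\sigma^i(1)}$. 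An index $j$ in block $i$ is declared \emph{clean} if its six incident edges follow the exact pattern Bob--Alice--Bob across the three matchings (probability $2^{-6}$, independent of $(X,\Sigma)$), and a block is \emph{active} if $\sigma^i(1)$ is clean. The pivotal lemma is that, conditioned on a block being active and on the partition functions, $\sigma^i(1)$ is \emph{uniform over the clean indices of that block}: Alice holds $x^i_j$ for every clean $j$ but, because Bob owns the $\sigma^i$- and $(\sigma^i)^{-1}$-edges at those indices, she cannot tell which one is $\sigma^i(1)$. After fixing all inactive blocks and all non-clean coordinates (their contribution to $z_G(1)$ becomes a known constant), Alice's residual input is $\actt\geq\ln w$ independent uniform strings of length $w/100$, and the target is the XOR of one uniformly random coordinate from each string---exactly the setup for \Cref{prop:kklparity}, with exponent $\actt$ (not $k$), after a short calculation accounting for the fact that the subset $S$ is ``one per block'' rather than a uniform $\actt$-subset.

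Without this block/clean-index machinery, a random edge-partition of a plain cycle does not separate ``bits'' from ``pointers'': Alice typically sees both endpoints of her sub-paths, so she partially knows the addresses, and your claim that failed slots carry no parity information is not justified. The bias bound you wrote (uniform $k$-subset of Alice's labels) therefore has no obvious source in your construction; the paper's contribution is precisely to design $\distNGC$ so that a genuine hidden-XOR instance survives the random partition.
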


To prove \Cref{thm:main-ngc}, we want to give a hard distribution $\distNGC$ such that the deterministic communication complexity of $\NGC_{n,k}$ over distribution $\distNGC$ is $\Omega(n)$. First, we describe the primitive constructs needed to define our hard distribution $\distNGC$. Once we have these constructs, we define $\distNGC$ and give a reduction from $\NGC_{n,k}$ over $\distNGC$ to a new intermediate problem we introduce here, called the distributional hidden XOR ($\DHXOR$) problem. We conclude by showing how a lower bound for the $\DHXOR$ problem gives a lower bound for $\NGC_{n,k}$. The lower bound for $\DHXOR$ itself is proved in the next section.   

\subsection{Building Blocks}

In this section, we define the basic constructs needed to describe our hard distribution for $\NGC_{n,k}$. We start by defining a group-layered graph (see~\Cref{fig:gplayergraph} for an illustration). 

\begin{Definition}[\textbf{Grouped-Layered Graph}]\label{def:layered-graph}
	For integers $\ww,\dd \geq 1$, we define a \textbf{group-layered graph} as any graph $G=(V,E)$ satisfying the following properties: 
	\begin{enumerate}[label=$(\roman*)$]
		\item Vertices of $G$ are partitioned into $\dd$ equal-size \textbf{layers} $V^1,\ldots,V^\dd$, each of size $2\ww$. We  identify each of these vertex-sets 
		by $[2\ww]$. 
		\item The vertices in each layer $V^i$ for $i \in [\dd]$ are partitioned into $\ww$ \textbf{groups} 
		\[
		g^i_1=({a^i_1,b^i_1}), \quad g^i_2 = (a^i_2,b^i_2), ~~ \cdots ~~, \quad g^i_\ww = ({a^i_{\ww},b^i_{\ww}}).
		\] 
		(we always pick the lexicographically-first way of grouping the vertices).  
		\item  The edges of $G$ are $\dd-1$ \emph{perfect matchings} $M^1,\ldots,M^{\dd-1}$ between consecutive layers such that for any $i \in [\dd-1]$ and $j \in [\ww]$, both vertices in group $g^i_j$ are matched to vertices of a single group $g^{i+1}_{j'}$ for $j' \in [w]$ in the next layer. 
	\end{enumerate}
	
	We refer to $\ww$ as the \textbf{width} of the layered graph and $\dd$ as its \textbf{depth}. We use $\LG_{\ww,\dd}$ to denote the set of all layered graphs with width $\ww$ and depth $\dd$. 
	
	Finally, for any $i \in [\dd]$ and $j \in [\ww]$, we define $P_G(j) = j'$ where $j'$ is the index of the \emph{unique} group $g^\dd_{j'}$ in the last layer whose vertices are reachable from those of group $g^1_j$ in the first layer. 
\end{Definition}
\begin{figure}[h!]\label{fig:gplayergraph}
	\centering
	\input{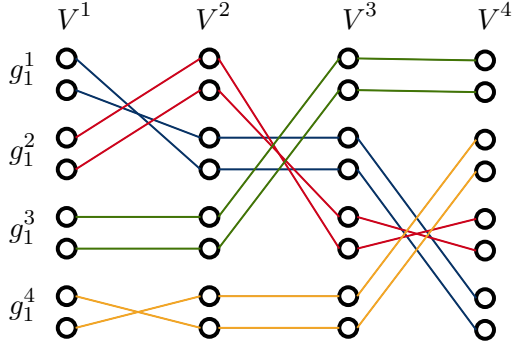}\caption{Illustration of a group layered graph in $\LG_{4, 4}$ from \Cref{def:layered-graph}.}
\end{figure}

We need a simple notation for \emph{concatenating} group-layered graphs. For any $\ww,\dd_1,\dd_2 \geq 1$, let $G_1 \in \LG_{\ww,\dd_1}$ and $G_2 \in \LG_{\ww,\dd_2}$ be two layered graphs
on layers $(V^1,\ldots,V^{\dd_1})$ and $(U^1,\ldots,U^{\dd_2})$. We use $G_1 \conc G_2$ to denote the graph consisting of $\dd_1 + \dd_2-1$ layers $(W^1,\ldots,W^{\dd_1+\dd_2-1})$ such that: 
\[
(W^1,\ldots,W^{\dd_1}) = (V^1,\ldots,V^{\dd_1}) \quad \text{and} \quad (W^{\dd_1},\ldots,W^{\dd_1+\dd_2-1}) = (U^1,\ldots,U^{\dd_2}).
\]
We are setting $W^{\dd_1} = V^{\dd_1} = U^{1}$ (as both sets are identified by $[2\ww]$). The edges of $G_1 \conc G_2$ are the union of edges 
of $G_1$ and $G_2$. This makes $G_1 \conc G_2$ a grouped-layered graph in $\LG_{\ww,\dd_1+\dd_2-1}$. 
We also observe the following about how the edges and functions $P_{G_1}(\cdot)$, $P_{G_2}(\cdot)$ interact in $G = G_1 \conc G_2$. 

\begin{observation}\label{obs:concatenate-2}
	Let $G=G_1 \conc G_2$ for $G_1 \in \LG_{\ww, \dd_1}$, $G_2 \in \LG_{\ww,\dd_2}$; for  $j \in [\ww]$, $P_G(j) = P_{G_2}(P_{G_1}(j))$. 
\end{observation}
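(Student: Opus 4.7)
The plan is to simply unfold the definition of $P_G$ and track how groups are transported layer-by-layer through the concatenation. Fix $j \in [\ww]$ and let $j_1 := P_{G_1}(j)$ and $j_2 := P_{G_2}(j_1)$. I want to show that every vertex reachable from the group $g^{W,1}_j$ in the last layer $W^{\dd_1+\dd_2-1}$ of $G$ lies in $g^{W,\dd_1+\dd_2-1}_{j_2}$, which by the uniqueness clause of \Cref{def:layered-graph} will give $P_G(j) = j_2$ as required.

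First I would observe that in the definition of a group-layered graph, matchings are constrained to map both vertices of a group to a single group in the next layer. Iterating this across the matchings $M^1,\ldots,M^{\dd_1-1}$ of $G_1$, the two vertices of $g^{V,1}_j$ are transported to the two vertices of the \emph{unique} group $g^{V,\dd_1}_{j_1}$ in $V^{\dd_1}$, by the very definition of $P_{G_1}$. The key point to verify is that the grouping at the shared layer $W^{\dd_1} = V^{\dd_1} = U^{1}$ is the same whether we view it as the last layer of $G_1$ or as the first layer of $G_2$; this is immediate because \Cref{def:layered-graph} dictates that groups are always formed as the lexicographically-first pairing of the $2\ww$ vertices (identified with $[2\ww]$), so in particular $g^{V,\dd_1}_{j_1} = g^{U,1}_{j_1}$.

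Now continuing along the matchings $M^1,\ldots,M^{\dd_2-1}$ of $G_2$, the two vertices of $g^{U,1}_{j_1}$ are transported to the two vertices of $g^{U,\dd_2}_{P_{G_2}(j_1)} = g^{U,\dd_2}_{j_2}$, by the definition of $P_{G_2}$. Since the edge set of $G = G_1 \conc G_2$ is exactly the union of the edges of $G_1$ and $G_2$, and the matchings of $G$ are precisely $M^1_{G_1},\ldots,M^{\dd_1-1}_{G_1},M^1_{G_2},\ldots,M^{\dd_2-1}_{G_2}$ concatenated across the identified middle layer, chaining the two transports above shows that the vertices of $g^{W,1}_j$ are reachable exactly in $g^{W,\dd_1+\dd_2-1}_{j_2}$ in the last layer of $G$. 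By uniqueness this yields $P_G(j) = j_2 = P_{G_2}(P_{G_1}(j))$. There is no real obstacle here beyond this bookkeeping; the only subtle point is confirming that the group partitions at the shared layer coincide, which is forced by the lexicographic convention in \Cref{def:layered-graph}.
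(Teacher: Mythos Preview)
Your proof is correct and follows essentially the same approach as the paper's own proof: trace the group $g^1_j$ through $G_1$ to the shared middle layer, then through $G_2$ to the last layer. Your version is more detailed (in particular, you explicitly justify why the group partitions at the shared layer $W^{\dd_1}=V^{\dd_1}=U^1$ coincide via the lexicographic convention), whereas the paper's proof is a single sentence that leaves this bookkeeping implicit.
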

\begin{proof}
$P_{G_1}(j)$ takes the group $g^1_j$ to a group $g^{\dd_1}_{j'}$ in the ``middle'' layer of $G$ (the shared layer of $G_1$ and $G_2$ in $G$), and $P_{G_2}(j')$ then takes $g^{\dd_1}_{j'}$ to $g^{\dd_1+\dd_2-1}_{j''}$, which makes $P_G(j) = P_{G_2}(j') = j''$. 
\end{proof}

We need some more structure on the edges in a group-layered graph in our hard distribution, which motivates the following definitions (see~\Cref{fig:xorpermmatching}). 	

\begin{Definition}[\textbf{XOR-Matching}]\label{def:xor-matching}
	Let $x \in \set{0,1}^\ww$ be a string. We define \textbf{XOR-matching} of $x$ as the grouped-layered graph $G \in \LG_{\ww,2}$ on two layers, denoted by $G=\xormatch{x}$, 
	so that its single perfect matching $M$ is defined as follows: 
	
	\smallskip
	\noindent
	For any $j \in [\ww]$ and groups $g^1_j = ({a^1_j,b^1_j})$ and $g_j^{2} = (a^{2}_j,b^{2}_j)$, if $x_j = 0$, then both edges $(a^1_j,a^{2}_j),(b^1_j,b^{2}_j)$ belong to $M$, 
	and otherwise both edges $(a^1_j,b^{2}_j),(b^1_j,a^{2}_j)$ belong to $M$. 
\end{Definition}


\begin{observation}\label{obs:xor-matching}
	In  $G=\xormatch{x}$, for any $j \in [\ww]$, $P_G(j) = j$. 
\end{observation}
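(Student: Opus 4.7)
The plan is to simply unwind the definitions of \xormatch{x} and $P_G(\cdot)$. Since $G = \xormatch{x}$ has depth $2$, the value $P_G(j)$ is the unique index $j' \in [\ww]$ such that the vertices of $g^1_j$ are matched (by the single matching $M$) to vertices of $g^2_{j'}$. So the task is just to read off, from \Cref{def:xor-matching}, which second-layer group is incident to the edges leaving $g^1_j$.

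First I would split on the value of $x_j$. If $x_j = 0$, then by \Cref{def:xor-matching} the edges $(a^1_j, a^2_j)$ and $(b^1_j, b^2_j)$ belong to $M$, so both endpoints in $g^1_j = (a^1_j, b^1_j)$ are matched to the vertices $a^2_j, b^2_j$ of $g^2_j$. If instead $x_j = 1$, then the edges $(a^1_j, b^2_j)$ and $(b^1_j, a^2_j)$ belong to $M$, and again both endpoints of $g^1_j$ are matched into $g^2_j$ (only with the roles of $a,b$ swapped). In either case, the unique group in layer $2$ whose vertices are reachable from $g^1_j$ is $g^2_j$, so $P_G(j) = j$.

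There is no substantive obstacle here; the statement is essentially a sanity check confirming that the XOR-matching gadget preserves the group labels, and all the work is in making sure the case analysis on $x_j$ correctly matches the definition. The reason to record it as an observation is that it will be combined later (via \Cref{obs:concatenate-2}) with the concatenation operation, where the corresponding analogue for \textsf{Perm-Matching} will instead apply a nontrivial permutation to the group labels; stating this base case cleanly makes those later compositions transparent.
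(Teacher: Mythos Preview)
Your proposal is correct and matches the paper's own proof essentially line for line: the paper simply observes that, irrespective of $x_j$, the vertices of $g^1_j$ are connected only to vertices of $g^2_j$, hence $P_G(j)=j$. Your version is just a slightly more explicit spelling-out of the two cases $x_j=0$ and $x_j=1$ that the paper compresses into the phrase ``irrespective of $x_j$.''
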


\begin{proof}
	For any group $g^1_j \in V^1$, irrespective of $x_j$,  the vertices $a^1_j, b^1_j$ are connected by edges only to vertices of the group $g^2_j \in V^2$. Thus, $P_G(j) = j$ for each $j \in [\ww]$. 
\end{proof}

\begin{Definition}[\textbf{Perm-Matching}]\label{def:perm-matching}
	Let $\sigma \in \mathcal{S}_\ww$ be a permutation. We define \textbf{perm-matching} of $\sigma$ as the grouped-layered graph $G \in \LG_{\ww,2}$ on two layers, denoted by $G=\permmatch{\sigma}$, 
	so that its single perfect matching $M$ is defined as follows: 
	
	\smallskip
	\noindent
	For any $j \in [\ww]$ and group $g^1_j = ({a^1_j,b^1_j})$, the edges $(a^1_j,a^{2}_{\sigma(j)})$ and $(b^1_j,b^{2}_{\sigma(j)})$ belong to $M$. 
\end{Definition}

\begin{observation}\label{obs:perm-matching}
	In  $G=\permmatch{x}$, for any $j \in [\ww]$, $P_G(j) = \sigma(j)$. 
\end{observation}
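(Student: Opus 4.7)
The plan is to prove \Cref{obs:perm-matching} by directly unpacking the definitions, since no real content beyond the construction of $\permmatch{\sigma}$ is needed. Recall that $G = \permmatch{\sigma} \in \LG_{\ww,2}$ is a depth-$2$ group-layered graph, so by \Cref{def:layered-graph}, $P_G(j)$ is the unique index $j'$ such that the vertices of $g^1_j$ are connected to those of $g^2_{j'}$ via the single perfect matching $M$ of $G$.

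First I would fix an arbitrary $j \in [\ww]$ and examine the edges of $M$ incident to $g^1_j = (a^1_j, b^1_j)$. By \Cref{def:perm-matching}, the only edges of $M$ leaving $a^1_j$ and $b^1_j$ are $(a^1_j, a^2_{\sigma(j)})$ and $(b^1_j, b^2_{\sigma(j)})$. Both of these endpoints in $V^2$ lie in the group $g^2_{\sigma(j)} = (a^2_{\sigma(j)}, b^2_{\sigma(j)})$, so the vertices of $g^1_j$ reach precisely the vertices of $g^2_{\sigma(j)}$. By the uniqueness clause in \Cref{def:layered-graph} (which is automatically satisfied here since $M$ is a perfect matching respecting the group structure), this forces $P_G(j) = \sigma(j)$, completing the proof. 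There is no real obstacle here: the statement is essentially a restatement of \Cref{def:perm-matching} through the lens of the function $P_G(\cdot)$, and the argument is completely parallel to the one given for \Cref{obs:xor-matching}.
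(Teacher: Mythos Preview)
Your proposal is correct and takes essentially the same approach as the paper: both simply observe that by \Cref{def:perm-matching} the edges out of $g^1_j$ go to $g^2_{\sigma(j)}$, hence $P_G(j)=\sigma(j)$. Your version is just a slightly more verbose unpacking of the same one-line argument.
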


\begin{proof}
	The vertices in group $g^1_j$ for any $j \in [\ww]$ are mapped by edges to vertices in group $g^2_{\sigma(j)}$ in $\permmatch{\sigma}$, making $P_G(j) = \sigma(j)$. 
\end{proof}


\begin{figure}[h!]
\centering
  \subcaptionbox{$\xormatch{x}$ for  $x = 1001$. \label{fig:xormatching}}%
  [.45\linewidth]{
	\centering
	\tikzset{every picture/.style={line width=0.75pt}} 

\begin{tikzpicture}[x=0.75pt,y=0.75pt,yscale=-1,xscale=1]
	
	\draw  [line width=1.5]  (110,35.5) .. controls (110,33.01) and (112.01,31) .. (114.5,31) .. controls (116.99,31) and (119,33.01) .. (119,35.5) .. controls (119,37.99) and (116.99,40) .. (114.5,40) .. controls (112.01,40) and (110,37.99) .. (110,35.5) -- cycle ;
	\draw  [line width=1.5]  (110,51.5) .. controls (110,49.01) and (112.01,47) .. (114.5,47) .. controls (116.99,47) and (119,49.01) .. (119,51.5) .. controls (119,53.99) and (116.99,56) .. (114.5,56) .. controls (112.01,56) and (110,53.99) .. (110,51.5) -- cycle ;
	\draw  [line width=1.5]  (110,75.5) .. controls (110,73.01) and (112.01,71) .. (114.5,71) .. controls (116.99,71) and (119,73.01) .. (119,75.5) .. controls (119,77.99) and (116.99,80) .. (114.5,80) .. controls (112.01,80) and (110,77.99) .. (110,75.5) -- cycle ;
	\draw  [line width=1.5]  (110,91.5) .. controls (110,89.01) and (112.01,87) .. (114.5,87) .. controls (116.99,87) and (119,89.01) .. (119,91.5) .. controls (119,93.99) and (116.99,96) .. (114.5,96) .. controls (112.01,96) and (110,93.99) .. (110,91.5) -- cycle ;
	\draw  [line width=1.5]  (110,115.5) .. controls (110,113.01) and (112.01,111) .. (114.5,111) .. controls (116.99,111) and (119,113.01) .. (119,115.5) .. controls (119,117.99) and (116.99,120) .. (114.5,120) .. controls (112.01,120) and (110,117.99) .. (110,115.5) -- cycle ;
	\draw  [line width=1.5]  (110,131.5) .. controls (110,129.01) and (112.01,127) .. (114.5,127) .. controls (116.99,127) and (119,129.01) .. (119,131.5) .. controls (119,133.99) and (116.99,136) .. (114.5,136) .. controls (112.01,136) and (110,133.99) .. (110,131.5) -- cycle ;
	\draw  [line width=1.5]  (110,155.5) .. controls (110,153.01) and (112.01,151) .. (114.5,151) .. controls (116.99,151) and (119,153.01) .. (119,155.5) .. controls (119,157.99) and (116.99,160) .. (114.5,160) .. controls (112.01,160) and (110,157.99) .. (110,155.5) -- cycle ;
	\draw  [line width=1.5]  (110,171.5) .. controls (110,169.01) and (112.01,167) .. (114.5,167) .. controls (116.99,167) and (119,169.01) .. (119,171.5) .. controls (119,173.99) and (116.99,176) .. (114.5,176) .. controls (112.01,176) and (110,173.99) .. (110,171.5) -- cycle ;
	\draw  [line width=1.5]  (182,35.5) .. controls (182,33.01) and (184.01,31) .. (186.5,31) .. controls (188.99,31) and (191,33.01) .. (191,35.5) .. controls (191,37.99) and (188.99,40) .. (186.5,40) .. controls (184.01,40) and (182,37.99) .. (182,35.5) -- cycle ;
	\draw  [line width=1.5]  (182,51.5) .. controls (182,49.01) and (184.01,47) .. (186.5,47) .. controls (188.99,47) and (191,49.01) .. (191,51.5) .. controls (191,53.99) and (188.99,56) .. (186.5,56) .. controls (184.01,56) and (182,53.99) .. (182,51.5) -- cycle ;
	\draw  [line width=1.5]  (182,75.5) .. controls (182,73.01) and (184.01,71) .. (186.5,71) .. controls (188.99,71) and (191,73.01) .. (191,75.5) .. controls (191,77.99) and (188.99,80) .. (186.5,80) .. controls (184.01,80) and (182,77.99) .. (182,75.5) -- cycle ;
	\draw  [line width=1.5]  (182,91.5) .. controls (182,89.01) and (184.01,87) .. (186.5,87) .. controls (188.99,87) and (191,89.01) .. (191,91.5) .. controls (191,93.99) and (188.99,96) .. (186.5,96) .. controls (184.01,96) and (182,93.99) .. (182,91.5) -- cycle ;
	\draw  [line width=1.5]  (182,115.5) .. controls (182,113.01) and (184.01,111) .. (186.5,111) .. controls (188.99,111) and (191,113.01) .. (191,115.5) .. controls (191,117.99) and (188.99,120) .. (186.5,120) .. controls (184.01,120) and (182,117.99) .. (182,115.5) -- cycle ;
	\draw  [line width=1.5]  (182,131.5) .. controls (182,129.01) and (184.01,127) .. (186.5,127) .. controls (188.99,127) and (191,129.01) .. (191,131.5) .. controls (191,133.99) and (188.99,136) .. (186.5,136) .. controls (184.01,136) and (182,133.99) .. (182,131.5) -- cycle ;
	\draw  [line width=1.5]  (182,155.5) .. controls (182,153.01) and (184.01,151) .. (186.5,151) .. controls (188.99,151) and (191,153.01) .. (191,155.5) .. controls (191,157.99) and (188.99,160) .. (186.5,160) .. controls (184.01,160) and (182,157.99) .. (182,155.5) -- cycle ;
	\draw  [line width=1.5]  (182,171.5) .. controls (182,169.01) and (184.01,167) .. (186.5,167) .. controls (188.99,167) and (191,169.01) .. (191,171.5) .. controls (191,173.99) and (188.99,176) .. (186.5,176) .. controls (184.01,176) and (182,173.99) .. (182,171.5) -- cycle ;
	\draw [color={rgb, 255:red, 65; green, 117; blue, 5 }  ,draw opacity=1 ][fill={rgb, 255:red, 0; green, 0; blue, 0 }  ,fill opacity=1 ][line width=0.75]    (119,35.5) -- (182,51.5) ;
	\draw [color={rgb, 255:red, 65; green, 117; blue, 5 }  ,draw opacity=1 ][fill={rgb, 255:red, 0; green, 0; blue, 0 }  ,fill opacity=1 ][line width=0.75]    (119,51.5) -- (182,35.5) ;
	\draw [color={rgb, 255:red, 65; green, 117; blue, 5 }  ,draw opacity=1 ][fill={rgb, 255:red, 0; green, 0; blue, 0 }  ,fill opacity=1 ][line width=0.75]    (119,91.5) -- (182,91.5) ;
	\draw [color={rgb, 255:red, 65; green, 117; blue, 5 }  ,draw opacity=1 ][fill={rgb, 255:red, 0; green, 0; blue, 0 }  ,fill opacity=1 ][line width=0.75]    (119,75.5) -- (182,75.5) ;
	\draw [color={rgb, 255:red, 65; green, 117; blue, 5 }  ,draw opacity=1 ][fill={rgb, 255:red, 0; green, 0; blue, 0 }  ,fill opacity=1 ][line width=0.75]    (119,115.5) -- (182,115.5) ;
	\draw [color={rgb, 255:red, 65; green, 117; blue, 5 }  ,draw opacity=1 ][fill={rgb, 255:red, 0; green, 0; blue, 0 }  ,fill opacity=1 ][line width=0.75]    (119,131.5) -- (182,131.5) ;
	\draw [color={rgb, 255:red, 65; green, 117; blue, 5 }  ,draw opacity=1 ][fill={rgb, 255:red, 0; green, 0; blue, 0 }  ,fill opacity=1 ][line width=0.75]    (119,155.5) -- (182,171.5) ;
	\draw [color={rgb, 255:red, 65; green, 117; blue, 5 }  ,draw opacity=1 ][fill={rgb, 255:red, 0; green, 0; blue, 0 }  ,fill opacity=1 ][line width=0.75]    (119,171.5) -- (182,155.5) ;

\end{tikzpicture}
	}
  \subcaptionbox{$\permmatch{\sigma}$ for  $\sigma= (3,1,2,4)$. \label{fig:permmatching}}%
  [.45\linewidth]{
	\centering
	\tikzset{every picture/.style={line width=0.75pt}} 

\begin{tikzpicture}[x=0.75pt,y=0.75pt,yscale=-1,xscale=1]
	
	\draw  [line width=1.5]  (110,35.5) .. controls (110,33.01) and (112.01,31) .. (114.5,31) .. controls (116.99,31) and (119,33.01) .. (119,35.5) .. controls (119,37.99) and (116.99,40) .. (114.5,40) .. controls (112.01,40) and (110,37.99) .. (110,35.5) -- cycle ;
	\draw  [line width=1.5]  (110,51.5) .. controls (110,49.01) and (112.01,47) .. (114.5,47) .. controls (116.99,47) and (119,49.01) .. (119,51.5) .. controls (119,53.99) and (116.99,56) .. (114.5,56) .. controls (112.01,56) and (110,53.99) .. (110,51.5) -- cycle ;
	\draw  [line width=1.5]  (110,75.5) .. controls (110,73.01) and (112.01,71) .. (114.5,71) .. controls (116.99,71) and (119,73.01) .. (119,75.5) .. controls (119,77.99) and (116.99,80) .. (114.5,80) .. controls (112.01,80) and (110,77.99) .. (110,75.5) -- cycle ;
	\draw  [line width=1.5]  (110,91.5) .. controls (110,89.01) and (112.01,87) .. (114.5,87) .. controls (116.99,87) and (119,89.01) .. (119,91.5) .. controls (119,93.99) and (116.99,96) .. (114.5,96) .. controls (112.01,96) and (110,93.99) .. (110,91.5) -- cycle ;
	\draw  [line width=1.5]  (110,115.5) .. controls (110,113.01) and (112.01,111) .. (114.5,111) .. controls (116.99,111) and (119,113.01) .. (119,115.5) .. controls (119,117.99) and (116.99,120) .. (114.5,120) .. controls (112.01,120) and (110,117.99) .. (110,115.5) -- cycle ;
	\draw  [line width=1.5]  (110,131.5) .. controls (110,129.01) and (112.01,127) .. (114.5,127) .. controls (116.99,127) and (119,129.01) .. (119,131.5) .. controls (119,133.99) and (116.99,136) .. (114.5,136) .. controls (112.01,136) and (110,133.99) .. (110,131.5) -- cycle ;
	\draw  [line width=1.5]  (110,155.5) .. controls (110,153.01) and (112.01,151) .. (114.5,151) .. controls (116.99,151) and (119,153.01) .. (119,155.5) .. controls (119,157.99) and (116.99,160) .. (114.5,160) .. controls (112.01,160) and (110,157.99) .. (110,155.5) -- cycle ;
	\draw  [line width=1.5]  (110,171.5) .. controls (110,169.01) and (112.01,167) .. (114.5,167) .. controls (116.99,167) and (119,169.01) .. (119,171.5) .. controls (119,173.99) and (116.99,176) .. (114.5,176) .. controls (112.01,176) and (110,173.99) .. (110,171.5) -- cycle ;
	\draw  [line width=1.5]  (182,35.5) .. controls (182,33.01) and (184.01,31) .. (186.5,31) .. controls (188.99,31) and (191,33.01) .. (191,35.5) .. controls (191,37.99) and (188.99,40) .. (186.5,40) .. controls (184.01,40) and (182,37.99) .. (182,35.5) -- cycle ;
	\draw  [line width=1.5]  (182,51.5) .. controls (182,49.01) and (184.01,47) .. (186.5,47) .. controls (188.99,47) and (191,49.01) .. (191,51.5) .. controls (191,53.99) and (188.99,56) .. (186.5,56) .. controls (184.01,56) and (182,53.99) .. (182,51.5) -- cycle ;
	\draw  [line width=1.5]  (182,75.5) .. controls (182,73.01) and (184.01,71) .. (186.5,71) .. controls (188.99,71) and (191,73.01) .. (191,75.5) .. controls (191,77.99) and (188.99,80) .. (186.5,80) .. controls (184.01,80) and (182,77.99) .. (182,75.5) -- cycle ;
	\draw  [line width=1.5]  (182,91.5) .. controls (182,89.01) and (184.01,87) .. (186.5,87) .. controls (188.99,87) and (191,89.01) .. (191,91.5) .. controls (191,93.99) and (188.99,96) .. (186.5,96) .. controls (184.01,96) and (182,93.99) .. (182,91.5) -- cycle ;
	\draw  [line width=1.5]  (182,115.5) .. controls (182,113.01) and (184.01,111) .. (186.5,111) .. controls (188.99,111) and (191,113.01) .. (191,115.5) .. controls (191,117.99) and (188.99,120) .. (186.5,120) .. controls (184.01,120) and (182,117.99) .. (182,115.5) -- cycle ;
	\draw  [line width=1.5]  (182,131.5) .. controls (182,129.01) and (184.01,127) .. (186.5,127) .. controls (188.99,127) and (191,129.01) .. (191,131.5) .. controls (191,133.99) and (188.99,136) .. (186.5,136) .. controls (184.01,136) and (182,133.99) .. (182,131.5) -- cycle ;
	\draw  [line width=1.5]  (182,155.5) .. controls (182,153.01) and (184.01,151) .. (186.5,151) .. controls (188.99,151) and (191,153.01) .. (191,155.5) .. controls (191,157.99) and (188.99,160) .. (186.5,160) .. controls (184.01,160) and (182,157.99) .. (182,155.5) -- cycle ;
	\draw  [line width=1.5]  (182,171.5) .. controls (182,169.01) and (184.01,167) .. (186.5,167) .. controls (188.99,167) and (191,169.01) .. (191,171.5) .. controls (191,173.99) and (188.99,176) .. (186.5,176) .. controls (184.01,176) and (182,173.99) .. (182,171.5) -- cycle ;
	\draw [color={rgb, 255:red, 208; green, 2; blue, 27 }  ,draw opacity=1 ][fill={rgb, 255:red, 0; green, 0; blue, 0 }  ,fill opacity=1 ][line width=0.75]    (119,35.5) -- (182,115.5) ;
	\draw [color={rgb, 255:red, 208; green, 2; blue, 27 }  ,draw opacity=1 ][fill={rgb, 255:red, 0; green, 0; blue, 0 }  ,fill opacity=1 ][line width=0.75]    (119,51.5) -- (182,131.5) ;
	\draw [color={rgb, 255:red, 208; green, 2; blue, 27 }  ,draw opacity=1 ][fill={rgb, 255:red, 0; green, 0; blue, 0 }  ,fill opacity=1 ][line width=0.75]    (119,91.5) -- (182,51.5) ;
	\draw [color={rgb, 255:red, 208; green, 2; blue, 27 }  ,draw opacity=1 ][fill={rgb, 255:red, 0; green, 0; blue, 0 }  ,fill opacity=1 ][line width=0.75]    (119,75.5) -- (182,35.5) ;
	\draw [color={rgb, 255:red, 208; green, 2; blue, 27 }  ,draw opacity=1 ][fill={rgb, 255:red, 0; green, 0; blue, 0 }  ,fill opacity=1 ][line width=0.75]    (119,115.5) -- (182,75.5) ;
	\draw [color={rgb, 255:red, 208; green, 2; blue, 27 }  ,draw opacity=1 ][fill={rgb, 255:red, 0; green, 0; blue, 0 }  ,fill opacity=1 ][line width=0.75]    (119,131.5) -- (182,91.5) ;
	\draw [color={rgb, 255:red, 208; green, 2; blue, 27 }  ,draw opacity=1 ][fill={rgb, 255:red, 0; green, 0; blue, 0 }  ,fill opacity=1 ][line width=0.75]    (119,155.5) -- (182,155.5) ;
	\draw [color={rgb, 255:red, 208; green, 2; blue, 27 }  ,draw opacity=1 ][fill={rgb, 255:red, 0; green, 0; blue, 0 }  ,fill opacity=1 ][line width=0.75]    (119,171.5) -- (182,171.5) ;

\end{tikzpicture}
	}
  \caption{An illustration of XOR-matchings and perm-matchings in $\LG_{4,2}$ defined in~\Cref{def:xor-matching,def:perm-matching}.}\label{fig:xorpermmatching}
\end{figure}
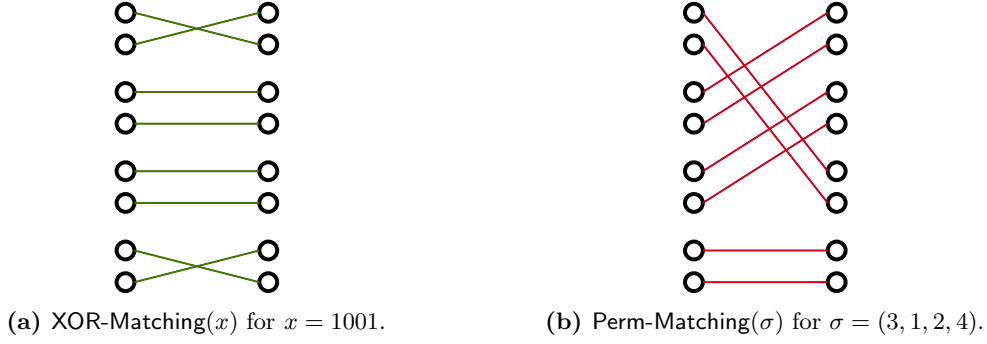

The basic structure in our hard distribution is a block which is described next. It is constructed using  \Cref{def:xor-matching} and \Cref{def:perm-matching}, and is used extensively throughout the rest of the paper. 

\begin{Definition}[\textbf{Block}]\label{def:block}
	Let $x \in \set{0,1}^\ww$, $\sigma \in \mathcal{S}_\ww$, and 
	\[
	G_1 = \permmatch{\sigma}, \quad G_2 = \xormatch{x}, \quad G_3 = \permmatch{\invsigma}.
	\]
	We define \textbf{block} $B := \block{x,\sigma} \in \LG_{\ww,4}$ as $B =  G_1 \conc G_2 \conc G_3$. 
\end{Definition}

\begin{figure}[h!]\label{fig:block}
	\centering
	\input{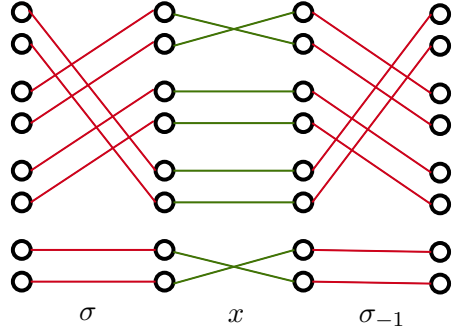}\caption{Illustration of $\block{x, \sigma}$ for $x = 1001$ and $\sigma= (3,1,2,4)$ from \Cref{def:block}.}
\end{figure}

A few useful properties obeyed by paths in blocks are stated now. 

\begin{observation}\label{obs:block}
	In any $B = \block{x,\sigma} \in \LG_{\ww,4}$, we have, 
	\begin{enumerate}[label=$(\roman*)$]
		\item For  $j \in [\ww]$,  $P_B(j) = j$; 
		\item For  $j \in [\ww]$, if $x_{\sigma(j)}=0$, then $a^1_j \rightsquigarrow a^4_j$ and $b^1_j \rightsquigarrow b^4_j$; else,  $a^1_j \rightsquigarrow b^4_j$ and $b^1_j \rightsquigarrow a^4_j$.
	\end{enumerate}
\end{observation}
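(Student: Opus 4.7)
The plan is to prove both parts by directly tracing paths through the three matchings that make up the block, using the observations already established for perm-matchings, XOR-matchings, and concatenation.

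For part $(i)$, I would invoke \Cref{obs:concatenate-2} iteratively. Writing $B = G_1 \conc G_2 \conc G_3$ where $G_1 = \permmatch{\sigma}$, $G_2 = \xormatch{x}$, and $G_3 = \permmatch{\invsigma}$, applying \Cref{obs:concatenate-2} twice gives
\[
P_B(j) = P_{G_3}\bigl(P_{G_2}(P_{G_1}(j))\bigr).
\]
By \Cref{obs:perm-matching} we have $P_{G_1}(j) = \sigma(j)$; by \Cref{obs:xor-matching} we have $P_{G_2}(\sigma(j)) = \sigma(j)$; and by \Cref{obs:perm-matching} applied to $\invsigma$ we have $P_{G_3}(\sigma(j)) = \invsigma(\sigma(j)) = j$. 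Chaining these yields $P_B(j) = j$, as required.

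For part $(ii)$, I would trace the specific vertex endpoints of each matching edge along the path from layer $1$ to layer $4$. The perm-matching $G_1$ sends $a^1_j \to a^2_{\sigma(j)}$ and $b^1_j \to b^2_{\sigma(j)}$ by \Cref{def:perm-matching}. Next, the XOR-matching $G_2$ acts on the group $g^2_{\sigma(j)}$ according to the bit $x_{\sigma(j)}$ by \Cref{def:xor-matching}: if $x_{\sigma(j)} = 0$ then $a^2_{\sigma(j)} \to a^3_{\sigma(j)}$ and $b^2_{\sigma(j)} \to b^3_{\sigma(j)}$, while if $x_{\sigma(j)} = 1$ then $a^2_{\sigma(j)} \to b^3_{\sigma(j)}$ and $b^2_{\sigma(j)} \to a^3_{\sigma(j)}$. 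Finally, the perm-matching $G_3 = \permmatch{\invsigma}$ sends $a^3_{\sigma(j)} \to a^4_{\invsigma(\sigma(j))} = a^4_j$ and $b^3_{\sigma(j)} \to b^4_j$.

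Combining the three steps, in the case $x_{\sigma(j)} = 0$ we get the chains $a^1_j \rightsquigarrow a^2_{\sigma(j)} \rightsquigarrow a^3_{\sigma(j)} \rightsquigarrow a^4_j$ and $b^1_j \rightsquigarrow b^4_j$, while in the case $x_{\sigma(j)} = 1$ the XOR-matching crosses the two endpoints inside the group, producing $a^1_j \rightsquigarrow b^4_j$ and $b^1_j \rightsquigarrow a^4_j$. This is exactly the claim in $(ii)$. Neither part presents any real obstacle; the observation is really just a bookkeeping consequence of the fact that sandwiching an XOR-matching between a permutation and its inverse returns each group to its original index while faithfully recording the XOR bit via the a/b swap.
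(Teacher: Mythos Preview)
Your proposal is correct and follows essentially the same approach as the paper: both parts are handled by applying \Cref{obs:concatenate-2} together with \Cref{obs:perm-matching} and \Cref{obs:xor-matching} for part $(i)$, and by explicitly tracing the edges $a^1_j \to a^2_{\sigma(j)} \to \cdots \to a^4_j$ (or the crossed version) through the three matchings for part $(ii)$.
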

\begin{proof}
	We have, 
	\[
		P_B(j) = P_{G_3}(P_{G_2}(P_{G_1}(j))) = \invsigma(\sigma_I ( \sigma(j))) = j, 
	\]
	where $\sigma_I$ is the identity permutation, the first equality is by~\Cref{obs:concatenate-2}, and the second one is by applying~\Cref{obs:perm-matching},~\Cref{obs:xor-matching}, and~\Cref{obs:perm-matching} again.  
	This proves part $(i)$. 

	If $x_{\sigma(j)} = 0$, $(a^1_j \rightarrow a^2_{\sigma(j)} \rightarrow a^3_{\sigma(j)} \rightarrow a^4_{j})$ is a path from $a^1_j$ to $a^4_j$ , and $(b^1_j \rightarrow b^2_{\sigma(j)} \rightarrow b^3_{\sigma(j)} \rightarrow b^4_{j})$ is a path from $b^1_j$ to $b^4_j$. If $x_{\sigma(j)} = 1$, $(a^1_j \rightarrow a^2_{\sigma(j)} \rightarrow b^3_{\sigma(j)} \rightarrow b^4_{j})$ is a path from $a^1_j$ to $b^4_j$, and $(b^1_j \rightarrow b^2_{\sigma(j)} \rightarrow a^3_{\sigma(j)} \rightarrow a^4_{j})$ is a path from $b^1_j$ to $a^4_j$, proving part $(ii)$. 
\end{proof}

We can concatenate multiple blocks to get larger group-layered graphs as follows. 

\begin{Definition}[\textbf{Multi-Block Graph}]\label{def:multi-block}
	Let $\ww, t \geq 1$, $X = ({x^1,\ldots,x^t}) \in (\set{0,1}^{\ww})^{t}$, and $\Sigma = (\sigma^1,\ldots,\sigma^{t}) \in (\mathcal{S}_\ww)^{t}$. 
	We define \textbf{multi-block graph} $G := \multiblock{X,\Sigma}$ as the group-layered graph $G \in \LG_{\ww,3t+1}$ obtained as $G:= B_1 \conc B_2 \conc \ldots \conc B_t$, 
	where $B_i = \block{x^i,\sigma^i}$. 
	
	For any $j \in [\ww]$, we define $z_G(j) :=  \bigoplus_{i \in [t]} x^i_{\sigma^i(j)}$. 
\end{Definition}

\begin{figure}[h!]
	\centering
	\input{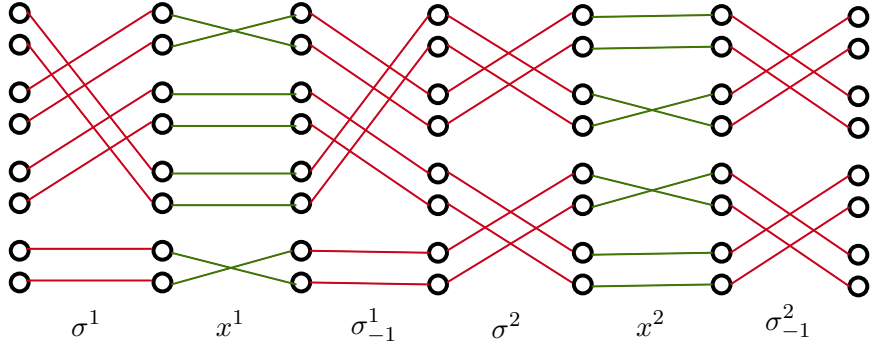}
	\caption{An illustration of $\multiblock{X, \Sigma}$ for $X = (1001, 0110)$,  and $\Sigma = \paren{(3,1,2,4), (2,1,4,3)}$.}\label{fig:multi-block}
\end{figure}

\Cref{obs:block} on blocks can be extended to multi-block graphs as well. 

\begin{observation}\label{obs:multi-block}
	In any $G = \multiblock{X,\Sigma} \in \LG_{\ww,\dd}$, we have, 
	\begin{enumerate}[label=$(\roman*)$]
		\item For $j \in [\ww]$,  $P_G(j) = j$;
		\item For  $j \in [\ww]$, if $z_G(j)=0$, then $a^1_j \rightsquigarrow a^{\dd}_j$ and $b^1_j \rightsquigarrow b^{\dd}_j$; else,  $a^1_j \rightsquigarrow b^{\dd}_j$ and $b^1_j \rightsquigarrow a^{\dd}_j$.
	\end{enumerate}
\end{observation}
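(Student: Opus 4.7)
The plan is to prove both parts simultaneously by induction on $t$, the number of blocks, using \Cref{obs:block} as the base case and \Cref{obs:concatenate-2} to splice blocks together. Recall that by \Cref{def:multi-block}, $G = B_1 \conc B_2 \conc \cdots \conc B_t$ with $B_i = \block{x^i,\sigma^i} \in \LG_{\ww,4}$, so $G \in \LG_{\ww, 3t+1}$. The base case $t=1$ is precisely \Cref{obs:block} after noting that $z_G(j) = x^1_{\sigma^1(j)}$.

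For the inductive step, I would write $G = G' \conc B_t$ where $G' = \multiblock{(x^1,\ldots,x^{t-1}),(\sigma^1,\ldots,\sigma^{t-1})} \in \LG_{\ww, 3(t-1)+1}$, and invoke the inductive hypothesis on $G'$. For part $(i)$, the inductive hypothesis gives $P_{G'}(j) = j$, and \Cref{obs:block} gives $P_{B_t}(j) = j$; combining these via \Cref{obs:concatenate-2} yields
\[
P_G(j) = P_{B_t}(P_{G'}(j)) = P_{B_t}(j) = j,
\]
as desired.

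For part $(ii)$, the key observation is that the connectivity pattern of each block is an ``XOR gadget'': by \Cref{obs:block}, traversing $B_i$ (viewed as a map from its first layer to its last layer, with groups aligned since $P_{B_i}(j)=j$) either preserves or swaps the $a$/$b$ labels within group $j$, depending on whether $x^i_{\sigma^i(j)}$ is $0$ or $1$. Since the last layer of $G'$ coincides with the first layer of $B_t$ in the concatenation, and the groups are aligned throughout (from part $(i)$ applied to both $G'$ and $B_t$), I would argue that a path from $a^1_j$ in the first layer of $G$ to the last layer of $G$ decomposes into a path inside $G'$ followed by a path inside $B_t$. By the inductive hypothesis, the first piece lands on $a^{3(t-1)+1}_j$ if $\bigoplus_{i=1}^{t-1} x^i_{\sigma^i(j)} = 0$ and on $b^{3(t-1)+1}_j$ otherwise; the second piece, by \Cref{obs:block}, applies an additional swap if $x^t_{\sigma^t(j)} = 1$. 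Composing these two swaps gives a net swap exactly when $\bigoplus_{i=1}^{t} x^i_{\sigma^i(j)} = 1$, i.e., when $z_G(j) = 1$, which is what we want; the analogous statement for paths starting at $b^1_j$ follows symmetrically.

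The only real subtlety is keeping the bookkeeping clean at the shared layer in the concatenation (i.e., making sure the $a$/$b$ labeling of vertices in that layer is consistent when viewed from $G'$ versus from $B_t$), but this is automatic from \Cref{def:layered-graph}, which fixes the lexicographic grouping of each layer once and for all. No other step should pose real difficulty.
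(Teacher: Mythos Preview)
Your proposal is correct and takes essentially the same approach as the paper: both proceed by induction on the number of blocks, using \Cref{obs:block} for the base case and \Cref{obs:concatenate-2} to splice on one more block in the inductive step, then tracking the $a/b$ swap via the XOR of the bits. The paper writes the induction in terms of prefixes $G_i = B_1 \conc \cdots \conc B_i$ and spells out the four cases in part $(ii)$ explicitly, but the argument is identical to yours.
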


\begin{proof}
	Let $G_i = B_1 \conc B_2 \ldots  \conc B_i$ be used to denote the concatenation of the first $i$ blocks for $i \in [t]$ with $G = G_t$. Each $G_i \in\LG_{\ww,3i+1}$ for $i \geq 2$ and $G_1 \in \LG_{\ww, 4}$.  These statements can be proven by a straightforward induction on $i$.
	
	To prove part $(i)$, we know $P_{B_1}(j) = j$ from \Cref{obs:block}. Let us assume $P_{G_{i-1}}(j) = j$. 
	\Cref{obs:concatenate-2} gives $P_{G_i}(j) = P_{B_i}(P_{G_{i-1}}(j)) = P_{B_i}(j) = j$ for each $i \in [t]$. 
	
	We now prove part $(ii)$. It is easy to see that $z_{G_i}(j) = z_{G_{i-1}}(j) \oplus x^i_{\sigma_i(j)}$ and $z_{B_1}(j) = x^1_{\sigma_1(j)}$. From \Cref{obs:block}, $a^1_j \rightsquigarrow a^4_j$ and $b^1_j \rightsquigarrow b^4_j$ if $z_{G_1}(\sigma(j)) = 0$. If $z_{G_1}(\sigma(j)) = 1$, $a^1_j \rightsquigarrow b^4_j$ and $b^1_j \rightsquigarrow a^4_j$. Let us assume that the statement is true for the graph $G_{i-1}$ with $1 \leq i < t$. 
	We know $G_i = G_{i-1} \conc B_i$, hence the layer $V^{3(i-1)+1} $ is treated as the first layer of $B_i$. 
	We analyze the two possible cases for the value taken by $z_{G_i}(j)$ as follows.
	\begin{itemize}
		\item If $z_{G_{i-1}}(j) = x^i_{\sigma^i(j)} $, then $z_{G_i}(j) = 0$. If $z_{G_{i-1}}(j) = x^i_{\sigma^i(j)} = 0$, we have paths from $a^1_j \rightsquigarrow a^{3(i-1)+1}_j \rightsquigarrow a^{3i+1}_j$ and $b^1_j \rightsquigarrow b^{3(i-1)+1}_j \rightsquigarrow  b^{3i+1}_j$.  If $z_{G_{i-1}}(j) = x^i_{\sigma^i(j)} = 1$, then we have paths from $a^1_j \rightsquigarrow b^{3(i-1)+1}_j \rightsquigarrow a^{3i+1}_j$ and $b^1_j \rightsquigarrow a^{3(i-1)+1}_j \rightsquigarrow  b^{3i+1}_j$. 
		\item If $z_{G_i}(j) = 1$, then either $z_{G_{i-1}}(j) = 0, x^i_{\sigma^i(j)} = 1$, where we have with paths from $a^1_j \rightsquigarrow a^{3(i-1)+1}_j \rightsquigarrow b^{3i+1}_j$ and $b^1_j \rightsquigarrow b^{3(i-1)+1}_j \rightsquigarrow  a^{3i+1}_j$,
		or $z_{G_{i-1}}(j) = 1, x^i_{\sigma^i(j)} = 0$, with paths from $a^1_j \rightsquigarrow b^{3(i-1)+1}_j \rightsquigarrow b^{3i+1}_j$ and $b^1_j \rightsquigarrow a^{3(i-1)+1}_j \rightsquigarrow  a^{3i+1}_j$. \qedhere
	\end{itemize}
\end{proof}

This concludes our setup. 
We now have all the constructs necessary to describe our hard distribution for $\NGC_{n,k}$. 

\subsection{A Hard Distribution for NGC}\label{subsec:harddistribution}

In this section, we will describe our hard distribution $\distNGC$ for $\NGC_{n,k}$ only for \emph{certain values} of $k$, and prove the validity of the distribution (the extension to all choices of $k$ is straightforward and is done
in the proof of~\Cref{thm:main-ngc} itself). 

Our distribution will be a multi-block graph from \Cref{def:multi-block} with strings $x^i$ and permutations $\sigma^i$ sampled at random from $\{0,1\}^{(n/2k)}$ and $\mathcal{S}_{n/2k}$ respectively, conditioned on correlation of the bit $z_G(j)$ for some \emph{subset} of groups $j$ in the first layer.  We add additional edges to create cycles so that the graph is a valid instance of $\NGC_{n,k}$. Formally, 

\begin{tbox}
	\textbf{A hard distribution $\distNGC$ for $\NGC_{n,k}$} (when $k=3t+1$ for $t \geq 1$ and $n=4k \cdot m$ for $m \geq 1$).
	\begin{enumerate}[label=$(\roman*)$]
		\item Sample $\theta \in \set{0,1}$ uniformly at random. 
		\item Sample $X=\paren{x^1,\ldots,x^t} \in \paren{\set{0,1}^{2m}}^{t}$ and $\Sigma = (\sigma^1,\ldots,\sigma^{t}) \in \paren{\mathcal{S}_{2m}}^{t}$ uniformly at random conditioned on the event that for all $j \in [m]$\footnote{We emphasize that this property is only
			for the first $m$ indices and not all $2m$ of them.}:
		\[
		\bigoplus_{i \in [t]} x^i_{\sigma^i(j)} = \theta.
		\]
		\item Let $G = \multiblock{X,\Sigma} \in \LG_{2m,k}$ plus the \emph{auxiliary edges} $(a^{k}_j,a^1_j)$ and $(b^k_j,b^1_j)$ for $j \in [m]$. 
	\end{enumerate}
\end{tbox}

The following figure gives an illustration of the distribution $\distNGC$. 

\begin{figure}[h!]
	\centering
	\resizebox{15cm}{!}{\input{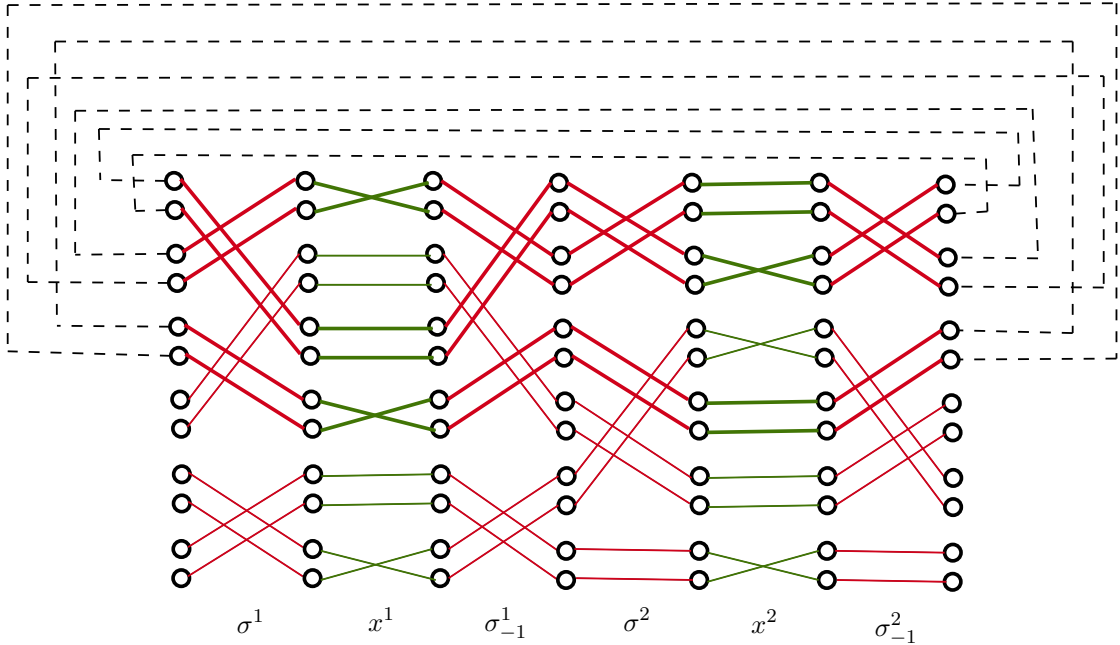}}
	\caption{An illustration of graph $G$ sampled from $\distNGC$ with strings  $X = (100101, 011001)$,  and permutations $\Sigma = \paren{(3,1,4,2,6,5), (2,1,4,5,3,6)}$ when $\theta = 1$. The thicker edges show that  $z_G(j) = 1$ for $j \in [3]$. The dashed edges are the auxiliary edges added to $\multiblock{X, \Sigma}$.}\label{fig:distngc1}
\end{figure}

We prove that the graphs sampled from $\distNGC$ satisfy the guarantee of the $\NGC$ problem. 

\begin{lemma}\label{lem:distNGC}
	For any graph $G \sim \distNGC$ for $\NGC_{n,k}$: 
	\begin{enumerate}[label=$(\roman*)$]
		\item if $\theta = 0$, $G$ has $(n/2k)$ vertex-disjoint cycles of length $k$; 
		\item if $\theta = 1$, $G$ has $(n/4k)$ vertex-disjoint cycles of length $2k$. 
	\end{enumerate}
	In either case, $G$ additionally has $(n/2k)$ vertex-disjoint paths of length $k-1$. 
\end{lemma}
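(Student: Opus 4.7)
The proof will be a direct application of the structural Observation~\ref{obs:multi-block} to the underlying multi-block graph, combined with the conditioning that pins down $z_G(j)$ for the first $m$ groups.

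My first step would be to separate the two contributions to $G$: the multi-block graph $G_0 = \multiblock{X,\Sigma} \in \LG_{2m, k}$ of depth $k = 3t+1$, and the auxiliary edges $(a^k_j, a^1_j)$ and $(b^k_j, b^1_j)$ added only for $j \in [m]$. By the conditioning in step $(ii)$ of the distribution, $z_{G_0}(j) = \theta$ for every $j \in [m]$, while $z_{G_0}(j)$ is unconstrained for $j \in [m+1, 2m]$. By Observation~\ref{obs:multi-block}(i), $P_{G_0}(j) = j$ for every $j$, so the two vertices of any first-layer group $g^1_j$ reach exactly the two vertices of $g^k_j$ via paths of length $k-1$.

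Next, I would handle the groups $j \in [m]$ case-by-case using part~(ii) of Observation~\ref{obs:multi-block}. If $\theta = 0$, the two paths out of $g^1_j$ are $a^1_j \rightsquigarrow a^k_j$ and $b^1_j \rightsquigarrow b^k_j$; the auxiliary edges $(a^k_j, a^1_j)$ and $(b^k_j, b^1_j)$ close each into a cycle of length $k$, yielding $2m = n/(2k)$ disjoint $k$-cycles. If $\theta = 1$, the two paths are crossed: $a^1_j \rightsquigarrow b^k_j$ and $b^1_j \rightsquigarrow a^k_j$; the same two auxiliary edges now splice them into a single cycle $a^1_j \rightsquigarrow b^k_j \to b^1_j \rightsquigarrow a^k_j \to a^1_j$ of length $2(k-1) + 2 = 2k$, yielding $m = n/(4k)$ disjoint $2k$-cycles.

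For the remaining groups $j \in [m+1, 2m]$, no auxiliary edges are added, so Observation~\ref{obs:multi-block}(ii) gives exactly two vertex-disjoint length-$(k-1)$ paths per group regardless of $z_{G_0}(j)$; summing over the $m$ such groups produces the required $2m = n/(2k)$ paths of length $k-1$. The only remaining point is vertex-disjointness across different $j$: because each matching inside a block sends any group $g^i_{j'}$ entirely into a single group of the next layer, the vertices reachable from $g^1_j$ in any intermediate layer form a single group indexed by a bijective image of $j$ (a composition of the $\sigma^i$'s and their inverses). Distinct values of $j$ thus traverse disjoint groups in every layer, making all the cycles and paths pairwise vertex-disjoint. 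Apart from this bookkeeping, the counts and lengths fall out immediately from Observation~\ref{obs:multi-block}; there is no real obstacle beyond tracking which auxiliary edges close which kind of cycle.
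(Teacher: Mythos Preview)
Your proposal is correct and follows essentially the same approach as the paper: both apply Observation~\ref{obs:multi-block} to determine the path structure inside the multi-block graph, then case-split on $\theta$ to see how the auxiliary edges close the paths for $j\in[m]$ into $k$-cycles or $2k$-cycles, and observe that the groups $j\in[m+1,2m]$ (having no auxiliary edges) yield the $2m=n/(2k)$ length-$(k-1)$ paths. The only cosmetic difference is that the paper argues vertex-disjointness via the degree-$\leq 2$ layered structure, whereas you trace the group bijections layer-by-layer; both are equally valid.
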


\begin{proof}
	First let us show that irrespective of what $\theta$ is, $G$ has $(n/2k)$ vertex-disjoint paths of length $k$. Consider any group $g^1_j$ in layer $V^1$ of $G$ for $m < j \leq 2m$. We know vertices of $g^1_j$ are connected by a path to vertices of the group $g^k_j$ from \Cref{obs:multi-block} as $P_G(j) = j$. Either paths $(a^1_j \rightsquigarrow a^k_j), (b^1_j \rightsquigarrow b^k_j)$ exist, or paths $(a^1_j \rightsquigarrow b^k_j), (b^1_j \rightsquigarrow a^k_j)$ exist (which of these two cases happens depends on $z_G(j)$ which we have not conditioned on). The length of any such path is $k-1$ as edges in $G$ only go from one layer to the next. Moreover, this path is a separate connected component of $G$, as $G$ is a layered graph with no additional edges to any $a^1_j, a^k_j, b^1_j$ or $b^k_j$ for $m < j \leq 2m$.  There are a total of $2m = (n/2k)$ such vertices in layer $V^1$, and if we consider all the paths connecting these vertices to vertices of $V^k$, we get $(n/2k)$ vertex-disjoint paths. 
	
	If $\theta = 0$, we know $z_G(j) = 0$ for every $j \in [m]$. Hence, paths $(a^1_j \rightsquigarrow a^k_j), (b^1_j \rightsquigarrow b^k_j)$ of length $k-1$ exist by \Cref{obs:multi-block}. But additional edges $(a^1_j, a^k_j)$ and $(b^1_j , b^k_j)$ are added to $G$. Thus, $(a^1_j \rightsquigarrow a^k_j \rightarrow a^1_j)$ is a cycle in $G$ of length $k$. As $G$ is layered and the degree of each vertex is at most 2, this cycle is a connected component of $G$. In fact, there are $2m = n/2k$ vertex-disjoint cycles of length $k$ as $(a^1_j \rightsquigarrow a^k_j \rightarrow a^1_j)$ and $(b^1_j \rightsquigarrow b^k_j \rightarrow b^1_j)$ are cycles for each $j \in [m]$.
	
	If $\theta = 1$, we know $z_G(j) = 1$ for every $j \in [m]$. Paths $(a^1_j \rightsquigarrow b^k_j), (b^1_j \rightsquigarrow a^k_j)$ of length $k-1$ are present in $G$, again by \Cref{obs:multi-block}. These paths are similarly vertex-disjoint from other paths in $G$ as the degree of every vertex is at most 2. Thus, for each $j \in [m]$ we get the cycle $(a^1_j \rightsquigarrow b^k_j \rightarrow b^1_j \rightsquigarrow a^k_j \rightarrow a^1_j)$ of length $2k$ if we incorporate the auxiliary edges, giving us $m = (n/4k)$ vertex disjoint cycles of length $2k$. 
\end{proof}

For any graph $G$ sampled from distribution $\distNGC$, we have $z_G(j) = \theta$ for each $j \in [m]$. Thus, the $X$-bits on the paths from vertices of the first $m$ groups in layer $V^1 $ to vertices in layer $V^k$ are correlated. 
We remove this correlation as the next step to proving our lower bound. 

\subsection{The Distributional Hidden-XOR (\DHXOR) Problem}

In this section, we define the distributional hidden XOR ($\DHXOR$) problem and give a reduction to it from $\NGC_{n,k}$ over distribution $\distNGC$ with a decorrelation step. The intuition behinds this step is as follows. For any algorithm that attempts to solve $\NGC_{n,k}$ on distribution $\distNGC$, it is sufficient to find $z_G(j)$ for \emph{some} $j \in [m]$ with probability at least 2/3. However, we will show that such an algorithm also has to find $z_G(\istar)$ for a \emph{fixed} $\istar \in [m]$ albeit with a lower probability of (roughly) at least $\frac12 + \frac1{m}$.  The following problem and the subsequent lemma capture this intuition. 

\begin{problem}[\textbf{(Distributional) Hidden-XOR Problem} (\DHXOR)]\label{def:dhxor}
	In $\DHXOR_{\ww,t}$, we have a graph $G = \multiblock{X,\Sigma} \in \LG_{\ww,3t+1}$ for $X \in \paren{\set{0,1}^{\ww}}^{t}$ and $\Sigma \in \paren{\mathcal{S}_{\ww}}^t$ chosen independently and uniformly at random. 
	The goal is to output $z_G(1)$ in this graph when the edges of the graph are \emph{randomly} partitioned between Alice and Bob, and Alice can send a single message to Bob.  
\end{problem}

This is the main lemma of this subsection, giving a lower bound for $\NGC_{n,k}$ using $\DHXOR$. 

\begin{lemma}\label{lem:main-ngc}
	For any sufficiently large $n \geq 1$, $k = 3t+1$ for some $t \geq 1$ and $\ww = (n/4k)+1$, 
	\[
	\oDD{\distNGC}{1/3}(\NGC_{n,k}) \geq \oDD{\distDHX}{\frac{1}{2}-\frac{1}{6\ww}}(\DHXOR_{\ww,t}), 
	\] 
	where $\distDHX$ is the implicit uniform distribution in the definition of $\DHXOR$. 
\end{lemma}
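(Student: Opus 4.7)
My approach is a hybrid/decorrelation argument in the spirit of \cite{GavinskyKKRW07,VerbinY11,AssadiKSY20,AssadiN21}, carried out so that the random partition of edges required for robust communication survives the reduction. I would introduce hybrid distributions $\mu_0, \mu_1, \ldots, \mu_m$ (with $m = n/4k$) that have the same sampling format as $\distNGC$ but with gradually weakened correlation: under $\mu_h$, sample $\theta \in \set{0,1}$ uniformly and then $(X, \Sigma) \in (\set{0,1}^{2m})^t \times (\mathcal{S}_{2m})^t$ uniformly subject \emph{only} to $z_G(j) = \theta$ for $j \in \set{h+1, \ldots, m}$, and then attach the same auxiliary edges as in \Cref{subsec:harddistribution}. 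Thus $\mu_0 = \distNGC$ and under $\mu_m$ the answer $\theta$ is information-theoretically independent of the constructed graph. Writing $p_h := \Pr_{\mu_h}\bracket{\Pi = \theta}$ for the given protocol $\Pi$, the hypothesis gives $p_0 \geq 2/3$ while $p_m = 1/2$, so by telescoping and averaging there exists $h^* \in [m]$ with $p_{h^*-1} - p_{h^*} \geq \frac{1}{6m}$.

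The next key observation is that $\mu_{h^*-1}$ coincides with $\mu_{h^*}$ further conditioned on $z_G(h^*) = \theta$: both share the constraints on $j \in [h^*+1, m]$, and imposing the new constraint at $h^*$ promotes $\mu_{h^*}$ to $\mu_{h^*-1}$. Since under $\mu_{h^*}$ the bit $z_G(h^*) = \bigoplus_{i} x^i_{\sigma^i(h^*)}$ is a XOR of bits disjoint from those appearing in the constraints at $j \in [h^*+1, m]$ (because $\sigma^i$ is a permutation), $z_G(h^*)$ is uniform and independent of $\theta$ under $\mu_{h^*}$. Setting $\gamma := \Pr_{\mu_{h^*}}\bracket{\Pi = \theta \mid z_G(h^*) \neq \theta}$ we get $p_{h^*} = \tfrac{1}{2} p_{h^*-1} + \tfrac{1}{2} \gamma$, which after rearrangement doubles the gap to $p_{h^*-1} - \gamma \geq \frac{1}{3m}$. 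This doubled gap is exactly what the reduction will convert into a $\DHXOR$ advantage.

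The reduction itself: given a $\DHXOR_{\ww, t}$ input $(X_H, \Sigma_H)$ split randomly into $H_A, H_B$, Alice and Bob use public randomness to draw $\theta$, $h^*$, a uniform $\Sigma_G \in (\mathcal{S}_{2m})^t$, plus auxiliary random bits. They plan to form a graph $G = \multiblock{X_G, \Sigma_G}$ (plus auxiliary edges) in which $z_G(j) = \theta$ for $j \in [h^*+1, m]$ (achieved by using public randomness plus one XOR-correcting bit per such $j$), $z_G(j)$ is uniform for $j \in [h^*-1] \cup [m+1, 2m]$ (from public randomness), and $z_G(h^*) = z_H(1)$ (by transplanting the bits $x^i_H[\sigma^i_H(1)]$ into the coordinates $x^i_G[\sigma^i_G(h^*)]$). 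The partition of $G$'s edges is derived from the $\DHXOR$ partition on edges carrying the transplanted payload and from public randomness on all other edges, which gives a uniformly random edge partition as required. They feed their local view of $G$ into $\Pi$ and output $\Pi$'s answer $\hat\theta$ as their guess for $z_H(1)$. Conditional on $z_H(1) = \theta$ (probability $1/2$), $G$ is distributed exactly as $\mu_{h^*-1}$, so $\Pr[\hat\theta = z_H(1)] = p_{h^*-1}$; conditional on $z_H(1) \neq \theta$, $G$ is distributed as $\mu_{h^*}$ conditioned on $z_G(h^*) \neq \theta$, and $\Pr[\hat\theta = z_H(1)] = 1 - \gamma$. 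The total success probability is
\[
	\tfrac{1}{2} p_{h^*-1} + \tfrac{1}{2}(1 - \gamma) = \tfrac{1}{2} + \tfrac{1}{2}(p_{h^*-1} - \gamma) \geq \tfrac{1}{2} + \tfrac{1}{6m} \geq \tfrac{1}{2} + \tfrac{1}{6\ww},
\]
which is precisely the required advantage (using $\ww = m+1$).

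The main obstacle I expect is the mechanical but delicate part: implementing the transplant of $\DHXOR$'s payload into $G$ at the edge level, using only $H_A$ (respectively $H_B$) and public randomness, while keeping $\Sigma_G$ uniform on $\mathcal{S}_{2m}$ and keeping the induced partition uniform and independent across edges. A naive graph-embedding of the whole multi-block structure of $H$ into a sub-block of $G$ would force $\Sigma_G$ to stabilize the image of the embedding, which is inconsistent with $\Sigma_G$ being uniform on $\mathcal{S}_{2m}$; so the payload transplant must instead be arranged coordinate-by-coordinate, leveraging the $\sigma^i / \invsigma^i$ sandwich inside a block (\Cref{def:block}) to localize the bits $x^i_H[\sigma^i_H(1)]$ into the specific XOR-matching groups of $G$ indexed by $\sigma^i_G(h^*)$. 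Verifying that under this transplant the partition of $G$'s edges is genuinely uniform, and that the resulting graph distribution coincides with $\mu_{h^*-1}$ or with $\mu_{h^*}$ conditioned on $z_G(h^*) \neq \theta$ in the two cases above, is where the bulk of the careful bookkeeping will lie.
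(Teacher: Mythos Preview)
Your hybrid argument (the first two paragraphs) is correct and is a legitimate variant of the paper's. The paper's hybrids $\distHNGC(h)$ interpolate by fixing $z_G(j)=0$ for $j\le h$ and $z_G(j)=1$ for $h<j\le m$, whereas you keep $\theta$ random and gradually drop constraints; both reach the same $\tfrac{1}{6m}$ advantage via telescoping, and your success-probability calculation $\tfrac12 p_{h^*-1}+\tfrac12(1-\gamma)\ge \tfrac12+\tfrac{1}{6m}$ is fine.

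The genuine gap is in the reduction's implementation. If $\Sigma_G$ is drawn entirely from public randomness, then the target position $\sigma^i_G(h^*)$ is indeed known to both players, but the bit you want to place there, $y^i_H[\ttau^i_H(1)]$, is known to \emph{neither} player: it is encoded by the XOR-matching edges of block $i$ of $H$ at position $\ttau^i_H(1)$, and that position is itself encoded by the perm-matching edges of block $i$ of $H$, all of which are split uniformly between the players. There is no way to ``transplant'' a bit that neither player can compute, and the $\sigma/\invsigma$ sandwich does not rescue this: composing any public permutation with the private $\ttau^i_H$ makes $\sigma^i_G$ depend on $\ttau^i_H$ and hence no longer public.

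Your dismissal of the graph-embedding route is based on a misconception, and that route is precisely how the paper proceeds. The worry that embedding all of $H$ forces $\Sigma_G$ to stabilize a fixed subset is avoided as follows: only the values $\sigma^i_G(j)$ for $j\in[m]\setminus\{h^*\}$ are sampled publicly, and the remaining $m+1=\ww$ source positions $\{h^*\}\cup\{m{+}1,\dots,2m\}$ are filled by $f_i\circ\ttau^i_H$, where $f_i$ is the public bijection onto the \emph{random} complement of the image of the partial public permutation. Because that complement is itself uniformly random, the assembled $\sigma^i_G$ is uniform on $\mathcal{S}_{2m}$; and every edge of $G$ is either publicly determined or is a relabeled edge of $H$, so each player can construct their share locally and the induced partition is uniform. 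In short: do not make $\Sigma_G$ fully public---let the $\DHXOR$ permutations live inside $\Sigma_G$.
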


We design a way for Alice and Bob to solve $\DHXOR_{\ww, t}$ using any protocol for $\ngc_{n,k}$ by a hybrid argument. Consider the following series of distributions for $0 \leq h \leq m$. 

\begin{tbox}
	\textbf{Distribution} $\distHNGC(h)$\textbf{.} Sample $X  = \paren{x^1, x^2, \ldots, x^t} \in  \paren{\set{0,1}^{2m}}^{t}$, $\Sigma =\paren{\sigma^1, \sigma^2, \ldots, \sigma^t} \in \paren{\mathcal{S}_{2m}}^t$  independently and uniformly at random conditioned on
	\begin{align*}
		z_G(j) = \bigoplus_{i \in [t]} x^i_{\sigma^i(j)} = 
		\begin{cases}
			0 &\textnormal{ if $j \leq h$,} \\
			1 & \textnormal{  otherwise}.
		\end{cases}
	\end{align*}
\end{tbox}
Notice that the distribution $\distNGC$ is just $\frac12\paren{\distHNGC(0) + \distHNGC(m)}$ along with the fixed auxiliary edges. 

We say that a protocol $\Pi$ \emph{distinguishes} between two distributions $\mathcal{D}_1$ and $\mathcal{D}_2$ with probability $p \in [0,1]$, if given a sample $s$ chosen uniformly from $\mathcal{D}_1$ and $\mathcal{D}_2$, 
running $\Pi$ on $s$ allows us to identify which distribution $s$ was sampled from with probability at least $p$.\footnote{To clarify, in general, the protocol $\Pi$ may not have been designed for this task and thus we cannot 
solely rely on its output (which may not even be well-defined in certain cases). Thus, we rather focus on the maximum likelihood estimator for the joint distribution of the message $\pi$ and Bob's input to distinguish between the two distributions.}
Let $\Pi$ be a protocol for $\NGC_{n,k}$ with probability of success at least $2/3$. 
We  use $\Pi$ in a non-black-box way to identify an index $\istar \in [m]$ such that $\Pi$ distinguishes samples from $\distHNGC(\istar-1)$ and $\distHNGC(\istar)$ w.p. above $1/2$. 

\begin{lemma}\label{lem:hybridindex}
	There exists an index $1 \leq \istar \leq m$ such that protocol $\Pi$ can distinguish the samples from $\distHNGC(\istar-1)$ and $\distHNGC(\istar)$ with probability at least $\frac12 + \frac1{6m}$. 
\end{lemma}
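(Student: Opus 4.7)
My plan is to use a standard hybrid/telescoping argument based on the decomposition of $\distNGC$ as a mixture of the two extreme hybrids $\distHNGC(0)$ and $\distHNGC(m)$. First, I will observe that when we augment each $\distHNGC(h)$ with the same auxiliary edges $\{(a^k_j,a^1_j),(b^k_j,b^1_j) : j \in [m]\}$, the distribution $\distNGC$ equals $\tfrac{1}{2}\distHNGC(0)+\tfrac{1}{2}\distHNGC(m)$: indeed, by \Cref{lem:distNGC}, conditioning on $\theta=1$ in $\distNGC$ gives exactly $\distHNGC(0)$ (all $z_G(j)=1$), and conditioning on $\theta=0$ gives exactly $\distHNGC(m)$ (all $z_G(j)=0$).

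Next, let $p_h := \Pr[\Pi(G_A,G_B,R)=1]$ where $G$ is drawn from $\distHNGC(h)$ (with the auxiliary edges added), the random partition of $G$'s edges is made into $(G_A,G_B)$, and $R$ is the public randomness. The success condition of $\Pi$ on $\distNGC$ unrolls to
\[
\tfrac{1}{2}\,\Pr[\Pi=1\mid \distHNGC(0)] \;+\; \tfrac{1}{2}\,\Pr[\Pi=0\mid \distHNGC(m)] \;\geq\; 2/3,
\]
i.e., $p_0 - p_m \geq 1/3$ (possibly after flipping the convention on which output corresponds to which case, which is at worst a sign change inside an absolute value).

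I then apply the triangle inequality to the telescoping sum
\[
p_0 - p_m \;=\; \sum_{h=1}^{m}\paren{p_{h-1}-p_h},
\]
which yields an index $\istar \in [m]$ with $\card{p_{\istar-1}-p_\istar} \geq 1/(3m)$. To finish, I will define a distinguisher that takes a sample drawn from one of $\distHNGC(\istar-1)$ or $\distHNGC(\istar)$ (chosen with equal probability), runs $\Pi$ on the randomly partitioned edges of that sample (together with the fixed auxiliary edges), and outputs ``$\istar-1$'' vs.\ ``$\istar$'' based on whichever is more likely given $\Pi$'s bit output (WLOG, output ``$\istar-1$'' iff $\Pi=1$, after adjusting the convention so $p_{\istar-1} \geq p_\istar$). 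A direct computation gives success probability
\[
\tfrac{1}{2}\paren{p_{\istar-1} + (1-p_\istar)} \;=\; \tfrac{1}{2} + \tfrac{1}{2}\card{p_{\istar-1}-p_\istar} \;\geq\; \tfrac{1}{2}+\tfrac{1}{6m},
\]
which is exactly what the lemma claims.

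There is no serious obstacle here: the argument is entirely a standard hybrid-plus-max-likelihood step once one recognizes that the auxiliary edges are common to all $\distHNGC(h)$ and that $\distNGC$ is the equal-weight mixture of the two endpoint hybrids. The only mild care needed is that we are not using the maximum-likelihood estimator of \Cref{prop:tvdsample} in a black-box way (that would give $\tfrac{1}{2}+\tfrac{1}{2}\tvd{\distHNGC(\istar-1)}{\distHNGC(\istar)}$, which could be larger), but instead using $\Pi$'s one-bit output directly, which still suffices because $|p_{\istar-1}-p_\istar|$ itself lower-bounds the statistical distance between the induced output distributions. The real work of the paper then shifts to proving a lower bound for $\DHXOR_{w,t}$ at the advantage $1/(6m)$, which is done in the subsequent sections.
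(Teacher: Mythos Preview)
Your proposal is correct and follows essentially the same hybrid argument as the paper. The only cosmetic difference is that you telescope directly over the output-bit probabilities $p_h=\Pr[\Pi=1\mid\distHNGC(h)]$, whereas the paper telescopes over the total variation distances of $\outs(\Pi)$ (Alice's message together with Bob's input) and then invokes \Cref{prop:tvdsample}; both routes yield the identical bound $\tfrac12+\tfrac{1}{6m}$.
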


\begin{proof}
	With a slight abuse of notation, let $\outs(\Pi)$ denote the set of random variables used by Bob to determine the output of the protocol, namely, the message received by Alice and his own input. 
	Protocol $\Pi$ identifies whether any sample from $\distNGC$ is sampled from $\distHNGC(0)$ or $\distHNGC(m)$ with probability at least 2/3. 
	Consider two distributions $\outs(\Pi) \mid \distHNGC(0)$ and $\outs(\Pi) \mid \distHNGC(m)$. As we can distinguish these distributions with probability at least $\frac23$ or advantage $\frac16$, by \Cref{prop:tvdsample},  
	\begin{align*}
		\tvd{\outs(\Pi) \mid \distHNGC(0)}{\outs(\Pi) \mid \distHNGC(m)} \geq \frac16.
	\end{align*}
	However from the triangle inequality property of the total variation distance, we also have that,
	\begin{align*}
		\tvd{\outs(\Pi) \mid \distHNGC(0)}{\outs(\Pi) \mid \distHNGC(m)} \leq \sum_{i=1}^{m}\tvd{\outs(\Pi) \mid \distHNGC(i-1)}{\outs(\Pi) \mid \distHNGC(i)}.
	\end{align*}
	By a simple averaging argument, there exists an index $1 \leq \istar \leq m$ such that,
	\begin{align*}
		\tvd{\outs(\Pi) \mid \distHNGC(\istar)}{\outs(\Pi) \mid \distHNGC(\istar-1)} \geq \frac1{6m}. 
	\end{align*}
	That is, $\outs(\Pi)$ can be used to idenfity samples from $\distHNGC(\istar-1)$ and $\distHNGC(\istar)$ with advantage at least $1/6m$, again by \Cref{prop:tvdsample}. 
	Thus, we can use $\Pi$ to distinguish between samples from $\distHNGC(\istar)$ and $\distHNGC(\istar-1)$ with probability at least $\frac12 + \frac1{6m}$. 
\end{proof}

Now that we have an index $\istar \in [m]$, our plan to solve $\DHXOR_{\ww, t}$ is to embed the starting vertex of any instance of $\DHXOR$ given to Alice and Bob in $\istar$ of $\distHNGC$ and use protocol $\Pi$. To do so, Alice and Bob need a way to sample from the distribution $\distHNGC^* = \frac12(\distHNGC(\istar) + \distHNGC(\istar-1))$ and we will show that such a sampling process exists. We need to define some notation on partial assignments and permutations before we proceed. 

\paragraph{Notation.}
For any $I \subseteq [\ww]$,  $\cU_{\ww}^{I}$ denotes the set of all $ x \in \set{0,1,\star}^{\ww}$ such that $x_i \in \{0,1\}$ if $i \in I$ and $x_i = \star$ otherwise. That is,  the indices in subset $I$ are fixed to be either 0 or 1, whereas the indices outside $I$ are not fixed yet, with the ambiguous value represented by $\star$. Similarly, define $\mathcal{S}_{\ww}^I$ as the set of all permutations $\sigma$ of $[\ww]$ which fix $\sigma(i) \in [\ww]$ for $i \in I$, and $\sigma(i) = \star$ for any $i \notin I$.  For any $\sigma \in \mathcal{S}_{\ww}^I$, there are $(\ww-\card{I})!$ ways to extend $\sigma$ to a  permutation $\sigma' \in \mathcal{S}_{\ww}$ by fixing indices $i \notin I$. 

We can now describe the algorithm used by Alice and Bob to sample from $\distHNGC^*$ and solve $\DHXOR_{\ww, t}$. In the following, to avoid ambiguity, we use $Y$ and $\Tau$ as the inputs to
$\multiblock{Y,\Tau}$ for the $\DHXOR$ problem, and $X$ and $\Sigma$ as inputs to some $\multiblock{X,\Sigma}$ for $\NGC$ that $\Pi$ operates on. 

\begin{Algorithm}\label{alg:samplngch}
	A protocol for $\DHXOR_{\ww,t}$ using protocol $\Pi$ for $\NGC_{n,k}$ with $ \ww = \frac{n}{4k}+1, k = 3t+1$. 
	
	\smallskip
	
	\textbf{Input:} An instance of $\DHXOR_{\ww,t}$,  $G = \multiblock{Y, \Tau}$, $Y = \paren{y^1, y^2, \ldots, y^t} \in \paren{\set{0,1}^{\ww}}^t$, and $\Tau = (\ttau^1, \ttau^2, \ldots, \ttau^t) \in \paren{\mathcal{S}_{\ww}}^t$.  
	
	\smallskip
	
	\textbf{Output:} Answer to $\DHXOR_{\ww, t}$ on $G$, i.e., $z_G(1)$.
	\begin{enumerate}[label=$(\roman*)$]
		\item Sample $\Sigma_{[m]\setminus\set{\istar}} \coloneqq (\sigma^1, \ldots, \sigma^t) \in \paren{\mathcal{S}_{2m}^{[m]\setminus\set{\istar}}}^t$ uniformly at random.
		\item Let $I^i_{\circ} = \set{\sigma^i(j) \mid j \in \paren{[m]\setminus \set{\istar}}}$ for $i \in [t]$, that is, $I^i_{\circ}$ is the set of indices that have a non-$\star$ pre-image under permutation $\sigma^i$. The set $I^i_{\circ}$ has $m-1$ elements for all $i \in [t]$. 
		\item Sample $X_{[m]\setminus\set{\istar}} = (x^1, x^2, \ldots, x^t) \in \paren{\cU_{2m}^{I^1_{\circ}} \times \cU_{2m}^{I^2_{\circ}} \times \ldots \cU_{2m}^{I^t_{\circ}}}$ uniformly at random conditioned on the event that,
		\begin{align*}
			\bigoplus_{i \in [t]} x^i_{\sigma^i(j)} = 
			\begin{cases}
				0 &\textnormal{ if $1 \leq j \leq \istar-1$,} \\
				1 & \textnormal{  if $\istar+1 \leq j \leq m$}.
			\end{cases}
		\end{align*}
		\item Lexicographically map indices $[m+1]$ to $[2m] \setminus I^i_{\circ}$ for each $i \in [t]$. This is possible since $\card{[2m] \setminus I^i_{\circ}} = m+1 = \ww$. Denote this mapping by $f_i : [m+1] \rightarrow [2m] \setminus I^i_{\circ}$. 
		\item Update the positions with $\star$ value in $X_{[m]\setminus\set{\istar}}$ and $\Sigma_{[m]\setminus\set{\istar}}$ to get $X$ and $\Sigma$ respectively as follows. For each $i \in [t]$,
		\begin{align*}
			\sigma^i(j) = \begin{cases}
				f_i(\ttau^i(1)) &\textnormal{if $j = \istar$} \\
				f_i(\ttau^i(j-m+1)) &\textnormal{if $m < j \leq 2m$ }
			\end{cases}
			&& \textnormal{and} &&
			x^i_{f_i(j)} = y^i_{j} &\hspace{4mm} \textnormal{for $j \in [m+1]$.}
		\end{align*}
		\item Fix $G' = \multiblock{X, \Sigma}$. 
		\item Run protocol $\Pi$ on $G'$ to identify whether $G' \sim \distHNGC(\istar-1)$ or $G' \sim \distHNGC(\istar)$; in the former case output $0$ and in the latter output $1$.
	\end{enumerate}
\end{Algorithm}

First, we prove that the sampling process used by \Cref{alg:samplngch} gives a graph sampled from $\distHNGC^*$. 
\begin{lemma}\label{lem:hybridsampling}
	Given an instance $G$ of $\DHXOR_{\ww, t}$,  step $(vi)$ of  \Cref{alg:samplngch} samples $G' \sim \distHNGC^*$ without any communication using its randomness and the randomness of instance $G$. 
\end{lemma}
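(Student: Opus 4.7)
The plan is to decompose the claim into a structural check and a uniformity check. The structural check shows that the output $G' = \multiblock{X,\Sigma}$ always lies in $\supp{\distHNGC(\istar)} \cup \supp{\distHNGC(\istar-1)}$, with the particular hybrid determined by $z_G(1)$. The uniformity check shows that within each hybrid, $(X,\Sigma)$ is uniform. Combined with the observation that $z_G(1) = \bigoplus_i y^i_{\ttau^i(1)}$ is uniform over $\{0,1\}$ (because $Y$ is uniform and independent of $\Tau$), this yields the equal mixture $\distHNGC^* = \half(\distHNGC(\istar-1) + \distHNGC(\istar))$.

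For the structural check, I would first verify that each $\sigma^i$ is a valid permutation of $[2m]$: step $(i)$ sets $\sigma^i$ injectively on $[m]\setminus\{\istar\}$ with image $I^i_\circ$, and step $(v)$ fills the remaining $m+1$ pre-images $\{\istar, m+1, \ldots, 2m\}$ via $f_i \circ \ttau^i$, which is a bijection onto $[2m] \setminus I^i_\circ$ because $\ttau^i$ permutes $[\ww] = [m+1]$; likewise each $x^i \in \{0,1\}^{2m}$ is fully specified. I would then compute $z_{G'}(j)$ for $j \in [m]$: for $j \neq \istar$, the conditioning in step $(iii)$ directly gives $0$ if $j \leq \istar-1$ and $1$ if $j \geq \istar+1$; for $j = \istar$ the key identity is
\[
z_{G'}(\istar) = \bigoplus_{i\in[t]} x^i_{\sigma^i(\istar)} = \bigoplus_{i\in[t]} x^i_{f_i(\ttau^i(1))} = \bigoplus_{i\in[t]} y^i_{\ttau^i(1)} = z_G(1),
\]
which follows from the assignments $\sigma^i(\istar) = f_i(\ttau^i(1))$ and $x^i_{f_i(j)} = y^i_j$ in step $(v)$. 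Hence $z_G(1) = 0$ places $(X,\Sigma)$ in $\supp{\distHNGC(\istar)}$ and $z_G(1) = 1$ places it in $\supp{\distHNGC(\istar-1)}$.

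For the uniformity check, I would exhibit an explicit inverse to the map $(\Sigma_{[m]\setminus\{\istar\}}, X_{[m]\setminus\{\istar\}}, \Tau, Y) \mapsto (X, \Sigma)$ defined by steps $(i)$--$(vi)$: given $(X, \Sigma)$, one recovers $\Sigma_{[m]\setminus\{\istar\}}$ and the restriction of $X$ to the $I^i_\circ$ coordinates by restriction, then $\ttau^i$ by inverting $\sigma^i(\istar) = f_i(\ttau^i(1))$ and $\sigma^i(m+k) = f_i(\ttau^i(k+1))$, and finally $y^i_j = x^i_{f_i(j)}$. A direct count gives both the algorithm's sampling space and the target $\supp{\distHNGC(\istar)} \cup \supp{\distHNGC(\istar-1)}$ the common cardinality $(2m)!^t \cdot 2^{2mt - m + 1}$; for the target this uses that, for each fixed $\Sigma$, the $m$ XOR constraints defining $\supp{\distHNGC(h)}$ are linearly independent over $\mathrm{GF}(2)$ since $\sigma^i(1), \ldots, \sigma^i(m)$ are distinct positions within each row $i$. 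Since the algorithm's tuple is jointly uniform on its support --- $\Sigma_{[m]\setminus\{\istar\}}$, $\Tau$, and $Y$ are unconditionally uniform, and $X_{[m]\setminus\{\istar\}}$ is uniform subject to step $(iii)$'s $m-1$ XOR constraints --- the induced law on $(X,\Sigma)$ is uniform on the image. The main obstacle is the combinatorial bookkeeping to verify the bijection and matching constraint-counts, but both reduce to short linear-algebra checks over $\mathrm{GF}(2)$.
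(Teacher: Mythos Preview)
Your distributional argument is correct and is essentially a more explicit version of the paper's proof. The paper argues informally that $\Sigma$ is uniform (partial permutation from public randomness, completed by the uniform $\ttau^i$'s) and that $X \mid \Sigma$ is uniform subject to the $m-1$ XOR constraints at $j \in [m]\setminus\{\istar\}$ (partial string from public randomness, completed by the uniform $Y$), which is exactly the law $\distHNGC^*$ once one observes that the mixture $\tfrac12(\distHNGC(\istar-1)+\distHNGC(\istar))$ simply drops the constraint at $j=\istar$. Your bijection-plus-counting packaging makes the same point rigorously; note that the step $z_{G'}(\istar)=z_G(1)$ you use here is stated separately in the paper as the next claim, so you are folding two statements into one.

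One omission: the lemma also asserts the sampling requires \emph{no communication}, and your proposal does not address this. The point is short but worth stating: steps $(i)$--$(iv)$ use only public randomness, and in step $(v)$ each player can privately fill in their own share of the edges of $G'$ because the embedding of $(Y,\Tau)$ into $(X,\Sigma)$ is coordinatewise via the publicly known maps $f_i$---Alice and Bob each hold a random subset of the edges of $G$, and each such edge maps to a well-defined edge of $G'$ without needing to know the other player's edges.
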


\begin{proof}
	We can view $\distHNGC^*$ as first sampling $t$ permutations from $\mathcal{S}_{\ww}$ independently and uniformly at random, and then sampling $t$ elements of $\{0,1\}^{\ww}$ conditioned on the choice of the permutations to give appropriate target $z_G(i)$ for each $i \in [m]\setminus\set{\istar}$. 
	\begin{itemize}
		\item \textbf{Permutations $\Sigma$ of $G'$:} In step $(i)$, $\sigma^i(j)$ is fixed for $m-1$ values of $j \in [m]\setminus[\istar]$, but these indices are picked uniformly at random from $[2m]$, and independently.  It is enough if the remaining $m+1$ indices in $([2m]\setminus[m]) \cup \set{\istar}$ are uniformly at random mapped to the set of indices in $[2m]$ with  no pre-image under $\sigma^i$, i.e., $[2m] \setminus I^i_{\circ}$. This is true because irrespective of the mapping $f_i$, we know that $\ttau^i $ is chosen uniformly at random from $\mathcal{S}_{\ww}$ and independently. 
		\item \textbf{Strings $X$ of $G'$:} From step $(iii)$, we know that $X_{[m]\setminus\set{\istar}}$ is picked at random and independently, conditioned on the choice of $\Sigma_{[m]\setminus\set{\istar}} $ with $z_{G'}(j) = 0$ for $j< \istar$, and $z_{G'}(j) = 1$ for $\istar+1 \leq j \leq m$ as required by $\distHNGC^*$. The rest of the indices in $X$ are assigned based on $Y$, but $Y$ is picked uniformly at random from $\paren{\set{0,1}^{\ww}}^t$. Hence the rest of the indices in $X$ are chosen with equal probability from $\set{0,1}$, and independently of each other.    
	\end{itemize}
	Finally, the fact that no communication is needed for the sampling follows by construction: the players sample $\Sigma_{[m]\setminus\set{\istar}}$ and $X_{[m]\setminus\set{\istar}}$ using public randomness, which fixes the functions 
	$f_i$'s as well. The rest is obtained privately by each player embedding the parts of their own private inputs $Y$ and $\Tau$ in proper places of $X$ and $\Sigma$. 
\end{proof}

Next, we show that \Cref{alg:samplngch} successfully embeds the starting group 1 of instance $G$ of $\DHXOR_{\ww, t}$ in the group $\istar$ of the sampled graph $G'$ as required. 

\begin{claim}\label{clm:istaris1}
	In graph $G'$ in step (vi) of \Cref{alg:samplngch}, $z_{G'}(\istar) = z_{G}(1)$. 
\end{claim}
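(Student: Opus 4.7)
The plan is to unfold the definition of $z_{G'}(\istar)$ from Definition~\ref{def:multi-block} and then apply the two substitutions made in step~$(v)$ of Algorithm~\ref{alg:samplngch} to rewrite it as $z_G(1)$. No probabilistic reasoning is required here; the claim is a purely syntactic identity about how the inputs $(Y,\Tau)$ of the $\DHXOR$ instance have been embedded into $(X,\Sigma)$.

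Concretely, I would first write
\[
z_{G'}(\istar) \;=\; \bigoplus_{i \in [t]} x^i_{\sigma^i(\istar)}
\]
by definition. Next, applying the rule $\sigma^i(\istar) = f_i(\ttau^i(1))$ from step~$(v)$ gives
\[
z_{G'}(\istar) \;=\; \bigoplus_{i \in [t]} x^i_{f_i(\ttau^i(1))}.
\]
Since $\ttau^i$ is a permutation of $[\ww] = [m+1]$, the value $\ttau^i(1)$ lies in $[m+1]$, so the rule $x^i_{f_i(j)} = y^i_j$ (valid for all $j \in [m+1]$) applies with $j = \ttau^i(1)$, yielding
\[
z_{G'}(\istar) \;=\; \bigoplus_{i \in [t]} y^i_{\ttau^i(1)} \;=\; z_{G}(1),
\]
where the last equality is the definition of $z_G$ for the multi-block graph $G = \multiblock{Y,\Tau}$ evaluated at index $1$.

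There is no real obstacle: everything reduces to checking that the indices $\ttau^i(1)$ land in the domain $[m+1]$ on which the map $f_i$ was defined, so that the two substitutions from step~$(v)$ chain correctly. The one small thing to flag explicitly is that $f_i : [m+1] \to [2m] \setminus I^i_\circ$ is well-defined and the image $f_i(\ttau^i(1))$ is precisely the coordinate of $x^i$ that was set equal to $y^i_{\ttau^i(1)}$; this is immediate from step~$(v)$ but worth stating to make the chain of equalities transparent.
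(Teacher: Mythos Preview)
Your proof is correct and follows essentially the same approach as the paper: unfold $z_{G'}(\istar)$ by definition, apply the two substitutions $\sigma^i(\istar)=f_i(\ttau^i(1))$ and $x^i_{f_i(j)}=y^i_j$ from step~$(v)$, and recognize the result as $z_G(1)$. The paper's version is slightly terser (it omits the explicit check that $\ttau^i(1)\in[m+1]$), but the argument is identical.
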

\begin{proof}	
	We know for each $i \in [t]$, from step $(v)$,
	\[
	x^i_{\sigma^i(\istar)} = x^i_{f_i(\ttau^i(1))} = y^i_{\ttau^i(1)}.
	\]
	Hence,
	\begin{align*}
		z_{G'}(\istar) = \bigoplus_{i \in [t]} x^i_{\sigma^i(\istar)} 
		=  \bigoplus_{i \in [t]}  y^i_{\ttau^i(1)} = z_{G}(1).   \qedhere
	\end{align*}
\end{proof}

We can now use \Cref{lem:hybridsampling} and \Cref{clm:istaris1} to complete the proof of \Cref{lem:main-ngc}. 

\begin{proof}[Proof of \Cref{lem:main-ngc}]
	Let $\Pi$ be any protocol that solves $\NGC$ on distribution $\distNGC$ with probability at least 2/3. By \Cref{lem:hybridindex}, we know that there exists an $\istar \in [m]$ such that $\Pi$ distinguishes samples from $\distHNGC(\istar-1)$ and $\distHNGC(\istar)$ with probability at least $\frac12 + \frac1{6m}$. 
	
	Given an instance $G$ of $\DHXOR_{\ww, t}$ with $\ww = n/4k + 1$ and $t = (k-1)/3$, by \Cref{lem:hybridsampling}, in \Cref{alg:samplngch} graph $G'$ is sampled from $\distHNGC^* = \frac12\paren{\distHNGC(\istar-1) + \distHNGC(\istar)}$. \Cref{clm:istaris1} implies that identifying whether $G'$ is sampled from $\distHNGC(\istar-1)$ or $\distHNGC(\istar)$ is identical to solving $\DHXOR$ on the instance $G$. Hence Alice and Bob can run \Cref{alg:samplngch} to solve instances of $\DHXOR_{\ww, t}$ with probability at least $\frac12 + \frac1{6m} \geq \frac12 +\frac1{6w}$ since $w = m+1$. 
\end{proof}

\subsection{Proof of~\Cref{thm:main-ngc}}\label{thm:proof-main-ngc}

So far, we have proved that a lower bound for $\DHXOR_{\ww, t}$ with a low probability of success of $\frac12 + \frac1{6w}$ gives a lower bound for the distributional communication complexity of $\NGC_{n,k}$ for some values of $k$. We will state a lower bound for $\DHXOR_{\ww, t}$ even against this low probability, and use this to prove our main result \Cref{thm:main-ngc}. 

\begin{lemma}\label{lem:dhx-lower}
	For sufficiently large $\ww \geq 1$ and $t \geq 512 \cdot \ln w$, 
	\[
	\oDD{\distDHX}{\frac{1}{2}-\frac{1}{6\ww }}(\DHXOR_{\ww,t}) = \Omega(\ww \cdot t), 
	\]
	where $\distDHX$ is the implicit uniform distribution in the definition of $\DHXOR$. 
\end{lemma}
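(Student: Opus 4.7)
The plan is to show that any deterministic one-way protocol $\Pi$ for $\DHXOR_{\ww,t}$ with communication $c$ and advantage at least $1/(6\ww)$ on the uniform distribution $\distDHX$ must satisfy $c = \Omega(\ww t)$. The main tool is the KKL-type bound from~\Cref{prop:kklparity}. By~\Cref{prop:tvdsample} the hypothesized success probability translates to $\Exp_{\pi,E_B}[\bias{}{z_G(1)\mid\pi,E_B}] \geq 1/(3\ww)$, so by Cauchy--Schwarz $\Exp[\bias{}{z_G(1)\mid\pi,E_B}^2]\geq 1/(9\ww^2)$. It then suffices to establish the matching upper bound $\Exp[\text{bias}^2]=o(1/\ww^2)$ whenever $c=o(\ww t)$ and $t\geq 512\ln\ww$.

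First I would strengthen the protocol for free: reveal $k^i:=\sigma^i(1)$ to Bob for every $i\in[t]$, which cannot weaken the protocol. Condition on the random edge partition and on Bob's entire edge view (his XOR- and perm-matching edges, and hence the XOR-matching bits he already knows). A standard Chernoff argument, using $\ww t \gg \ln\ww$ and $t\geq 512\ln\ww$, shows that with probability $\geq 1-\ww^{-10}$: the set $U$ of XOR-matching slots whose two edges both lie in $E_A$ (so Bob has no information about the corresponding bits) satisfies $|U|\geq\ww t/8$, and the ``hard-row'' set $T := \{i\in[t]: (i,k^i)\in U\}$ satisfies $|T|\geq t/16$. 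I would additionally reveal the remaining entries of $\Sigma$ to both players for free; the only remaining uncertainty from Bob's view is then the string $X|_U \in\{0,1\}^{|U|}$, which is uniformly distributed. Since $z_G(1) = (\text{Bob-known bit})\oplus\bigoplus_{i\in T}(X|_U)_{(i,k^i)}$, the target is, up to a known shift, a parity over the subset $S=\{(i,k^i):i\in T\}\subseteq[|U|]$ of $|T|$ coordinates of Alice's uniform private input.

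Applying~\Cref{prop:kklparity} to the preimage $A_\pi\subseteq\{0,1\}^{|U|}$ of Alice's message, then taking square roots (by Jensen) and averaging over $\pi$ using $\Pr[\pi]=|A_\pi|/2^{|U|}$ (Alice's input is uniform), yields
\[
\Exp_{\pi,S}[\bias{A_\pi}{S}]\leq \frac{O(1)^{|T|/2}\sum_\pi \Pr[\pi]\,(-\log\Pr[\pi])^{|T|/2}}{|U|^{|T|/2}}\leq \Bigl(\frac{O(c)}{|U|}\Bigr)^{|T|/2},
\]
where the last inequality uses that, subject to $|\supp{\pi}|\leq 2^c$, the sum $\sum_\pi \Pr[\pi](-\log\Pr[\pi])^{|T|/2}$ is maximized by the uniform distribution on $2^c$ messages, where it equals $c^{|T|/2}$ (valid in the regime $c\geq|T|/2$; the complementary regime $c<|T|/2$ is handled by a strictly sharper direct concentration bound). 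Substituting $|U|\geq\ww t/8$ and $|T|\geq t/16$ gives $\Exp[\text{bias}]\leq (O(c/(\ww t)))^{t/32}$, and comparing with $\Exp[\text{bias}]\geq 1/(3\ww)$ and taking logarithms yields $\log(\ww t/O(c))\leq 32\log(3\ww)/t\leq O(1)$ for $t\geq 512\ln\ww$, which forces $c=\Omega(\ww t)$.

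The main technical obstacle is that the subset $S$ is not uniform over $|T|$-subsets of $[|U|]$: it has the structure ``one coordinate per row in $T$, with the column $k^i$ independently uniform over $[\ww]$''. Since~\Cref{prop:kklparity} is stated only for uniform $k$-subsets, I would either adapt the Fourier/hypercontractivity proof of~\cite{Wol08} to this product-like distribution (losing only an $O(1)^{|T|}$ factor, traceable through the standard proof) or reduce to the uniform case by a majorization step comparing one-per-row samples to uniform $|T|$-samples. A secondary subtlety is the averaging step $\sum_\pi \Pr[\pi](-\log\Pr[\pi])^{|T|/2}\leq c^{|T|/2}$, which requires verifying via a direct Lagrangian optimization that the uniform distribution on $2^c$ messages attains the constrained maximum in the relevant range of parameters.
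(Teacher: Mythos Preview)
Your approach has a fatal flaw in the step ``reveal the remaining entries of $\Sigma$ to both players for free.'' Once Alice knows all of $\Sigma$, she knows $k^i=\sigma^i(1)$ for every $i$, and since she also holds every bit of $X|_U$, she can simply send the single bit $\bigoplus_{i\in T} x^i_{k^i}$ (indeed, she can send $z_G(1)$ itself). The strengthened problem is thus solvable with $O(1)$ bits of communication, so no $\Omega(\ww t)$ lower bound can possibly be proved for it. More structurally, revealing $\Sigma$ to Alice makes the set $S=\{(i,k^i):i\in T\}$ known to her before she speaks, so $A_\pi$ can (and will) depend on $S$; \Cref{prop:kklparity} requires $S$ to be independent of the set $A$, and your displayed inequality $\Exp_{\pi,S}[\bias{A_\pi}{S}]\leq (O(c)/|U|)^{|T|/2}$ is simply false in this setup. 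The obstacle you flagged (that $S$ is one-per-row rather than a uniform $|T|$-subset) is real but secondary; the independence of $S$ from Alice's message is the issue you are missing.

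The paper's proof fixes exactly this point via a stronger notion of ``hidden'' slot. An index is called \emph{clean} only if, in addition to both XOR-matching edges going to Alice, all four surrounding perm-matching edges go to Bob (\Cref{def:clean-index}). For a clean index, Alice sees neither how group $1$ in the first layer is routed in nor how it is routed out; consequently, in an \emph{active} block (one where $\sigma^i(1)$ happens to be clean), $\sigma^i(1)$ remains uniform over $\clean{B_i}$ from Alice's viewpoint even after conditioning on everything Alice could know (\Cref{lem:active-block-pij}). The paper then conditions on all of $\Sigma$ \emph{except} the restriction to clean indices of active blocks, leaving Bob with private random permutations that determine $S$ independently of Alice's input and message. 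Your set $U$ captures only the XOR-matching half of cleanness and ignores the perm-matching half; this is precisely why you were forced to hand $\Sigma$ to Alice to make the target a parity of her private bits, which in turn destroys the argument.
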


The proof of \Cref{lem:dhx-lower} is the main technical step of our paper and is postponed to the next section. Here, we show that this lemma plus the previous steps concludes the proof of our main lower bound in~\Cref{thm:main-ngc}. 
\begin{proof}[Proof of \Cref{thm:main-ngc}]
	Our hard distribution $\distNGC$ only gives instances with $k  = 3t+1$ for some $t > 0$. However, we can extend it to a hard distribution for any $\NGC_{n,k}$ by a simple padding argument. If $k \equiv 0 \mod 3$, then we pick a sample from $\distNGC$ for $\NGC_{n-2n/k,k-2}$, and add two additional layers $(V^{-1}, V^0)$ with $n/k$ vertices in each. We add identity perfect matchings from $V^{-1}$ to $V^0$ and from $V^{0}$ to $V^1$. The auxiliary edges are added from layer $V^{k-2}$ to layer $V^{-1}$.  Similarly if $ k \equiv 2 \mod 3$, we sample from $\NGC_{n-n/k,k-1}$ and add one additional layer $V^0$ of $n/k$ vertices with an identity perfect matching from $V^0$ to $V^1$, and auxiliary edges from $V^{k-1}$ to $V^0$. Thus, 
	we can focus on $\distNGC$ for the rest of the proof. 
	
	For $\ww  = (n/4k)+1$ and $t \geq (k-3)/3 \geq 512 \cdot \ln n \geq 512 \cdot \ln \ww$, we have, 
	\begin{align*}
		\oDD{1/3}{\distNGC}(\NGC_{n,k}) \geq \oDD{\distDHX}{\frac{1}{2}-\frac{1}{6\ww }}(\DHXOR_{\ww,t}) = \Omega(n/k \cdot k) = \Omega(n),
	\end{align*}
	where the first inequality follows from \Cref{lem:main-ngc}, and the second bound follows from \Cref{lem:dhx-lower}. Applying Yao's minimax principle in~\Cref{prop:minmax} now implies that 
	\[
		\oRR{1/3}(\NGC_{n,k}) \geq \oDD{1/3}{\distNGC}(\NGC_{n,k}) = \Omega(n), 
	\]
	concluding the proof. 
\end{proof}

\begin{remark}[``Reducing the noise'']\label{rem:reduce}
\emph{
	The notion of ``noise'' in $\NGC$, namely, the extra $(k-1)$-length paths, is used in our decorrelation step. In particular, using $m$ noisy paths allowed 
	us to reduce the lower bound for $\NGC$ to a proving a ``low-probability'' lower bound for $\DHXOR$ on layers of size $m+1$ each. On the other hand, the lower bound for $\DHXOR$ stated in~\Cref{lem:dhx-lower} (and proven in the next section), is 
	$\Omega(\ww \cdot t)$ as long as the layers are of size $w$ (and has some large depth $t$). Thus, as evident from the proof, the choice of $m$ noisy paths in our definition of $\NGC$ is \underline{not sacrosanct} and can be replaced by any other $\Theta(m)$ and still 
	lead to the same asymptotic lower bound (we only chose $m$ for simplicity of exposition). As such, we can effectively obtain the same lower bound of $\Omega(n)$ for $\NGC$ even if we reduce the number of noisy paths to $o(n/k)$.
}

\emph{	
	We could have alternatively entirely removed the noisy paths using a similar reduction as in~\cite[Section 3.5]{AssadiKSY20}. That however would forced us to replace the role of XOR in our construction with gadgets that implement 
	arithmetic modulo constants larger than $2$ (say, $5$). This in turn, requires an analogue of~\Cref{prop:kklparity} for larger fields to prove our lower bound (which do exist, see, e.g.~\cite[Lemma 2.3]{GuruswamiT19}). While this approach \underline{seems plausible}, given the benign role of the noise and the fact that it virtually does not change any of the subsequent reductions from $\NGC$, we opted to forego this step in favor of a simpler and more direct proof. 
	}
\end{remark}

\newcommand{\clean}[1]{\ensuremath{\textnormal{\textsf{clean}}(#1)}\xspace}
\newcommand{\act}[1]{\ensuremath{\textnormal{\textsf{active}}(#1)}\xspace}

\clearpage
\section{A Lower Bound for Distributional Hidden-XOR}\label{sec:hxorlower}

This section focuses on our lower bound for $\DHXOR_{\ww, t}$ in \Cref{lem:dhx-lower}, restated below. 

\begin{lemma}[Restatement of \Cref{lem:dhx-lower}]
	For sufficiently large $\ww \geq 1$ and $t \geq 512 \cdot \ln \ww$, 
	\[
\oDD{\distNGC}{1/3}(\NGC_{n,k}) \geq \oDD{\distDHX}{\frac{1}{2}-\frac{1}{6\ww}}(\DHXOR_{\ww,t}), 
	\]
		where $\distDHX$ is the implicit uniform distribution in the definition of $\DHXOR$. 
\end{lemma}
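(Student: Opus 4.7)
The plan is to transform any deterministic protocol $\Pi$ solving $\NGC_{n,k}$ under $\distNGC$ with error at most $1/3$ into a protocol $\Pi'$ for $\DHXOR_{\ww,t}$ under $\distDHX$ achieving success probability at least $1/2 + 1/(6\ww)$, using no additional communication beyond what $\Pi$ uses. The heart of the argument is the decorrelation strategy: the structural guarantee of $\distNGC$ (that all $m$ XOR targets equal the same $\theta$) is a correlation artifact, so any successful protocol must in fact localize the signal to a single coordinate, which can then be pulled out as a decoupled $\DHXOR$ instance embedded into one specific position.

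First I would carry out a hybrid decomposition of $\distNGC$ along the chain $\distHNGC(0), \distHNGC(1), \ldots, \distHNGC(m)$ introduced earlier in Section~\ref{subsec:harddistribution}, using the fact that $\distNGC = \frac{1}{2}(\distHNGC(0) + \distHNGC(m))$ plus deterministic auxiliary edges. Since $\Pi$ decides between these two extremes with probability at least $2/3$, \Cref{prop:tvdsample} gives that the total variation distance between $\outs(\Pi)\mid\distHNGC(0)$ and $\outs(\Pi)\mid\distHNGC(m)$ is at least $1/3$. By the triangle inequality across the $m$ intermediate hybrids and averaging (exactly \Cref{lem:hybridindex}), there exists a coordinate $\istar \in [m]$ with $\tvd{\outs(\Pi)\mid\distHNGC(\istar-1)}{\outs(\Pi)\mid\distHNGC(\istar)} \geq 1/(6m)$. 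This $\istar$ is fixed by $\Pi$ alone and can therefore be hardcoded into the reduction.

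Second I would describe an embedding from a $\DHXOR_{\ww,t}$ instance $G = \multiblock{Y,\Tau}$ into an $\NGC_{n,k}$ instance distributed as $\distHNGC^\star := \frac{1}{2}(\distHNGC(\istar-1) + \distHNGC(\istar))$. Using public randomness, Alice and Bob jointly sample partial permutations $\sigma^i \in \mathcal{S}_{2m}^{[m]\setminus\{\istar\}}$ and partial strings $x^i$ on the preimages of those permutations satisfying the parity constraints $\bigoplus_i x^i_{\sigma^i(j)} = 0$ for $j<\istar$ and $=1$ for $\istar<j \leq m$. The remaining $m{+}1$ unassigned entries per layer, indexed via publicly fixed lexicographic maps $f_i\colon[m{+}1]\to[2m]\setminus I_\circ^i$, are then populated from the $\DHXOR$ input $(Y,\Tau)$, with group $1$ of $Y$ placed at position $\istar$ of $X$ and the remaining $m$ groups of $Y$ becoming the noise paths at positions $m{+}1,\ldots,2m$. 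Uniformity of $Y$ and $\Tau$ ensures (via \Cref{lem:hybridsampling}) that the resulting $G' = \multiblock{X,\Sigma}$ is distributed exactly as $\distHNGC^\star$, and (via \Cref{clm:istaris1}) that $z_{G'}(\istar) = z_G(1)$, so running $\Pi$ on $G'$ and interpreting its output via the maximum-likelihood distinguisher extracts $z_G(1)$ with the claimed advantage.

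The main obstacle will be that the reduction has to be faithful in the \emph{robust} communication model: the uniformly random edge partition of the $\DHXOR$ instance between Alice and Bob must lift to a uniformly random edge partition of the $\NGC$ instance. This is the delicate point where an ordinary distributional reduction would fail, but here it works because the embedding is an edge-by-edge local bijection: every edge of $G'$ is either an edge of the embedded $\DHXOR$ instance $G$ (already arriving at a random player under $\distDHX$) or an edge produced from public randomness together with the auxiliary edges (which the players can assign to themselves by extra public coin flips with no communication). Both classes carry the correct uniform marginal, and there are no cross-dependencies since the two sources of randomness are independent by construction; thus the induced joint distribution over $(G'_A,G'_B)$ coincides with the distribution under $\distNGC$ in the robust model, so $\Pi$'s success guarantee applies verbatim and the overall success probability is at least $1/2 + 1/(6m) \geq 1/2 + 1/(6\ww)$.
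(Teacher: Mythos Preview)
Your argument is correct, but it is a proof of \Cref{lem:main-ngc} (the reduction from $\NGC$ to $\DHXOR$), not of \Cref{lem:dhx-lower}. The restatement box at the top of Section~\ref{sec:hxorlower} contains a typo in the paper: the displayed inequality there is the conclusion of \Cref{lem:main-ngc}, whereas the actual statement of \Cref{lem:dhx-lower} is
\[
\oDD{\distDHX}{\frac{1}{2}-\frac{1}{6\ww}}(\DHXOR_{\ww,t}) = \Omega(\ww \cdot t).
\]
Your hybrid argument and embedding are essentially identical to the paper's proof of \Cref{lem:main-ngc} via \Cref{lem:hybridindex}, \Cref{lem:hybridsampling}, and \Cref{clm:istaris1}; the added remark about the random edge partition lifting correctly is a useful clarification that the paper leaves implicit.

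The proof the paper actually gives in Section~\ref{sec:hxorlower}, however, addresses the real content of \Cref{lem:dhx-lower} and is entirely different. It (i) introduces partition functions $(\Left,\Mid,\Right)$ and shows that with high probability each block has at least $\ww/100$ \emph{clean} indices and at least $\ln\ww$ blocks are \emph{active} (\Cref{lem:clean-index}, \Cref{lem:active-block}); (ii) conditions on these events and on all of $(X,\Sigma)$ except the clean part of each active block, leaving Alice with a uniform string in $\{0,1\}^{\clw\actt}$ and reducing the answer to the XOR of one uniformly random coordinate per active block; and (iii) applies the KKL-based bias bound of \Cref{prop:kklparity} to show that a message of length $o(\clw\actt)$ leaves this XOR with bias $o(1/\ww)$, contradicting the assumed advantage. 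None of that machinery appears in your proposal, so as a proof of \Cref{lem:dhx-lower} it is missing the entire argument.
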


Our one-way communication lower bound for $\DHXOR_{\ww, t}$ uses the random partitioning of edges between Alice and Bob crucially. We will first show an alternative way of viewing the partitioning the inputs $(X,\Sigma)$ to the multi-block graph in $\DHXOR_{\ww, t}$. Then, we describe configurations of partitions that are favorable to us in that they force Alice to send longer messages and show that these partitions occur with a high probability in instances of $\DHXOR_{\ww, t}$. We use these  to prove that the messages of Alice to Bob must have length $\Omega(\ww t)$ to solve $\DHXOR$ and obtain the final lower bound. 

\subsection{Input Partitioning and Active Blocks}\label{sec:inputactive}

In this section, we argue that the partitioning of the edges in $G$ can be done independently of $(X, \Sigma)$, and define some partitions for which Alice has to send a lot of information about the edges in $E_A$. We will also prove that such partitions of $E$ occur with a high probability. 

Let $G = \multiblock{X,\Sigma} \in \LG_{\ww,3t+1}$ for $X \in (\set{0,1}^{\ww})^{t}$ and $\Sigma \in (\mathcal{S}_\ww)^t$ be chosen independently and uniformly at random (as in the distribution of $\DHXOR_{\ww,t}$). 
Let $B_1,\ldots,B_t$ be the blocks in $G$. We consider the following process for partitioning the edges of each block between Alice and Bob. 

\begin{tbox}
	\textbf{An alternative way of partitioning edges in a block $B$ (see~\Cref{fig:partitions}).}
	\begin{enumerate}[label=$(\roman*)$]
		\item Let $(V^1,V^2,V^3,V^4)$ be the layers in $B$, each a copy of $[2\ww]$, and  $\Left,\Mid,\Right: [2\ww] \rightarrow \set{\Aedge,\Bedge}$ be three given functions.   
		\item Partition the edges of $B$ between Alice and Bob as follows: 
		\begin{enumerate}
			\item For any $u \in V_2$, send the edge $(w,u)$ for $w \in V_1$ to Alice if $\Left(u) = \Aedge$, else to Bob;
			\item For any $u \in V_2$, send the edge $(u,v)$ for $v \in V_3$ to Alice if $\Mid(u)=\Aedge$, else to Bob; 
			\item For any $v \in V_3$, send the edge $(v,z)$ for $z \in V_4$ to Alice if $\Right(v) = \Aedge$, else to Bob. 
		\end{enumerate}
	\end{enumerate}
\end{tbox}

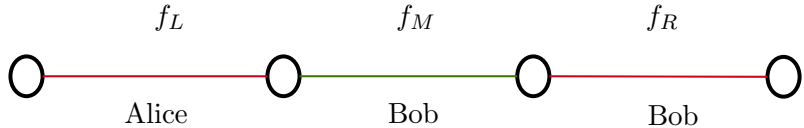
\begin{figure}[h!]
	\centering
	\tikzset{every picture/.style={line width=0.75pt}} 

\begin{tikzpicture}[x=0.75pt,y=0.75pt,yscale=-1,xscale=1]
	
	\draw  [line width=1.5]  (97,90.04) .. controls (97,84.49) and (100.63,80) .. (105.1,80) .. controls (109.57,80) and (113.2,84.49) .. (113.2,90.04) .. controls (113.2,95.58) and (109.57,100.08) .. (105.1,100.08) .. controls (100.63,100.08) and (97,95.58) .. (97,90.04) -- cycle ;
	\draw  [line width=1.5]  (226.6,90.04) .. controls (226.6,84.49) and (230.23,80) .. (234.7,80) .. controls (239.17,80) and (242.8,84.49) .. (242.8,90.04) .. controls (242.8,95.58) and (239.17,100.08) .. (234.7,100.08) .. controls (230.23,100.08) and (226.6,95.58) .. (226.6,90.04) -- cycle ;
	\draw  [line width=1.5]  (352.6,90.04) .. controls (352.6,84.49) and (356.23,80) .. (360.7,80) .. controls (365.17,80) and (368.8,84.49) .. (368.8,90.04) .. controls (368.8,95.58) and (365.17,100.08) .. (360.7,100.08) .. controls (356.23,100.08) and (352.6,95.58) .. (352.6,90.04) -- cycle ;
	\draw  [line width=1.5]  (478.8,90.27) .. controls (478.8,84.73) and (482.43,80.23) .. (486.9,80.23) .. controls (491.37,80.23) and (495,84.73) .. (495,90.27) .. controls (495,95.81) and (491.37,100.31) .. (486.9,100.31) .. controls (482.43,100.31) and (478.8,95.81) .. (478.8,90.27) -- cycle ;
	\draw [color={rgb, 255:red, 208; green, 2; blue, 27 }  ,draw opacity=1 ][fill={rgb, 255:red, 0; green, 0; blue, 0 }  ,fill opacity=1 ][line width=0.75]    (113.2,90.04) -- (226.6,90.04) ;
	\draw [color={rgb, 255:red, 65; green, 117; blue, 5 }  ,draw opacity=1 ][fill={rgb, 255:red, 0; green, 0; blue, 0 }  ,fill opacity=1 ][line width=0.75]    (242.8,90.04) -- (352.6,90.04) ;
	\draw [color={rgb, 255:red, 208; green, 2; blue, 27 }  ,draw opacity=1 ][fill={rgb, 255:red, 0; green, 0; blue, 0 }  ,fill opacity=1 ][line width=0.75]    (368.8,90.04) -- (478.8,90.27) ;
	
	\draw (168,52.4) node [anchor=north west][inner sep=0.75pt]    {$f_{L}$};
	\draw (290,52.4) node [anchor=north west][inner sep=0.75pt]    {$f_{M}$};
	\draw (416,52.4) node [anchor=north west][inner sep=0.75pt]    {$f_{R}$};
	\draw (153,102) node [anchor=north west][inner sep=0.75pt]   [align=left] {Alice};
	\draw (417,103) node [anchor=north west][inner sep=0.75pt]   [align=left] {Bob};
	\draw (285.89,102.2) node [anchor=north west][inner sep=0.75pt]  [rotate=-359.22] [align=left] {Bob};

\end{tikzpicture}\caption{An illustration of a partition for $j \in [2\ww]$ in block $B$ for $f_L(j) = \alpha, f_M(j) = \beta$ and  $f_R(j) = \beta$.}\label{fig:partitions}
\end{figure}

These functions, if picked at random, can be used to give a partition of the edges in $G$ obeying the conditions in \Cref{def:dhxor}, as shown in the following.  

\begin{observation}\label{obs:alternative-sampling}
	Let $G = \multiblock{X,\Sigma} \in \LG_{\ww,3t+1}$ and pick $t$ tuples of functions 
	\[
	\mathcal{F} = (\mathcal{F}_L,\mathcal{F}_M,\mathcal{F}_R) := (\Left^i,\Mid^i,\Right^i: [2\ww] \rightarrow \set{\Aedge,\Bedge})_{i=1}^{t}
	\]
	independently and uniformly at random. The partitioning process
	above using these functions leads to a uniform partitioning of edges of $G$ between Alice and Bob. 
\end{observation}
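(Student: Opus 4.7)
The plan is to verify that the described process is just a convenient bookkeeping device for an i.i.d.\ uniform assignment of edges to Alice and Bob, so the proof reduces to checking a bijection between edges of $G$ and function values. I will first enumerate the edges of $G$ by block and matching, then argue that each edge is controlled by exactly one function evaluation, and conclude by appealing to the independence and uniformity of the random functions.

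First, I would recall that $G = \multiblock{X,\Sigma}$ consists of $t$ blocks $B_1,\ldots,B_t$, where each block $B_i$ contributes three perfect matchings between consecutive layers: the left matching $M^{1,i} = \permmatch{\sigma^i}$ from layer $V^{3(i-1)+1}$ to $V^{3(i-1)+2}$, the middle matching $M^{2,i} = \xormatch{x^i}$ from $V^{3(i-1)+2}$ to $V^{3(i-1)+3}$, and the right matching $M^{3,i} = \permmatch{(\sigma^i)^{-1}}$ from $V^{3(i-1)+3}$ to $V^{3i+1}$. Every edge of $G$ belongs to exactly one $M^{j,i}$, so the edges partition block-by-block and matching-by-matching.

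Next, I would observe that for a fixed block $B_i$ the partitioning rule gives a clean bijection between edges and function values. For $M^{1,i}$, each vertex $u \in V^{3(i-1)+2}$ has a unique matching-neighbor $w \in V^{3(i-1)+1}$, and the edge $(w,u)$ is assigned to Alice iff $\Left^i(u) = \Aedge$; for $M^{2,i}$, the unique edge incident to $u \in V^{3(i-1)+2}$ in the middle matching is assigned by $\Mid^i(u)$; and for $M^{3,i}$, the unique edge incident to $v \in V^{3(i-1)+3}$ in the right matching is assigned by $\Right^i(v)$. Since each $M^{j,i}$ is a perfect matching, every edge is covered exactly once by this rule, and no two distinct edges are controlled by the same function value (different matchings use disjoint edge sets; within a single matching, different endpoints give different edges). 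Thus we obtain a bijection
\[
\Phi : E(G) \;\longrightarrow\; \bigsqcup_{i \in [t]} \bigl(V^{3(i-1)+2} \sqcup V^{3(i-1)+2} \sqcup V^{3(i-1)+3}\bigr),
\]
where edge $e$ is sent to Alice iff the corresponding function value under $\Phi$ equals $\Aedge$.

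Finally, because $\mathcal{F}$ consists of $3t$ functions $[2\ww] \to \set{\Aedge,\Bedge}$ drawn independently and uniformly, all $6\ww t$ function values are i.i.d.\ $\mathrm{Unif}(\set{\Aedge,\Bedge})$. Transporting this through $\Phi$ yields that each edge of $G$ is independently assigned to Alice or Bob with probability $1/2$, which is precisely the uniform random partition specified in the definition of $\DHXOR_{\ww,t}$. Crucially, this conclusion is agnostic to the choices of $X$ and $\Sigma$: the rule for which edge is assigned to which player depends only on the canonical labels $[2\ww]$ of each layer (fixed in~\Cref{def:layered-graph}) and on $\mathcal{F}$, so $\mathcal{F}$ may be sampled independently of $(X,\Sigma)$. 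There is no real obstacle here—the only care needed is to verify that the bookkeeping at layer boundaries is consistent, namely that edges incident to the shared layer $V^{3i+1} = V^{3(i+1-1)+1}$ are controlled by $\Right^i$ on their $V^{3(i-1)+3}$-endpoint in block $B_i$ and by $\Left^{i+1}$ on their $V^{3i+2}$-endpoint in block $B_{i+1}$, so that no edge is ever counted twice.
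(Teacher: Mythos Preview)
Your proof is correct and takes essentially the same approach as the paper, which simply observes in one sentence that the partitioning depends only on the functions $\Left,\Mid,\Right$ and these are chosen uniformly and independently. Your version spells out the bijection between edges and function values explicitly, which is a faithful elaboration of the same idea.
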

\begin{proof}
	The partitioning of edges only depends on the three functions $\Left, \Mid$ and $\Right$, and these functions are picked uniformly at random and independently. 
\end{proof}

The importance of these partitions is that $\mathcal{F}$ (henceforth referred to as \textbf{partition functions}) are chosen independently of $(X,\Pi)$. We now use this to state our main definitions in this section. 

\begin{Definition}[\textbf{Clean Indices}]\label{def:clean-index}
	Let $B$ be a block and  $\Left,\Mid,\Right: [2\ww] \rightarrow \set{\Aedge,\Bedge}$ be partition-functions. 
	We say that an index $j \in [\ww]$ is \textbf{clean} iff the following holds for vertices of groups $g^2_j = (a^2_j,b^2_j)$ and $g^3_j = (a^3_j,b^3_j)$ in layers two and three of $B$: 
	\[
	\Left(a^2_j) = \Left(b^2_j) = \Bedge, \quad \Mid(a^2_j) = \Mid(b^2_j) = \Aedge, \quad \text{and} \quad \Right(a^3_j) = \Right(b^3_j) = \Bedge.
	\]
	We let $\clean{B}$ denote the set of clean indices in $B$ under $(\Left,\Mid,\Right)$. 
\end{Definition}

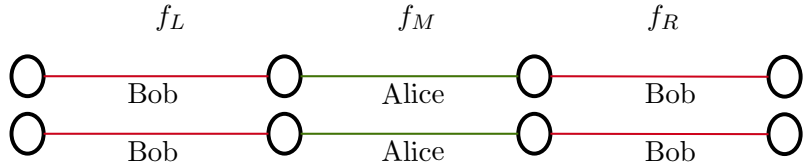
\begin{figure}[h!]
	\centering
	\tikzset{every picture/.style={line width=0.75pt}} 

\begin{tikzpicture}[x=0.75pt,y=0.75pt,yscale=-1,xscale=1]
	
	\draw  [line width=1.5]  (124,106.04) .. controls (124,100.49) and (127.63,96) .. (132.1,96) .. controls (136.57,96) and (140.2,100.49) .. (140.2,106.04) .. controls (140.2,111.58) and (136.57,116.08) .. (132.1,116.08) .. controls (127.63,116.08) and (124,111.58) .. (124,106.04) -- cycle ;
	\draw  [line width=1.5]  (253.6,106.04) .. controls (253.6,100.49) and (257.23,96) .. (261.7,96) .. controls (266.17,96) and (269.8,100.49) .. (269.8,106.04) .. controls (269.8,111.58) and (266.17,116.08) .. (261.7,116.08) .. controls (257.23,116.08) and (253.6,111.58) .. (253.6,106.04) -- cycle ;
	\draw  [line width=1.5]  (379.6,106.04) .. controls (379.6,100.49) and (383.23,96) .. (387.7,96) .. controls (392.17,96) and (395.8,100.49) .. (395.8,106.04) .. controls (395.8,111.58) and (392.17,116.08) .. (387.7,116.08) .. controls (383.23,116.08) and (379.6,111.58) .. (379.6,106.04) -- cycle ;
	\draw  [line width=1.5]  (505.8,106.27) .. controls (505.8,100.73) and (509.43,96.23) .. (513.9,96.23) .. controls (518.37,96.23) and (522,100.73) .. (522,106.27) .. controls (522,111.81) and (518.37,116.31) .. (513.9,116.31) .. controls (509.43,116.31) and (505.8,111.81) .. (505.8,106.27) -- cycle ;
	\draw [color={rgb, 255:red, 208; green, 2; blue, 27 }  ,draw opacity=1 ][fill={rgb, 255:red, 0; green, 0; blue, 0 }  ,fill opacity=1 ][line width=0.75]    (140.2,106.04) -- (253.6,106.04) ;
	\draw [color={rgb, 255:red, 65; green, 117; blue, 5 }  ,draw opacity=1 ][fill={rgb, 255:red, 0; green, 0; blue, 0 }  ,fill opacity=1 ][line width=0.75]    (269.8,106.04) -- (379.6,106.04) ;
	\draw [color={rgb, 255:red, 208; green, 2; blue, 27 }  ,draw opacity=1 ][fill={rgb, 255:red, 0; green, 0; blue, 0 }  ,fill opacity=1 ][line width=0.75]    (395.8,106.04) -- (505.8,106.27) ;
	\draw  [line width=1.5]  (124,135.04) .. controls (124,129.49) and (127.63,125) .. (132.1,125) .. controls (136.57,125) and (140.2,129.49) .. (140.2,135.04) .. controls (140.2,140.58) and (136.57,145.08) .. (132.1,145.08) .. controls (127.63,145.08) and (124,140.58) .. (124,135.04) -- cycle ;
	\draw  [line width=1.5]  (253.6,135.04) .. controls (253.6,129.49) and (257.23,125) .. (261.7,125) .. controls (266.17,125) and (269.8,129.49) .. (269.8,135.04) .. controls (269.8,140.58) and (266.17,145.08) .. (261.7,145.08) .. controls (257.23,145.08) and (253.6,140.58) .. (253.6,135.04) -- cycle ;
	\draw  [line width=1.5]  (379.6,135.04) .. controls (379.6,129.49) and (383.23,125) .. (387.7,125) .. controls (392.17,125) and (395.8,129.49) .. (395.8,135.04) .. controls (395.8,140.58) and (392.17,145.08) .. (387.7,145.08) .. controls (383.23,145.08) and (379.6,140.58) .. (379.6,135.04) -- cycle ;
	\draw  [line width=1.5]  (505.8,135.27) .. controls (505.8,129.73) and (509.43,125.23) .. (513.9,125.23) .. controls (518.37,125.23) and (522,129.73) .. (522,135.27) .. controls (522,140.81) and (518.37,145.31) .. (513.9,145.31) .. controls (509.43,145.31) and (505.8,140.81) .. (505.8,135.27) -- cycle ;
	\draw [color={rgb, 255:red, 208; green, 2; blue, 27 }  ,draw opacity=1 ][fill={rgb, 255:red, 0; green, 0; blue, 0 }  ,fill opacity=1 ][line width=0.75]    (140.2,135.04) -- (253.6,135.04) ;
	\draw [color={rgb, 255:red, 65; green, 117; blue, 5 }  ,draw opacity=1 ][fill={rgb, 255:red, 0; green, 0; blue, 0 }  ,fill opacity=1 ][line width=0.75]    (269.8,135.04) -- (379.6,135.04) ;
	\draw [color={rgb, 255:red, 208; green, 2; blue, 27 }  ,draw opacity=1 ][fill={rgb, 255:red, 0; green, 0; blue, 0 }  ,fill opacity=1 ][line width=0.75]    (395.8,135.04) -- (505.8,135.27) ;
	
	\draw (195,68.4) node [anchor=north west][inner sep=0.75pt]    {$f_{L}$};
	\draw (317,68.4) node [anchor=north west][inner sep=0.75pt]    {$f_{M}$};
	\draw (443,68.4) node [anchor=north west][inner sep=0.75pt]    {$f_{R}$};
	\draw (181,108) node [anchor=north west][inner sep=0.75pt]   [align=left] {Bob};
	\draw (442,108) node [anchor=north west][inner sep=0.75pt]   [align=left] {Bob};
	\draw (308.89,108.2) node [anchor=north west][inner sep=0.75pt]  [rotate=-359.22] [align=left] {Alice};
	\draw (181,137) node [anchor=north west][inner sep=0.75pt]   [align=left] {Bob};
	\draw (442,137) node [anchor=north west][inner sep=0.75pt]   [align=left] {Bob};
	\draw (308.89,137.2) node [anchor=north west][inner sep=0.75pt]  [rotate=-359.22] [align=left] {Alice};

\end{tikzpicture}\caption{An illustration of the partitions for a clean index from \Cref{def:clean-index}.}\label{fig:clean}
\end{figure}

For any clean index $j \in [\ww]$ in block $B = \block{x, \sigma}$, it is clear that only Alice has access to $x_j$, and Bob knows the inverse of $j$ under $\sigma$. Alice may know the inverse based on the other edges, but the edges from $g^1_{\invsigma(j)}$ to $g^{2}_{j} $ and the edges from $g^3_{j}$ to $g^4_{\invsigma(j)}$ are given only to Bob. We  prove that there are a lot of clean indices in any block $B$ using a simple application of Chernoff bound. 

\begin{lemma}\label{lem:clean-index}
	For any fixed block $B$ and a random choice of $\mathcal{F} = (\Left,\Mid,\Right: [2\ww] \rightarrow \set{\Aedge,\Bedge})$: 
	\[
	\Pr\Paren{\card{\clean{B}} \geq \ww/100} \geq 1- 1/\ww^4.
	\] 
\end{lemma}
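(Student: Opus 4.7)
The plan is to show that the indicator random variables $\mathbf{1}[j \in \clean{B}]$ for $j \in [\ww]$ are mutually independent Bernoulli's with a constant success probability, and then apply the Chernoff bound of~\Cref{prop:chernoff}. The key structural observation is that for each $j \in [\ww]$, the six conditions defining cleanness involve the values of $\Left, \Mid, \Right$ on the four specific vertices $a^2_j, b^2_j, a^3_j, b^3_j$ (two from layer $V^2$ and two from layer $V^3$). Since the groups $\{g^2_j\}_{j \in [\ww]}$ partition $V^2$ and $\{g^3_j\}_{j \in [\ww]}$ partition $V^3$, the vertex sets involved across distinct $j$'s are disjoint, and therefore the cleanness events depend on disjoint subsets of the (uniformly random, independent) values of $\mathcal{F}$.

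First I would compute the marginal probability. For a fixed $j$, each of the six equalities $\Left(a^2_j) = \Bedge$, $\Left(b^2_j) = \Bedge$, $\Mid(a^2_j) = \Aedge$, $\Mid(b^2_j) = \Aedge$, $\Right(a^3_j) = \Bedge$, $\Right(b^3_j) = \Bedge$ holds independently with probability $1/2$, giving $\Pr[j \in \clean{B}] = 2^{-6} = 1/64$. Hence $\Exp\bracket{\card{\clean{B}}} = \ww/64$ and $\card{\clean{B}}$ is a sum of $\ww$ independent $0/1$-random variables.

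Next I would apply~\Cref{prop:chernoff} with $\eps = 1 - (64/100) = 9/25$, using that $\ww/100 = (1-\eps) \cdot \Exp\bracket{\card{\clean{B}}}$:
\[
\Pr\Paren{\card{\clean{B}} \leq \ww/100} \leq \eexp{-\frac{(9/25)^2 \cdot (\ww/64)}{3}} = \eexp{-\frac{81 \, \ww}{25^2 \cdot 192}} \leq \eexp{-\ww/6000}.
\]
For sufficiently large $\ww$, the right-hand side is at most $1/\ww^4$ (which requires only $\ww \geq 24000 \cdot \ln \ww$, comfortably satisfied in the regimes considered), completing the bound.

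There is no real obstacle here; the proof is essentially a Chernoff calculation. The only subtlety is verifying the independence across different $j$'s, which is immediate from the fact that $\clean{B}$ is defined in terms of partition-function values at vertices of distinct groups $g^2_j, g^3_j$, and the partition functions $\Left, \Mid, \Right$ assign values to each vertex independently and uniformly.
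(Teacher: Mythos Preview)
Your proposal is correct and follows essentially the same approach as the paper: both compute $\Pr[j \in \clean{B}] = 2^{-6}$, observe independence across $j$, and apply the Chernoff bound of~\Cref{prop:chernoff} with $\eps = 36/100$ to the sum of indicators. Your treatment is in fact slightly more careful, since you explicitly justify independence via disjointness of the vertex sets $\{a^2_j,b^2_j,a^3_j,b^3_j\}$ across different $j$, and you note the ``sufficiently large $\ww$'' caveat that the paper leaves implicit.
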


\begin{proof}
	Let $\indclean_j$ be the indicator random variable for whether index $j \in [\ww]$ is clean in block $B$ under the choice of $\mathcal{F}$. We want to prove that $\indclean = \sum_{j \in [\ww]} \indclean_j$ is at least $\ww/100$ with high probability. There are six edges associated with each index $j $, and each of these edges must be distributed appropriately for $j$ to be clean. The functions are chosen independently and uniformly at random. Hence $\indclean_j = 1$ with probability $1/2^6$, and 0 otherwise for each $j \in [\ww]$ independently.  Chernoff bound from \Cref{prop:chernoff} is applicable here. We know $\expect{\indclean} = \sum_{j=1}^{\ww} \expect{\indclean_j} = \ww/2^6$. Thus,
	\begin{align*}
		\prob{\indclean < \frac{\ww}{100}} &= \prob{\indclean < \paren{1-\frac{36}{100}} \cdot \expect{\indclean}} \leq \exp\paren{\frac{-1}{3}\cdot \frac{0.36^2 \cdot \ww}{64}} \leq \frac1{\ww^4}. \qedhere
	\end{align*} 
	
\end{proof}

To make the analysis simpler, we restrict the number of clean indices in any block $B_i$ for $i \in [t]$ that has at least $w/100$ clean indices to be at most 
\begin{align}
\clw := \frac{w}{100}, \label{eq:clw}
\end{align}
 by  picking the lexicographically-first $\clw$ clean indices and discarding the rest (namely, no longer call them as clean indices). This process again depends only on the partition functions $\mathcal{F}$. 

In any block $B$, if there are a lot of clean indices, from the perspective of Alice, a lot of edges are missing from $V^1$ to $V^2$ and from $V^3$ to $V^4$.  Intuitively, any input index $j$ mapped to a clean index under $\sigma$ is tough for Bob to track, as he does not have any information about $x_{\sigma(j)}$. Alice cannot send $x_{\sigma(j)}$ because she has multiple choices for what $\sigma(j)$ could be.  Thus, if the starting index is mapped to a clean index, the value of $x$ at a lot of indices should be sent to Bob to communicate $x_{\sigma(1)}$. We formalize this intuition with active blocks as defined below.

\begin{Definition}[\textbf{Active Blocks}]\label{def:active-block}
	Let $B=\block{x,\sigma}$  and  $\Left,\Mid,\Right: [2\ww] \rightarrow \set{\Aedge,\Bedge}$ be partition-functions. 
	We say that $B$ is  \textbf{active} if the index $\sigma(1)$ is clean. 
	
	For $G = \multiblock{X,\Sigma}$ and $(\mathcal{F}_L,\mathcal{F}_M,\mathcal{F}_R)$, we define $\act{G}$ as the active blocks in $G$. 
\end{Definition}

\begin{figure}[h!]
	\centering
	\input{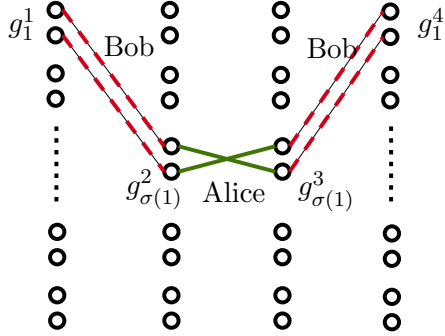}\caption{Illustration of an active block where the dashed (red) edges are given to Bob and the normal (green) edges are given to Alice from \Cref{def:active-block}.}\label{fig:active}
\end{figure}

In an active block $B$, from the perspective of Alice, there are no edges out of group $g^1_1$ and no edges into group $g^4_1$. She does not know the index $\sigma(1)$, and all indices $j \in \clean{B}$ are viable choices for $\sigma(1)$, as they do not have any edges into $g^2_j$ or out of $g^3_j$. The following lemma proves that $\sigma(1)$ can be any of these choices with equal probability. 

\begin{lemma}\label{lem:active-block-pij}
	Suppose we sample $B = \block{x,\sigma}$ and $\Left,\Mid,\Right: [2\ww] \rightarrow \set{\Aedge,\Bedge}$ conditioned on $B$ being active, and the partition functions $\mathcal{F} = (\Left, \Mid, \Right)$. 
	Then, $\sigma(1)$ is chosen uniformly at random from $\clean{B}$.
\end{lemma}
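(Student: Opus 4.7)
The plan is to leverage the independence between the partition functions $\mathcal{F} = (f_L, f_M, f_R)$ and the random inputs $(x, \sigma)$ to the block, combined with the symmetry of the uniform distribution on permutations.

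First, I would recall from~\Cref{obs:alternative-sampling} and the setup of~\Cref{sec:inputactive} that the partition functions $\mathcal{F}$ are sampled uniformly at random \emph{independently} of the inputs $(x, \sigma)$ defining the block. The set $\clean{B}$ of clean indices depends only on $\mathcal{F}$ (see~\Cref{def:clean-index}), so conditioning on $\mathcal{F}$ fixes $\clean{B} \subseteq [\ww]$ deterministically. Crucially, conditioning on $\mathcal{F}$ does not alter the marginal distribution of $\sigma$, which therefore remains uniform over $\mathcal{S}_\ww$.

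Next, I would observe that the event "$B$ is active" is precisely the event $\sigma(1) \in \clean{B}$ by~\Cref{def:active-block}. Since $\sigma$ is uniform on $\mathcal{S}_\ww$, the marginal distribution of $\sigma(1)$ is uniform over $[\ww]$; in particular, for every $j \in \clean{B}$,
\[
\Pr\bigl(\sigma(1) = j \mid \mathcal{F}\bigr) = \frac{1}{\ww}.
\]
By Bayes' rule, for any $j \in \clean{B}$,
\[
\Pr\bigl(\sigma(1) = j \,\big|\, \sigma(1) \in \clean{B},\, \mathcal{F}\bigr) = \frac{\Pr(\sigma(1) = j \mid \mathcal{F})}{\Pr(\sigma(1) \in \clean{B} \mid \mathcal{F})} = \frac{1/\ww}{|\clean{B}|/\ww} = \frac{1}{|\clean{B}|},
\]
which yields the claimed uniform distribution of $\sigma(1)$ over $\clean{B}$.

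I do not expect any real obstacle here; the entire argument is a one-line symmetry computation. The only subtlety worth spelling out is the independence of $\mathcal{F}$ from $(x,\sigma)$, which is the key reason the conditioning on "active" does not bias $\sigma(1)$ within $\clean{B}$. (Note that $x$ plays no role in this lemma at all and is included only because $B = \block{x, \sigma}$ is defined in terms of both.)
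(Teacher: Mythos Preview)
Your proposal is correct and follows essentially the same approach as the paper: both argue that $\clean{B}$ depends only on $\mathcal{F}$ (hence is fixed once $\mathcal{F}$ is conditioned on), that $\sigma(1)$ remains uniform over $[\ww]$ by independence, and then apply Bayes' rule to the event $\sigma(1)\in\clean{B}$ to obtain the uniform conditional distribution over $\clean{B}$.
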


\begin{proof}
	First observe that any block $B$ is active with probability $\clean{B}/{\ww}$, since index $\sigma(1)$ is chosen uniformly at random from $[\ww]$ and $\sigma$ is sampled independently of partition functions $\mathcal{F}$. The set $\clean{B}$ depends only on $\mathcal{F}$. 
	
	Let $c $ be any index in $\clean{B}$. Then,
	\begin{align*}
		\prob{\sigma(1) = c \mid \textnormal{$B$ is active}} &= \frac{\prob{\sigma(1) = c }}{\prob{ \textnormal{$B$ is active}}} \tag{$c \in \clean{B}$, $\sigma(1)=c$ implies $B$ is active}  \\
		&= \prob{\sigma(1)=c} \cdot\frac{\ww}{\clean{B}} \tag{as $B$ is active with probability $\frac{\clean{B}}{\ww}$}  \\
		&= \frac1{\ww} \cdot \frac{\ww}{\clean{B}}  \tag{as $\sigma(1)$ is uniformly random over $[w]$} \\
		&= \frac1{\clean{B}}.  \qedhere
	\end{align*}
\end{proof}

We can also prove that the total number of active blocks in any $G = \multiblock{X, \Sigma}$ is high. 

\begin{lemma}\label{lem:active-block}
	For $G = \multiblock{X,\Sigma}$ and $\mathcal{F}$ sampled in $\DHXOR$, with $t \geq 512 \cdot \ln \ww$: 
	\[
	\Pr\Paren{\card{\act{G}} \geq \ln{\ww}} \geq 1-1/\ww^2. 
	\]
\end{lemma}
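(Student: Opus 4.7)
The plan is to write $\card{\act{G}}$ as a sum of $t$ independent Bernoulli indicators---one per block---each with expectation bounded away from $0$, and then conclude by a standard Chernoff bound.

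The first step is a structural independence claim. By the alternative partitioning scheme of Observation~\ref{obs:alternative-sampling}, the random partition of the edges of $G$ is generated by $t$ independent triples of functions $\mathcal{F}_i = (f_L^i, f_M^i, f_R^i)$, one per block. Under $\distDHX$, the block contents $(x^i, \sigma^i)$ are themselves mutually independent and independent of $\mathcal{F} = (\mathcal{F}_1,\dots,\mathcal{F}_t)$. Hence the tuples $(x^i, \sigma^i, \mathcal{F}_i)$ for $i \in [t]$ are jointly independent, and so are the events $A_i := \{B_i \text{ is active}\}$, since $A_i$ depends only on the $i$-th tuple---in fact only on $\sigma^i(1)$ and $\mathcal{F}_i$.

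Next I would lower bound $\prob{A_i}$ for each fixed $i$. The set $\clean{B_i}$ is a deterministic function of $\mathcal{F}_i$, whereas $\sigma^i(1)$ is uniform over $[\ww]$ and independent of $\mathcal{F}_i$ (as $\sigma^i$ is a uniformly random permutation of $[\ww]$). Consequently,
\[
\prob{A_i} \;=\; \expect{\card{\clean{B_i}}/\ww}.
\]
By Lemma~\ref{lem:clean-index}, combined with the cap $\card{\clean{B_i}} \leq \clw = \ww/100$ imposed right after it, $\card{\clean{B_i}} = \ww/100$ with probability at least $1 - 1/\ww^4$, so $\prob{A_i} \geq (1 - 1/\ww^4)/100 \geq 1/101$ for all sufficiently large $\ww$.

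Finally, let $Y_i := \mathbf{1}\{A_i\}$ and $Y := \card{\act{G}} = \sum_{i=1}^{t} Y_i$. The $Y_i$ are independent $\{0,1\}$-random variables with $\mu := \expect{Y} \geq t/101 \geq 5\ln \ww$ under the hypothesis $t \geq 512 \ln \ww$. A multiplicative Chernoff bound in the form of Proposition~\ref{prop:chernoff} applied with $\eps = 1 - (\ln \ww)/\mu \geq 4/5$ (using a slightly tighter KL-form to sharpen the constant if needed) then yields $\prob{Y < \ln \ww} \leq 1/\ww^2$, as required. The choice of $512$ and $100$ leaves just enough slack for this final step; the only real obstacle is the structural independence claim in the second paragraph, after which the rest is routine concentration.
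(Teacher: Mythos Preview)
Your plan is correct and matches the paper's approach: the events $A_i := \{B_i$ is active$\}$ are independent across $i$ (each depends only on $(\sigma^i,\mathcal{F}_i)$), and one concludes by Chernoff. The only difference is in computing $\prob{A_i}$. The paper does not go through $\expect{\card{\clean{B_i}}}/\ww$ and Lemma~\ref{lem:clean-index}; it simply observes that $\sigma^i(1)$ is clean iff the six relevant values of $(\Left^i,\Mid^i,\Right^i)$ at index $\sigma^i(1)$ take the pattern of Definition~\ref{def:clean-index}, which---since these values are uniform in $\{\Aedge,\Bedge\}$ and independent of $\sigma^i$---happens with probability exactly $1/2^6 = 1/64$. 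This gives $\expect{\card{\act{G}}} \geq t/64 \geq 8\ln\ww$, and then Proposition~\ref{prop:chernoff} with $\eps = 7/8$ yields $\exp\!\bigl(-\tfrac{49}{64}\cdot\tfrac{8\ln\ww}{3}\bigr) \leq 1/\ww^2$ directly, with no need for a sharper Chernoff form. Your route through the capped clean set yields only $\prob{A_i} \geq 1/101$ and hence $\mu \approx 5\ln\ww$, which is why you have to hedge with the KL form; the paper's direct $1/64$ computation avoids both the detour through Lemma~\ref{lem:clean-index} and that constant shortfall.
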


\begin{proof}
	Block $B_i$ being active is independent of other blocks for $i \in [t]$, as it depends only on $\sigma^i$ and $\Left^i, \Mid^i, \Right^i$. For any $i \in [t]$,
	\begin{align*}
		\prob{B_i \textnormal{ is active}} &= \prob{\sigma^i(1) \textnormal{ is clean}}  = \frac1{2^6},
	\end{align*}
	as there are 6 edges associated with whether any index is clean, and all these edges must be partitioned according to \Cref{def:clean-index}. 
	Let $R_i$ be the indicator variable for whether block $B_i$ is active. We want to prove that $R = \sum_{i=1}^t R_i$ is larger than $\ln \ww$ with high probability. By linearity of expectation, $\expect{R} = t \cdot \frac1{2^6}  \geq 8 \ln \ww$. Chernoff bound from \Cref{prop:chernoff} can be used here as $R_i$s are independent. 
	\begin{align*}
		\prob{R \geq  \ln \ww} &\geq \prob{R \geq \paren{1-\frac78} \cdot \expect{R}} \geq 1-\exp\paren{\frac{-49}{64} \cdot \frac{8 \ln \ww}{3}} \geq 1- \frac1{\ww^2}. \qedhere
	\end{align*}
\end{proof}

We get our lower bound from graphs with a lot of active blocks, and we have proved that all graphs have at least $ \ln \ww$ active blocks with high probability. In any active block $B_i$, Bob has no information about $x^i_{\sigma(1)}$ as argued next. 
\begin{observation}\label{obs:cond-X}
	Conditioned on a choice of $\act{G}$ for $G=\multiblock{X,\Sigma} \sim \DHXOR$, $X$ is still chosen uniformly at random. 
\end{observation}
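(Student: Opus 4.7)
The proof plan is essentially to unwind the independence structure of the sampling in $\DHXOR$ and verify that the event determining $\act{G}$ does not involve $X$. The key observation is that by~\Cref{def:active-block}, block $B_i$ is active if and only if $\sigma^i(1) \in \clean{B_i}$, and by~\Cref{def:clean-index}, whether an index $j$ belongs to $\clean{B_i}$ is determined solely by the three partition functions $(\Left^i, \Mid^i, \Right^i)$ restricted to the groups of the middle layers of $B_i$. In particular, the cleanness of any index—and hence activeness of any block—is a function of $(\Sigma, \mathcal{F})$ alone, with no dependence on the strings $X$.

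Having established that $\act{G}$ is measurable with respect to $(\Sigma, \mathcal{F})$, I would then invoke the joint independence of $X$, $\Sigma$, and $\mathcal{F}$ in the sampling of the $\DHXOR$ input: by~\Cref{def:dhxor}, $X \in (\set{0,1}^\ww)^t$ and $\Sigma \in (\mathcal{S}_\ww)^t$ are drawn independently and uniformly at random, and by~\Cref{obs:alternative-sampling}, the partition functions $\mathcal{F}$ can be taken to be sampled independently of $(X, \Sigma)$. Consequently $X$ is independent of $(\Sigma, \mathcal{F})$, and therefore independent of any function of $(\Sigma, \mathcal{F})$, including $\act{G}$. Conditioning on the value of $\act{G}$ thus leaves the marginal distribution of $X$ unchanged.

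There is no real obstacle here beyond carefully verifying that the definitions of cleanness and activeness truly do not reference $X$; once that is pointed out, the claim is immediate from independence. The main utility of the observation is downstream, where it ensures that after fixing $\act{G}$ we may still treat the bits of $X$ on the (random) active blocks as uniform and independent, which is exactly what is needed to invoke the bias/KKL machinery from~\Cref{prop:kklparity} in the subsequent lower bound argument for $\DHXOR$.
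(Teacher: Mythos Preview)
Your proposal is correct and follows essentially the same approach as the paper: you observe that activeness of a block is a function of $(\Sigma,\mathcal{F})$ alone (not of $X$), and then use the independence of $X$ from $(\Sigma,\mathcal{F})$ to conclude. The paper's proof is the same argument stated in two sentences.
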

\begin{proof}
	Whether a block is active depends only on $\Sigma$ and the partition functions $\mathcal{F}$. It is independent of $X$, thus conditioned on a choice of active blocks, $X$ is still chosen uniformly at random. 
\end{proof}

This concludes our section on the partitioning of the edges in $G$. In the next section, we will use the presence of a lot of active blocks to prove \Cref{lem:dhx-lower}. 

\subsection{The Lower Bound} 

In this section we prove a lower bound for $\DHXOR_{\ww, t}$, as stated in \Cref{lem:dhx-lower} against a probability of success $\frac12 + \frac1{6\ww}$. First, we condition the input graph so that only the edges connected to the clean indices remain random, decreasing the probability of success only slightly. 
Then we show that even with this conditioning, Bob can only have a very low chance of guessing $z_G(1)$ correctly, by relying on~\Cref{prop:kklparity}. 

We know that a lot of active blocks exist and each of these blocks contains a lot of clean indices in any sample $G$ from $\DHXOR_{\ww,t}$ with high probability from the previous section. First, let us condition on this event as the first step towards proving \Cref{lem:dhx-lower}.  

\newcommand{\eventA}{\event_{\texttt{active}}}
\newcommand{\eventC}{\event_{\texttt{clean}}}

\begin{claim}\label{clm:cond-act-g}
	Any protocol $\Pi$ solving $\DHXOR_{\ww, t}$ for $t \geq 2^9 \ln{w}$ with probability at least $\frac12+ \frac1{6w}$, solves $\DHXOR_{\ww, t}$ even conditioned on input graph $G$ satisfying $\card{\act{G}} \geq \ln \ww$ and $\card{\clean{B_i}} \geq \frac{\ww}{100}$ for each $i \in [t]$ with probability at least $\frac12 + \frac1{8\ww} $ for sufficiently large $\ww$. 
\end{claim}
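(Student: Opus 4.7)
\textbf{Proof plan for Claim~\ref{clm:cond-act-g}.} This is essentially a routine conditioning step, so I would dispatch it with the standard total-probability manipulation, leveraging Lemmas~\ref{lem:clean-index} and~\ref{lem:active-block} already proven in the previous subsection.

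The plan is to define the ``good event'' $\event$ that the input pair $(G, \mathcal{F})$ satisfies both $\card{\act{G}} \geq \ln \ww$ \emph{and} $\card{\clean{B_i}} \geq \ww/100$ for every $i \in [t]$. First I would lower bound $\Pr[\event]$. By \Cref{lem:active-block} the active-block condition fails with probability at most $1/\ww^2$. By \Cref{lem:clean-index} applied to each block and a union bound over $i \in [t]$, the clean-count condition fails on at least one block with probability at most $t/\ww^4$. Hence
\[
\Pr[\neg \event] \leq \frac{1}{\ww^2} + \frac{t}{\ww^4}.
\]
For the range of $t$ that is relevant to the lemma (any $t$ at most polynomial in $\ww$, which is the only regime where the conclusion of~\Cref{lem:dhx-lower} is meaningful), and for $\ww$ sufficiently large, this is bounded by $\frac{1}{24\ww}$.

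Next I would convert an unconditional success probability into a conditional one via the standard manipulation. Writing $p$ for the unconditional success probability of $\Pi$,
\[
p = \Pr[\Pi \text{ correct} \cap \event] + \Pr[\Pi \text{ correct} \cap \neg\event] \leq \Pr[\Pi \text{ correct} \cap \event] + \Pr[\neg\event],
\]
so $\Pr[\Pi \text{ correct} \cap \event] \geq p - \Pr[\neg\event]$. Using $\Pr[\Pi \text{ correct} \mid \event] \geq \Pr[\Pi \text{ correct} \cap \event]$ (since $\Pr[\event] \leq 1$), we get
\[
\Pr[\Pi \text{ correct} \mid \event] \;\geq\; p - \Pr[\neg\event] \;\geq\; \frac{1}{2} + \frac{1}{6\ww} - \paren{\frac{1}{\ww^2} + \frac{t}{\ww^4}} \;\geq\; \frac{1}{2} + \frac{1}{8\ww},
\]
where the last step plugs in the bound on $\Pr[\neg\event]$ derived above and uses $\tfrac{1}{6\ww} - \tfrac{1}{8\ww} = \tfrac{1}{24\ww}$.

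There is no real obstacle here; the argument is a direct averaging/union-bound calculation. The only point worth flagging is that the lemma is stated for ``sufficiently large $\ww$'', and this is exactly what I use to absorb the constants $1/\ww^2$ and $t/\ww^4$ into $1/(24\ww)$; if desired one could be pedantic and restrict attention to $t \leq \text{poly}(\ww)$ (which is without loss of generality for proving an $\Omega(\ww \cdot t)$ communication lower bound, since otherwise the target bound $\Omega(\ww t)$ already exceeds trivial upper bounds on one-way protocols).
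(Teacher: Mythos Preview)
Your proposal is correct and follows essentially the same approach as the paper: define the good event, bound the bad-event probability via \Cref{lem:active-block} plus a union bound over blocks with \Cref{lem:clean-index}, and then extract the conditional success probability by the standard total-probability manipulation. The paper's write-up is nearly identical, the only cosmetic difference being that it invokes the (implicit) bound $t < \ww$ to simplify $t/\ww^4 \leq 1/\ww^3$ rather than your more general $t \leq \poly(\ww)$ remark.
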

\begin{proof}
	Let $\Pi(G)$ denote the output of protocol $\Pi$ on input graph $G$. Define: 
	\begin{itemize}
	\item event $\eventA$:  the event that there are at least $\ln \ww$ active blocks in $G$;
	\item  event $\eventC$: the event that each block in $G$ contains at least $\clw = \ww/100$ clean indices. 
	\end{itemize}
	Then, for any input $G$,
	\begin{align*}
		\prob{\Pi(G) = z_G(1)} &\leq \prob{\neg \eventA \textnormal{ or }\neg \eventC}+ \prob{\eventA, \eventC} \cdot \prob{\Pi(G) = z_G(1) \mid  \eventA, \eventC} \\
		&\leq \prob{\neg \eventA} +\prob{\neg \eventC} + \prob{\Pi(G) = z_G(1) \mid  \eventA, \eventC} \\
		&\leq \frac1{w^2} +  \prob{\neg \eventC} + \prob{\Pi(G) = z_G(1) \mid \eventA, \eventC} \tag{by \Cref{lem:active-block}} \\
		&\leq \frac1{w^2} + t \cdot \prob{ \card{\clean{B_i}} < \frac{\ww}{100}} + \prob{\Pi(G) = z_G(1) \mid \eventA,\eventC} \tag{by union bound on number of blocks} \\
		&\leq \frac2{w^2} + \prob{\Pi(G) = z_G(1) \mid \eventA,\eventC}.  \tag{by \Cref{lem:clean-index} and $t < \ww$}
	\end{align*}
	Hence, for sufficiently large $w$, 
	\begin{align*}
		\prob{\Pi(G) = z_G(1) \mid \eventA,\eventC} \geq \frac12 + \frac1{6w}-\frac2{w^2} \geq \frac12 + \frac1{8w}. \qedhere
	\end{align*}
\end{proof}

Let the total number of active blocks in graph $G$ be $\actt$. \Cref{clm:cond-act-g} states that even if we condition on $\actt \geq \ln \ww$ and that each block has at least $ \clw = \frac{\ww}{100}$ clean indices, the probability of success reduces only by a $1/24\ww$ additive factor. Thus, throughout the rest of this section, we condition on these events happening in our input graph. 

For simplicity of exposition, let us assume $\act{G} = \set{B_1, B_2, \ldots, B_{\actt}}$, that is only the first $\actt$ blocks are active (which is without loss of generality by renaming the blocks). Our next step is to condition on all the permutations $\sigma^i$ and strings $x^i$ for $i >\actt$, and some parts of the permutations $\sigma^i$, and some indices in $x^i$ for active blocks $i \in [\actt]$. 
We need some notation for these partial permutations and strings before we proceed. 

\paragraph{Notation.}
For any $i \in [t], j \in [\ww]$, we define the tuple $\sigmaandx{i}{j} := (j', j, x^i_j)$ as two indices in $[\ww]$ followed by a value from string $x^i$ such that $\sigma^i(j') = j$. Define \begin{align*}
	\sigmaxunclean{i} := \paren{\sigmaandx{i}{j}}_{j \in [w]\setminus\clean{B_i}} &&\textnormal{and} &&\sigmaxclean{i} := \paren{\sigmaandx{i}{j}}_{j \in \clean{B_i}},
\end{align*} 
as tuples of $\sigmaandx{i}{j}$ for indices that are not clean, and those which are clean in $B_i$ respectively (see~\Cref{fig:qcleanunclean}). 
Let $\clninv{B_i}$ denote the set of indices $j$ such that $\sigma^i(j) \in \clean{B_i}$ for $i \in [t]$. There are exactly $\clw = \frac{\ww}{100}$ indices in $\clninv{B_i}$ since we have restricted the size of each $\clean{B_i}$ to be exactly $\clw$ in the previous section. By definition, if $B_i$ is an active block, $1 \in \clninv{B_i}$. 

\begin{figure}[h!]
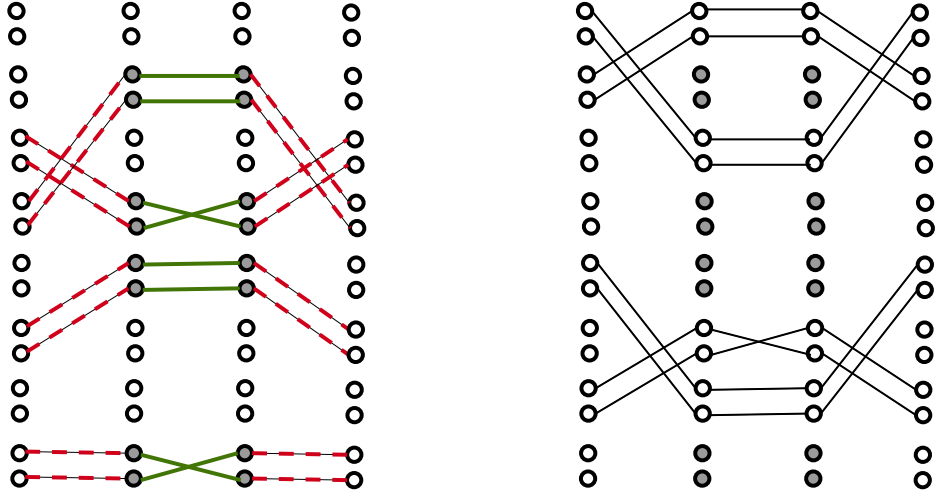

	\centering
	\subcaptionbox{$\sigmaxclean{i}$. The dashed (red) edges are sent to Bob and normal (green) edges are sent to Alice. \label{fig:qclean}}%
	[.45\linewidth]{
		\centering
		\input{qclean}
	}
	\subcaptionbox{$\sigmaxunclean{i}$. \label{fig:qunclean}}%
	[.45\linewidth]{
		\centering
		\input{qunclean}
	}
	\caption{An illustration of $\sigmaxclean{i}$ and $\sigmaxunclean{i}$ for an inactive block $B_i = \block{x, \sigma}$ for $x = (00010101)$ and $\sigma=(3,1,4,2,7,5,6,8)$. The gray vertices in layers $V^2$ and $V^3$ correspond to clean indices.}\label{fig:qcleanunclean}
\end{figure}

Observe that all the edges of block $B_i$ are fixed if both $\sigmaxclean{i}$ and $\sigmaxunclean{i}$ are fixed since they cover the entirety of permutation $\sigma^i$ and all the indices in $x^i$. 
Our aim is to condition on specific choices of both $\sigmaxclean{i}, \sigmaxunclean{i}$ for all blocks $B_i \notin \act{G}$, and only $\sigmaxunclean{i}$ for $B_i \in \act{G}$. This leaves only $\sigmaxclean{i}$ as random for $B_i \in \act{G}$. The following claim shows that this is possible.

\begin{claim}\label{clm:not-clean-fixed} For any protocol $\Pi$ which solves $\DHXOR_{\ww,t}$ on instance $G$ with probability of success at least $\frac12 + \frac1{8w}$ conditioned on $\card{\act{G}} \geq \ln \ww$ and $\card{\clean{B_i}} = \clw$ for each $i \in [t]$, 
	there exists a choice of tuples from the set
	\[ \set{\paren{\sigmaxunclean{i}, \sigmaxclean{i'}, \sigmaxunclean{i'}}_{i \in [\actt], \actt < i' \leq t}}
	\]
	such that the probability of success of $\Pi$ on $G$ conditioned on this choice over the randomness of $\sigmaxclean{i}$, $i \in [\actt]$ is at least $\frac12 + \frac1{8w}$.
\end{claim}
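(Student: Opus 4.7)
The plan is to prove this by a straightforward averaging (probabilistic) argument. Since the conditional success probability of $\Pi$, given the large-active-block and clean-index events, equals the expectation over the random tuples being fixed of the further-conditional success probability (where the remaining randomness is only the $\sigmaxclean{i}$ for $i \in [\actt]$), there must exist some specific choice of tuples for which the further-conditional success probability is at least the expectation.

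More precisely, condition on the events $\eventA = \{\card{\act{G}} \geq \ln{\ww}\}$ and $\eventC = \{\card{\clean{B_i}} = \clw \text{ for all } i \in [t]\}$, and (WLOG by re-labeling) the assumption that the active blocks are $B_1, \ldots, B_{\actt}$. Under this conditioning, the input $(X, \Sigma)$ together with the partition functions $\mathcal{F}$ is fully determined by specifying the tuples $\sigmaxclean{i}$ and $\sigmaxunclean{i}$ for every $i \in [t]$ (the partition functions $\mathcal{F}$ are encoded implicitly in these tuples since they determine which indices are clean). Let $\mathcal{T}$ denote the collection $\paren{\sigmaxunclean{i}, \sigmaxclean{i'}, \sigmaxunclean{i'}}_{i \in [\actt], \actt < i' \leq t}$. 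Then by the tower rule,
\[
\Pr\bracket{\Pi(G) = z_G(1) \mid \eventA, \eventC} \;=\; \Ex_{\mathcal{T}}\bracket{\Pr\bracket{\Pi(G) = z_G(1) \mid \eventA, \eventC, \mathcal{T}}}.
\]
The left-hand side is $\geq \frac{1}{2} + \frac{1}{8\ww}$ by~\Cref{clm:cond-act-g}. Hence, by a standard averaging argument, there must exist a specific realization $\mathcal{T}^*$ of the tuples such that
\[
\Pr\bracket{\Pi(G) = z_G(1) \mid \eventA, \eventC, \mathcal{T} = \mathcal{T}^*} \;\geq\; \frac{1}{2} + \frac{1}{8\ww},
\]
where the remaining randomness is only over $\sigmaxclean{i}$ for $i \in [\actt]$, as desired.

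There is no substantial obstacle here; this is a ``pigeonhole over expectation'' type of step. The only thing to double-check is that the conditioning is consistent: once we fix $\mathcal{T}^*$, the joint distribution of what remains (namely $\paren{\sigmaxclean{i}}_{i\in[\actt]}$) is still well-defined and matches the induced marginal from $\distDHX$ restricted to the conditioning events, which is immediate since $X, \Sigma$ and $\mathcal{F}$ were originally drawn independently. Thus this claim is best viewed as a bookkeeping lemma that isolates the ``useful'' randomness (the clean-index structure of the active blocks) so that the subsequent analysis via~\Cref{prop:kklparity} can be applied cleanly to it.
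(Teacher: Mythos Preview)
Your proposal is correct and follows essentially the same averaging argument as the paper: both use the tower rule to write the conditional success probability as an expectation over the tuples being fixed, and then pick a realization achieving at least the average. One minor imprecision worth noting: the partition functions $\mathcal{F}$ are \emph{not} ``encoded implicitly in these tuples'' as you state---the tuples $\sigmaandx{i}{j}$ record only $\sigma^i$ and $x^i$, not which edges go to Alice versus Bob---but this does not affect the argument, since $\mathcal{F}$ can simply be folded into the quantities you average over (the paper handles this by implicitly fixing $\mathcal{F}$ as part of the conditioning prior to this claim).
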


\begin{proof}
	We know that all $\sigma^i$ and $x^i$ are chosen uniformly  and independently for $i \in [t]$. Moreover, each $x^i_j$ is chosen uniformly at random and independently for $j \in [\ww], i \in [t]$. Also, permutation $\sigma^i$ restricted from $\clninv{B_i}$ to $\clean{B_i}$ is uniform over $\mathcal{S}_{\clw}$ (the indices may be numbered differently, but there are $\clw$ of them) conditioned on some choice of the rest of permutation $\sigma^i$ to indices $[\ww] \setminus \clean{B_i}$ for each $i \in [t]$.  
	
	Hence by the independence of these inputs, there exists a choice of permutations and strings in blocks that are not active, $\sigmaxclean{i}, \sigmaxunclean{i}$ for $\actt < i \leq t$, and partial permutations and strings in active blocks $\sigmaxunclean{i}$ for $i \in [\actt]$ such that the probability of success of $\Pi$ on $G$ over the randomness of $\sigmaxclean{i}$, $i \in [\actt]$ is at least $\frac12 + \frac1{8w}$, proving the claim.
\end{proof}

Henceforth, we assume that the tuples $\sigmaxunclean{i}, \sigmaxclean{i'}, \sigmaxunclean{i'}$ for $i \in[\actt]$ and $\actt < i' \leq t$ are fixed for input graph $G$ as the choice from \Cref{clm:not-clean-fixed}. We have conditioned our input on the partition functions $\mathcal{F}$, the active blocks $\act{G} = \set{B_1, B_2, \ldots, B_{\actt}}$ and a specific choice of the tuples from \Cref{clm:not-clean-fixed}, resulting in the following partition of edges of $\sigmaxclean{i}$ for $i \in [\actt]$ (which are still random). 
\begin{itemize}
	\item Alice has $\actt$ strings of length $\clw$ each,  $x^i_j$ for $i \in [\actt], j \in \clean{B_i}$ picked uniformly at random and independently. 
	\item Bob has $\actt$ permutations from $\mathcal{S}_{\clw}$ (though the numbering of the indices may differ), $\sigma^i$ restricted from $\clninv{B_i}$ to $\clean{B_i}$ for $i \in [\actt]$ chosen uniformly and independently. 
\end{itemize}  
In fact, these are the only edges in $G$ which are not fixed by our conditioning of the input graph. By a reordering of the indices in $\clean{B_i}$ for $i \in [\actt]$, we refer to Alice's input as $(x^1, x^2, \ldots, x^{\actt}) \in \paren{\set{0,1}^{\clw}}^{\actt}$. Similarly by a reordering of the indices in $\clninv{B_i}$ for $i \in [\actt]$, we can view Bob's input as $(\sigma^1, \sigma^2, \ldots, \sigma^{\actt}) \in \paren{\mathcal{S}_{\clw}}^{\actt}$. 

Alice and Bob are interested in 
\[
z_G(1) = \bigoplus_{i \in [t]} x^i_{\sigma^i(1)} = \paren{\bigoplus_{i \in [\actt]} x^i_{\sigma^i(1)}} \oplus \paren{\bigoplus_{\actt < i \leq t} x^i_{\sigma^i(1)}}.
\]
The second term $\bigoplus_{\actt < i \leq t} x^i_{\sigma^i(1)}$ is fixed and known to both Alice and Bob based on the choice of these parameters from \Cref{clm:not-clean-fixed}. However, by \Cref{lem:active-block-pij}, we know that $\sigma^i(1)$ is uniform over $\clean{B_i}$ for each $i \in [\actt]$. 
We will prove \Cref{lem:dhx-lower} using \Cref{prop:kklparity}, by showing that any protocol with $o(\clw \cdot \actt)$ communication for $\DHXOR_{\ww, t}$ incurs only a low bias on $\bigoplus_{i \in [\actt]} x^i_{\sigma^i(1)}$. We need some more notation about protocols and their messages for the rest of this section. 

\paragraph{Notation.}
Let $\Pi$ be any protocol for $\DHXOR_{\ww,t}$ with success probability at least $\frac12 + \frac1{6\ww}$, and let $\Pi(G)$ denote the output of $\Pi$ on input graph $G$. With a 
slight abuse of notation, we also use $\Pi$ to denote the random variable for the message $\pi$ sent by the protocol. For any message $\msg$, define $\tomsg{\msg} $ as the set of inputs on which Alice sends $\msg$.
We assume that all the messages sent by protocol $\Pi$ are of the same length $\msglgth$, by a standard padding argument. 

\begin{lemma}\label{lem:avgargument}
	For any protocol $\Pi$ solving $\DHXOR_{\ww, t}$ on $\msglgth$ bits with probability of success at least $\frac12 + \frac1{8\ww}$ over randomness of $\sigmaxclean{i}$ for $i \in [\actt]$, there exists a message $\msg$ such that for at least $2^{\clw\actt-4\msglgth}$ inputs  Alice sends $\msg$, and Bob succeeds with probability at least $\frac12 + \frac1{8w} - \frac1{2^{3\msglgth}}$ on outputting the correct answer given  $\msg$.
\end{lemma}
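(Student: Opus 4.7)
The plan is a standard two-step averaging argument over Alice's messages. The message $\msg$ sent by $\Pi$ is a deterministic function of Alice's input (I fix any public randomness using the best-case setting). Alice's relevant input consists of $\actt$ bit-strings $(x^1,\ldots,x^\actt) \in \set{0,1}^{\clw \actt}$, each drawn uniformly and independently conditioned on the earlier setup. So the mechanism producing $\msg$ is a map from $\set{0,1}^{\clw \actt}$ to $\set{0,1}^{\msglgth}$, and we want to carve out a single preimage that is both large and ``good'' for Bob.

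The first step is a pruning. Call a message $\msg$ \emph{heavy} if $\card{\tomsg{\msg}} \geq 2^{\clw \actt - 4\msglgth}$, and \emph{light} otherwise. Since there are at most $2^{\msglgth}$ messages in total, the number of inputs mapped to light messages is at most $2^{\msglgth} \cdot 2^{\clw \actt - 4\msglgth} = 2^{\clw \actt - 3\msglgth}$. Because Alice's input is uniform on $\set{0,1}^{\clw \actt}$, the probability that the protocol sends a light message is at most $2^{-3\msglgth}$.

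The second step extracts a good heavy message. Writing the overall success probability by conditioning on $\msg$,
\[
\tfrac{1}{2} + \tfrac{1}{8\ww} \;\leq\; \Pr[\Pi(G) = z_G(1)] \;=\; \sum_{\msg} \Pr[\Pi = \msg] \cdot \Pr[\Pi(G)=z_G(1) \mid \Pi = \msg].
\]
Bounding the contribution of light messages trivially by their total probability mass $2^{-3\msglgth}$, the sum over heavy messages alone is at least $\tfrac{1}{2} + \tfrac{1}{8\ww} - 2^{-3\msglgth}$. Since the heavy messages carry total probability at most $1$, a straightforward averaging argument yields a particular heavy message $\msg^*$ such that
\[
\Pr[\Pi(G) = z_G(1) \mid \Pi = \msg^*] \;\geq\; \tfrac{1}{2} + \tfrac{1}{8\ww} - \tfrac{1}{2^{3\msglgth}},
\]
which is exactly the desired statement (with $\msg^*$ playing the role of $\msg$ in the lemma).

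Essentially no step here should pose a real obstacle: the only thing to be careful about is making sure I am averaging over the correct source of randomness (Alice's uniform input over $\sigmaxclean{i}$ for $i \in [\actt]$, and Bob's independent uniform input, both conditioned on the fixed tuples from \Cref{clm:not-clean-fixed}), and that the bound on the probability mass of light messages uses only the counting argument on Alice's side, so that the conditional probability on Bob's side is preserved when restricting attention to heavy messages. After this lemma, the remaining work — which is where the real content lies — will be to exploit the combinatorial structure of the $2^{\clw \actt - 4 \msglgth}$ surviving inputs together with~\Cref{prop:kklparity} to upper-bound the bias of $\bigoplus_{i \in [\actt]} x^i_{\sigma^i(1)}$, forcing $\msglgth = \Omega(\clw \actt) = \Omega(\ww t)$.
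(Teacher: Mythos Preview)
Your proposal is correct and essentially identical to the paper's proof: both define the set of ``heavy'' messages as those with preimage size at least $2^{\clw\actt-4\msglgth}$, bound the total mass of light messages by $2^{\msglgth}\cdot 2^{-4\msglgth}=2^{-3\msglgth}$ via counting, and then average over heavy messages to extract one with conditional success at least $\tfrac12+\tfrac1{8\ww}-2^{-3\msglgth}$.
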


\begin{proof}
	The proof is by an averaging argument over the total number of possible inputs to Alice and the  number of possible messages. Alice has $\actt$ strings of $\clw$ length each so $2^{\clw \cdot \actt} $ possible inputs. 
	 Denote by $\goodmsg$ the set of messages $\msg$ with $\card{\tomsg{\msg}} \geq 2^{\clw \actt - 4 \msglgth}$. The probability that Alice sends a message $\msg \notin \goodmsg$ can be bounded as follows.  
	\begin{align*}
		\prob{\Pi = \msg, \msg \notin\goodmsg} &= \sum_{\msg \notin \goodmsg} \frac{\card{\tomsg{\msg}}}{2^{\clw\actt}} \tag{as inputs to Alice are uniform} \\
		&< 2^{\msglgth} \cdot  \frac{2^{\clw \actt-4\msglgth}}{2^{\clw \actt}} = 2^{-3\msglgth}.
	\end{align*}
	The total probability of success is at most,
	\begin{align*}
		\prob{\Pi(G) = z_G(1)} &\leq \prob{\Pi(G) = z_G(1) \mid \Pi = \msg, \msg \in \goodmsg} + \prob{\Pi = \msg, \msg \notin \goodmsg} \\
		&\leq  \prob{\Pi(G) = z_G(1) \mid \Pi = \msg, \msg \in \goodmsg} + \frac1{2^{3\msglgth}}. 
	\end{align*}
	
	Conditioned on the message $\msg$ having at least $2^{\clw \actt-4\msglgth}$ inputs mapped to it, the probability of success is at least $\frac12 + \frac1{8\ww} - \frac1{2^{3\msglgth}}$. Hence there exists a message with at least $2^{\clw \actt-4\msglgth}$ inputs mapped to it when protocol $\Pi$ succeeds with probability at least $\frac12 + \frac1{8\ww} - \frac1{2^{3\msglgth}}$.
\end{proof}

We are now ready to prove \Cref{lem:dhx-lower}. We know a lot of inputs are mapped to $\msg$ from \Cref{lem:avgargument}, and the input for Alice is chosen uniformly at random from strings of length $2^{\clw\actt}$. The input to Bob is $\actt$ permutations of $\mathcal{S}_{\clw}$ chosen independently and randomly. We will prove that only a low bias is present on $z_G(1)$ for inputs mapped to $\msg$ using \Cref{prop:kklparity}. 

\begin{proof}[Proof of \Cref{lem:dhx-lower}]
	Firstly, \Cref{clm:cond-act-g,clm:not-clean-fixed} allow us to condition the input graph such that only the parameters in $\sigmaxclean{i}$ for $i \in [\actt]$ are not fixed while keeping the probability of success at least $\frac12 + \frac1{8w}$.
	From \Cref{lem:avgargument}, we know there exists a $\msg$ such that $\tomsg{\msg}$ has at least $2^{\clw \actt-4\msglgth}$ elements on which the protocol succeeds with probability at least $\frac12 + \frac1{8w}-\frac1{2^{3\msglgth}}$. We want to prove that $\msglgth = \Omega(\ww \cdot t)$.  We assume towards a contradiction that $ \msglgth = \frac1{32e} \cdot \clw \actt$. 
	
	 Conditioned on $\Pi = \msg$, the input to Alice is uniformly distributed over the set $\tomsg{\msg}$, as the input is chosen uniformly at random and independently from $\set{0,1}^{\clw \actt}$. By \Cref{prop:kklparity}, the XOR of any $\actt$ size subset $J$ of the indices $[\clw\actt]$ picked uniformly at random of an element from $\tomsg{\msg}$, again sampled uniformly at random incurs a bias $\bias{\tomsg{\msg}}{J}$ of at most,
	\begin{align}\label{eqn:biasub}
		\Exp_J\bracket{\bias{\tomsg{\msg}}{J}^2} = O\paren{\frac1{\clw\actt} \cdot \log \paren{\frac{2^{\clw\actt}}{\card{\tomsg{\msg}}}}}^{\actt}.
	\end{align}

	We are interested in the bias incurred when $J = \set{\sigma^1(1), \sigma^2(1), \ldots, \sigma^{\actt}(1)}$. By \Cref{lem:active-block-pij}, we know that $\sigma^i(1)$ is uniform over the set $\set{\clw(i-1)+1, \clw(i-1)+2, \ldots, \clw\cdot i}$ for $i \in [\actt]$. We call a $\actt$-size subset $J$ of indices \emph{valid} if for all $i \in [\actt]$,
	\begin{align*}
		\card{J \cap \set{\clw(i-1)+1, \clw(i-1)+2, \ldots, \clw \cdot i}}  = 1.
	\end{align*} 
	That is, it contains one index from each substring $x^i$ for $i \in [\actt]$ of the input to Alice, which are the only possible choices for the set $\set{\sigma^i(1) \mid i \in [\actt]}$. 
	We can write the expectation over a random $\actt$ size subset $J$ of $[\clw\actt]$ as,
	\begin{align*}
		\Exp_J\bracket{\bias{\tomsg{\msg}}{J}^2} &\geq \prob{J\textnormal{ is valid}} \cdot \Exp_J\bracket{\bias{\tomsg{\msg}}{J}^2\mid J\textnormal{ is valid}} \\
		&= \paren{\clw}^{\actt} \cdot \frac1{\binom{\clw\actt}{\actt}} \cdot \Exp_J\bracket{\bias{\tomsg{\msg}}{J}^2\mid J\textnormal{ is valid}},
	\end{align*} 
where we have used that $J$ is valid with probability $(\clw)^{\actt} /\binom{\clw\actt}{\actt}$. 
	Bob's input indices $\set{\sigma^i(1) \mid i \in [\actt]}$ are random over the set of all valid $\actt$-size subsets. Therefore the bias over these valid subsets is at most,
	\begin{align*}
		\Exp_J\bracket{\bias{\tomsg{\msg}}{J}^2\mid J\textnormal{ is valid}} &\leq \binom{\clw\actt}{\actt} \cdot \frac1{\paren{\clw}^{\actt}} \cdot O\paren{\frac1{\clw\actt} \cdot \log \paren{\frac{2^{\clw\actt}}{\card{\tomsg{\msg}}}}}^{\actt} \tag{by \Cref{eqn:biasub}}\\
		&\leq e^{\actt} \cdot O\paren{\frac1{\clw\actt} \cdot \log \paren{\frac{2^{\clw\actt}}{\card{\tomsg{\msg}}}}}^{\actt} \tag{as $\binom{\clw\actt}{\actt} \leq \paren{\frac{e\clw\actt}{\actt}}^{\actt}$} \\
		&\leq O\paren{\frac{4e\msglgth}{\clw\actt}}^{\actt} \tag{as $\card{\tomsg{\msg}} \geq 2^{\clw\actt-4\msglgth}$} \\
		&\leq O\paren{1/8}^{\actt} = o(1/\ww^2). \tag{as $c = (1/32e) \cdot \clw \actt$ and $\actt \geq \ln \ww$}
	\end{align*}
	
	However, protocol $\Pi$ succeeds with probability at least $\frac12 + \frac1{8w}-\frac1{2^{3\msglgth}}$ when the transcript is $\msg$, giving a lower bound on the bias:  
	\begin{align*}
		\Pr\paren{\text{$\Pi$ is correct}} = \frac12 + \frac{\Exp_J\bracket{\bias{\tomsg{\msg}}{J}\mid J\textnormal{ is valid}}}2 &\geq \frac12 + \frac1{8w}-\frac1{2^{3\msglgth}}; \\
		\Exp_J\bracket{\bias{\tomsg{\msg}}{J}\mid J\textnormal{ is valid}} &\geq \frac1{4\ww}-\frac1{2^{3\msglgth-1}} \geq \frac1{8\ww}. \tag{as $c \geq \log \ww$}
	\end{align*}
	Comparing the two bounds on the expected bias, we get a contradiction as follows.
	\begin{align*}
		\frac1{64\ww^2} \leq \Exp_J\bracket{\bias{\tomsg{\msg}}{J}\mid J\textnormal{ is valid}}^2 \leq \Exp_J\bracket{\bias{\tomsg{\msg}}{J}^2\mid J\textnormal{ is valid}} \leq o\paren{\frac1{\ww^2}}. 
	\end{align*} 
	This concludes the proof. 
\end{proof}

\clearpage

\section{Implications to Random Order Streaming Algorithms}\label{sec:stream}

In this section, we show some implications of \Cref{thm:main-ngc} to proving space lower bounds for random order streaming algorithms, formalizing~\Cref{res:stream}. We also extend our lower bounds 
to the stochastic streaming model defined in~\cite{CrouchMVW16}. 
The reductions in this section are standard (or follow by easy modifications of known ones), see, e.g.~\cite{VerbinY11,AssadiKSY20,AssadiN21,KapralovMTWZ22}, and are provided here for completeness. The main difference 
with all these prior work is that since we proved a \emph{robust} communication complexity lower bound for $\NGC$, we can obtain streaming lower bounds for \emph{random-order} streams instead of adversarially ordered, as in all these prior work. 

Before moving on, the following important remark is in order. 
\begin{remark}\label{rem:small-eps}
		\emph{Our goal in our paper is to establish an \textbf{exponential-in-$\eps$-dependency} on the space complexity of random-order streaming algorithms that obtain $(1+\eps)$ multiplicative or additive approximation for various graph problems. 
		Hence, in many of our lower bounds, we take $\eps$ to be sub-constant, typically $\eps = 1/\Theta(\log{n})$, and thus obtain lower bounds on specific points of the space-approximation tradeoff curve.\footnote{This means  our lower bounds may leave out the possibility of having an algorithm that for constant/small values of $\eps$ have a better space-dependence
		on $\eps$, but once $\eps$ gets smaller, their space-dependence ``switches'' to exponential. Whether such algorithms can exist for these problems however seems quite unlikely to the authors.} For multiplicative approximation lower bounds, this is 
		not at all a problem as we can use a simple padding argument (by embedding the lower bound on a smaller part of the graph), 
		but this approach does not work when it comes to additive (in $n$) approximation lower bounds. As a result, for additive approximation results, our lower bounds inherently apply only to certain points of the tradeoff curve. We leave open the possibility of 
		extending our results to the entire space-approximation tradeoff curve as an interesting research question. 
		} 
		
		\emph{Finally, we note that given most algorithms for the problems we consider have some (mild) dependence on $n$ as well, say, $\polylog{(n)}$ to store counters or edges, we state our lower bounds 
		by even including some $n^{o(1)}$-dependence on the space; in other words, the exponential-dependence on $\eps$ in our lower bounds continues to hold even when one allows $n^{o(1)}$-space dependence (which is necessary
		to prove near-optimality of our bounds).} 
\end{remark}

\subsection{Number of Connected Components}

We start with one of our main results which is a lower bound for estimating the number of connected components. Given a graph $G=(V,E)$ in a random-order stream, estimate the number of connected components of $G$ to within an $\eps \cdot n$ 
\emph{additive} factor. 

\cite{PengS18} gave an algorithm for this problem with space $(1/\eps)^{O(1/\eps^3)} \cdot (\log{n})$ which was improved later by~\cite{ChipKKP22} to $(1/\eps)^{O(1/\eps)} \cdot (\log{n})$ space. As stated earlier in the introduction, the latter work 
also showed that $(1/\eps)^{\Omega(1/\eps)}$ space is needed for the component \emph{finding} problem, namely, finding a component of size $\Theta(1/\eps)$ in a graph that contains $\Theta(\eps \cdot n)$ of them (while also requiring the 
stream to be almost-random, or rather in the batch random-order model they introduced; see~\cite{ChipKKP22}). 

We prove a near-optimal lower bound for the original estimation problem and on (truly) random-order streams, settling a conjecture of~\cite{PengS18} in the necessity of exponential dependence on $(1/\eps)$ for any algorithm for this problem. 


\begin{corollary}\label{thm:ccest}
	Any single-pass streaming algorithm that for every given $\eps > 0$ processes edges of any graph $G$ with $n$ vertices in a random-order stream 
	in $2^{o(1/\eps)} \cdot n^{o(1)}$ space cannot estimate the number of connected components in $G$ to within an  $\eps \cdot n$ additive factor with probability at least $2/3$. 
\end{corollary}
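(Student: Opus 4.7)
The plan is to give a black-box reduction from $\NGC_{n,k}$ (with appropriately chosen parameters) to the number-of-connected-components estimation problem, and then invoke the robust communication lower bound of~\Cref{thm:main-ngc} together with the standard streaming-to-robust-communication connection in~\Cref{prop:stream-cc}. The reduction itself is essentially free: the same graph $G$ that the $\NGC$ instance produces is fed, edge by edge in its random-order stream, to the hypothetical component-counting algorithm, and the algorithm's estimate is compared against a threshold to decide the $\NGC$ answer.

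The key observation is a direct count of connected components in the two cases of $\NGC_{n,k}$. In case $(i)$ the graph has $n/(4k)$ cycles of length $2k$ plus $n/(2k)$ paths of length $k-1$, hence $n/(4k)+n/(2k) = 3n/(4k)$ connected components; in case $(ii)$ it has $n/(2k)$ cycles of length $k$ plus the same $n/(2k)$ paths, hence $n/k$ connected components. The additive gap is therefore $n/(4k)$. Consequently, any algorithm whose estimate is accurate to within $\eps\cdot n$ additively with $\eps \le 1/(8k)$ can be used to distinguish the two cases of $\NGC_{n,k}$ with probability at least $2/3$, by thresholding at the midpoint $7n/(8k)$.

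I would then select parameters so that invoking~\Cref{thm:main-ngc} yields a sharp lower bound. Take $k := \lceil b\cdot \log n\rceil$ for the absolute constant $b$ in the statement of~\Cref{thm:main-ngc}, so $k \ge b\log n \ge 1600\ln n$, and set $\eps := 1/(8k) = \Theta(1/\log n)$. A hypothetical algorithm using space $S = 2^{o(1/\eps)}\cdot n^{o(1)}$ then uses $S = 2^{o(\log n)}\cdot n^{o(1)} = n^{o(1)}$ bits on graphs with $n$ vertices. By~\Cref{prop:stream-cc}, this gives a one-way robust communication protocol for $\NGC_{n,k}$ of cost $n^{o(1)}$, contradicting the $\Omega(n)$ lower bound of~\Cref{thm:main-ngc}. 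Combined with~\Cref{rem:small-eps}, this rules out the claimed space bound in general.

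The reduction is completely straightforward and the only subtlety, already flagged in~\Cref{rem:small-eps}, is that my choice of $\eps$ is forced to be of order $1/\log n$: this is exactly the range needed to make $2^{1/\eps} = n^{\Theta(1)}$ collide with the $\Omega(n)$ communication lower bound. Unlike the situation for multiplicative approximation problems (such as MST weight or matching size in bounded-degree graphs, where a padding argument can extend the lower bound to all $\eps > 0$), here the additive-in-$n$ nature of the error means a padding reduction would shrink the ``useful'' part of the graph but also shrink the gap proportionally, so I do not expect to extend the lower bound uniformly to all $\eps$; this is the only real obstacle, and it is inherent to the parameter regime rather than something to overcome in the proof.
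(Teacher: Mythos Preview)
Your proposal is correct and follows essentially the same approach as the paper: count components in the two cases of $\NGC_{n,k}$ to find a gap of $n/(4k)$, take $\eps < 1/(8k)$ with $k=\Theta(\log n)$, and combine~\Cref{prop:stream-cc} with~\Cref{thm:main-ngc} to derive a contradiction to a $2^{o(1/\eps)}\cdot n^{o(1)}$ space bound. Your explicit component counts ($3n/(4k)$ vs.\ $n/k$) and the thresholding at the midpoint are exactly what the paper has in mind, and your discussion of why padding fails for the additive case matches~\Cref{rem:small-eps}.
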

\begin{proof}
	Consider the $\ngc_{n,k}$ problem for $k \geq 1600 \ln{n}$. In the two possible cases of $\ngc_{n,k}$, the number of connected components of $G$ differs by $n/4k$: the number of noisy paths is the same in both cases of $\NGC$
	and the $k$-cycle case has $n/4k$ more cycles/connected components compared to the $2k$-cycle case.  Thus, if we take $\eps < 1/8k$, then 
	estimating the number of connected components of $G$ in $\ngc_{n,k}$ up to an $\eps \cdot n$ additive factor would also solve this problem. 
	
	Combining this with~\Cref{prop:stream-cc} and our robust communication lower bound of $\Omega(n)$
	for $\ngc_{n,k}$ in~\Cref{thm:main-ngc}, we get that any random order streaming algorithm for estimating the number of connected components up to $\eps \cdot n$ additive factor for $\eps < 1/8k$ requires $\Omega(n)$ space. 
	Given the choice of $\eps = 1/\Theta(\log{n})$, a space of $2^{o(1/\eps)} \cdot n^{o(1)}$ for the algorithm translates to $o(n)$ space, which contradicts the above lower bound. This concludes the proof. 
\end{proof}

\subsection{Minimum Spanning Tree} 

In this section, we prove a lower bound for random-order streaming algorithms which estimate the size of the minimum spanning tree up to an $(1+\eps)$ multiplicative approximation factor. Given a connected weighted graph $G=(V,E)$ with integral weights from the set $ [W]$, estimate the weight of the minimum spanning tree of $G$ to within an $(1+\eps)$ multiplicative approximation factor. 

\cite{PengS18} gave an algorithm for this problem which uses  $O\paren{\frac1{\eps}}^{\Ot(W^3/\eps^3)}$ space. We give the first non-trivial lower bound for this problem. 

\begin{corollary}\label{thm:mstest}
	Any single-pass streaming algorithm that for every given $\eps > 0$ processes edges of any graph $G$ with $n$ vertices and integer weights in $[W]$ in a random-order stream 
	in $2^{o(W/\eps)} \cdot n^{o(1)}$ space cannot estimate the weight of minimum spanning trees to within an $(1+\eps)$ multiplicative approximation  factor with probability at least $2/3$. 
\end{corollary}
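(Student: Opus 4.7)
The plan is to reduce from $\NGC_{n,k}$ to $(1+\eps)$-MST estimation via the classical Chazelle--Rubinfeld--Trevisan (CRT) identity, which expresses the MST weight of a connected graph with integer weights in $[W]$ as $n - W + \sum_{i=1}^{W-1} c_i$, where $c_i$ is the number of connected components of the subgraph consisting of edges of weight at most $i$. Fixing $k := 1600 \ln n$ so that~\Cref{thm:main-ngc} applies, given any $\NGC_{n,k}$ instance $G_0$ on vertex set $V$, build the weighted graph $G$ on the same vertex set by assigning weight $1$ to every edge of $G_0$ and adding a fixed, deterministic set of weight-$W$ edges---say, a Hamiltonian path on $V$---that guarantees $G$ is connected. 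Since these auxiliary edges do not depend on the NGC input, Alice and Bob can split them between themselves via public randomness, so the Alice--Bob simulation in the proof of~\Cref{prop:stream-cc} converts any random-order streaming algorithm for MST of space $s$ into a one-way robust protocol for $\NGC_{n,k}$ of communication cost $s$.

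For this $G$, only weights $1$ and $W$ appear, so $c_i = c_1(G_0)$ for $i = 1, \ldots, W-1$ and $c_W = 1$, giving
\[
\mathrm{MST}(G) \;=\; (n - c_1) + (c_1 - 1)\cdot W \;=\; n - W + (W-1)\cdot c_1,
\]
where $c_1 := c_1(G_0)$. By~\Cref{def:ngc}, and using that both cases share the same $n/(2k)$ noisy paths as connected components, $c_1 = n/k$ in the ``short cycles'' case and $c_1 = 3n/(4k)$ in the ``long cycles'' case. Thus $\mathrm{MST}(G)$ takes two values separated by an additive gap of $(W-1)\cdot n/(4k)$, while $\mathrm{MST}(G) \leq n + (W-1)\,n/k$ throughout.

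A $(1+\eps)$-multiplicative MST approximation therefore distinguishes the two cases whenever
\[
\eps \;\leq\; \Theta\!\paren{\frac{(W-1)/(4k)}{1 + (W-1)/k}} \;=\; \Theta\!\paren{\frac{W-1}{k+W}}.
\]
Taking for concreteness $W = 2$ and $\eps = 1/(16k) = \Theta(1/\log n)$, we get $W/\eps = \Theta(\log n)$, so an algorithm using space $2^{o(W/\eps)}\cdot n^{o(1)} = n^{o(1)}$ for MST would, via the above reduction, yield an $n^{o(1)} = o(n)$ one-way robust protocol for $\NGC_{n,k}$, contradicting~\Cref{thm:main-ngc}. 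The arithmetic is entirely routine once the CRT identity is in place; the one mildly subtle point is maintaining the uniform random ordering of the combined stream (NGC edges together with the weight-$W$ auxiliary edges), which follows from the same Alice--Bob argument used in the proof of~\Cref{prop:stream-cc}.
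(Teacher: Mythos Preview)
Your argument is correct and, if anything, cleaner than the paper's own proof. The paper works directly with the layered structure of the hard distribution $\distNGC$: it adds carefully chosen weight-$1$ edges between specific vertices $a^1_j,b^1_j,a^k_j,b^k_j$ together with $m$ weight-$W$ edges $(a^k_j,b^k_j)$, so that in the $\theta=1$ case the graph becomes connected through weight-$1$ edges alone (MST $= n-1$), whereas in the $\theta=0$ case any spanning tree must use at least $m-1$ of the weight-$W$ edges (MST $\geq n-m+(m-1)W$). The gap analysis then proceeds by a direct combinatorial count.

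Your route sidesteps all of this distribution-specific bookkeeping: by adding a single weight-$W$ Hamiltonian path and invoking the CRT identity, the MST weight becomes a linear function of $c_1(G_0)$, and the gap is read off immediately from the connected-component gap in $\NGC$. This is more modular (it works for any $\NGC$ instance, not just those drawn from $\distNGC$) and makes the reduction from MST to connected-component counting completely transparent. The only thing you lose is that you specialize to $W=2$ at the end; the paper keeps $W$ general throughout and sets $\eps=\Theta(W/k)$, but your own formulas already handle general $W$ before the specialization, so this is cosmetic. Your remark about injecting the fixed auxiliary edges into the random-order stream via public randomness is exactly the point the paper uses implicitly as well.
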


\begin{proof}
	Our hard instances for $\NGC_{n,k}$ are sampled from distribution $\distNGC$. Recall that $\distNGC$ outputs valid instances of $\NGC_{n,k}$ by adding auxiliary edges to multi-block graphs such that there are cycles of length $2k$ when $\theta = 1$ and cycles of length $k$ otherwise (see~\Cref{fig:distngc1}). 
	
	Let $G$ be a graph sampled from $\distNGC$ for $k \leq \frac{W-1}{12\eps}$. The weights of all the initial edges in $G$ are set as 1.  We add the following additional edges so that there is a large difference in the weight of the minimum spanning tree when $\theta = 1$ and when $\theta = 0$. 
	\begin{itemize}
		\item Add edge $(a^k_j, b^k_j)$ of weight $W$ for each $j \in [m]$. 
		\item Add edges $(a^1_j, a^1_{m+j}), (a^1_j, b^1_{m+j})$ of weight 1 for each $j \in [m]$. 
		\item Add edges $(a^1_j, b^1_{j+1})$ for each $j \in [m-1]$, and edge $(a^1_m, b^1_1)$ of weight 1.
	\end{itemize}

Let $G'$ be used to denote $G$ after these edges are added. 
	
	When $\theta = 1$, that is the instance $G$ consists of cycles of length $2k$, adding these additional edges makes $G'$ connected through edges of weight 1. There are $n/4k$ cycles of length $2k$, and all the edges in these cycles have weight 1. Each path of length $k-1$ from vertex $a^1_{m+j}$ or $b^1_{m+j}$ is connected to a cycle of length $2k$ containing vertex $a^1_j$ for each $j \in [m]$ by edges of weight 1.  The cycles are all connected to each other through edges $(a^1_1, b^1_2), (a^1_2, b^1_3), \ldots, (a^1_{m-1}, b^1_m), (a^1_m, b^1_1)$ of weight 1. Hence, the weight of the minimum spanning tree is $n-1$. 
	
	When $\theta = 0$ and the instance $G$ consists of cycles of length $k$, the weight of any spanning tree of $G'$ is at least $n-m+W(m-1)$. If we consider only edges of weight 1, there are at least $m$ connected components since, for each $j \in [m-1]$, the component with the vertex $a^1_j$ only contains $4k = n/m$ vertices. It is only connected to the cycle of length $k$ passing through $a^k_j$ with $k$ vertices, two paths of length $k$ beginning at $a^1_{m+j}$ and $b^1_{m+j}$ respectively of $k$ vertices each, and the cycle containing $b^1_{j+1}$ with $k$ vertices. Similarly, the component containing $a^1_m$ contains only $n/m$ vertices. At least $m-1$ edges of weight $W$ connecting $a^k_j$ to $b^k_j$ for $j \in [m]$ are added to any spanning tree of $G'$. 
	
	From our choice of $k$, we know that $(1+\eps)(n-1) < (1-\eps)(n-W+m(W-1))$. Estimating the weight of a minimum spanning tree of $G'$ to within an $(1+\eps)$ multiplicative approximation would also solve $\NGC_{n,k}$ over the hard distribution $\distNGC$. Thus by \Cref{prop:stream-cc} and \Cref{thm:main-ngc}, we know that the space required by such an algorithm is $\Omega(n)$ bits. For $k= \Theta(\log n)$ and $\eps = \Theta(W/k)$, the algorithm requires $\Omega(n) = 2^{\Omega(W/\eps)}$ space. 
\end{proof}

\subsection{Maximum Matching and Matrix Rank}

We next prove a lower bound for random-order streaming algorithms for estimating the size of the maximum matching even in bounded-degree graphs (and planar ones). Given a graph $G=(V,E)$ in a random-order stream with the promise that maximum degree of $G$ 
is bounded by some given $d=O(1)$, estimate the size of a maximum matching in $G$ to within an $\eps \cdot n$ additive factor. 

Maximum matchings have been studied extensively in the streaming model and  listing the prior results on this problem is beyond the scope of our work. We only note that~\cite{KapralovKS14,KapralovMNT20} gave $\polylog{(n)}$-space
streaming algorithms for $\polylog{(n)}$ (multiplicative) approximation of matching size on arbitrary graphs in random-order streams, and~\cite{EsfandiariHLMO15,McGregorV16,CormodeJMM17,McGregorV18} gave $\polylog{(n)}$-space
streaming algorithms for $O(1)$ (multiplicative) approximation of matching size on planar graphs in adversarial-order streams. Most related to us however is the work of~\cite{MonemizadehMPS17} who gave an algorithm 
for estimating the size of maximum matching to within an $\eps \cdot n$ additive factor on bounded-degree graphs (the problem we started with above) in random-order streams using $2^{2^{\text{poly}{(1/\eps)}}} \cdot (\log{n})$ space, i.e., with doubly-exponential
dependence on $\eps$. 

We prove the first lower bound for this problem showing that at least an exponential dependence on $\eps$ is necessary even on an extremely simple family of graphs.

\begin{corollary}\label{thm:matchingest}
	Any single-pass streaming algorithm that for every given $\eps > 0$ processes edges of any graph $G$ with $n$ vertices in a random-order stream 
	in $2^{o(1/\eps)} \cdot n^{o(1)}$ space cannot estimate the size of maximum matching in $G$ to within an  $\eps \cdot n$ additive factor with probability at least $2/3$. 
	
	The lower bound holds even for planar graphs of maximum degree two, namely, on disjoint-unions of cycles and paths. 
\end{corollary}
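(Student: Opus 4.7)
The plan is a direct reduction from $\NGC_{n,k}$ to the matching-size estimation problem, exploiting the fact that instances of $\NGC_{n,k}$ are \emph{already} disjoint unions of cycles and paths, hence planar of maximum degree two. The reduction will go through by choosing $k$ to be \emph{odd}, so that the two cases of $\NGC_{n,k}$ produce max matchings that differ by an additive $\Theta(n/k)$ factor.

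First I would fix an odd $k \geq 1600 \cdot \ln n$ compatible with our hard distribution---e.g., $k = 3t+1$ for even $t$, or a $k$ obtained via the padding argument in the proof of \Cref{thm:main-ngc}. Then I would compute the maximum matching size $\mu(G)$ in each of the two cases of $\NGC_{n,k}$. In both cases $G$ contains the same collection of $n/2k$ vertex-disjoint paths of length $k-1$ (on $k$ vertices each), contributing exactly $(n/2k)\cdot \lfloor k/2 \rfloor$ edges to $\mu(G)$. The remaining contribution comes from the cycles:
\begin{itemize}
    \item If $G$ is the union of $n/4k$ cycles of length $2k$ (even), each cycle has a perfect matching of size $k$, contributing $(n/4k) \cdot k = n/4$.
    \item If $G$ is the union of $n/2k$ cycles of length $k$ (odd), each cycle has matching size $(k-1)/2$, contributing $(n/2k) \cdot (k-1)/2 = n/4 - n/(4k)$.
\end{itemize}
Hence $\mu(G)$ in the two cases differs by exactly $n/(4k)$. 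Any algorithm that estimates $\mu(G)$ to within an additive $\eps \cdot n$ factor with $\eps < 1/(8k)$ can therefore solve $\NGC_{n,k}$.

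Applying \Cref{prop:stream-cc} together with the $\Omega(n)$ robust communication lower bound of \Cref{thm:main-ngc}, any random-order streaming algorithm achieving such an estimate must use $\Omega(n)$ space. Setting $k = \Theta(\log n)$ (odd) gives $\eps = \Theta(1/\log n)$, and an algorithm using $2^{o(1/\eps)} \cdot n^{o(1)}$ space would translate to $o(n)$ space, contradicting the lower bound. The planarity and degree-two claim is immediate since every graph in the support of the hard distribution is, by construction, a vertex-disjoint union of cycles and paths.

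I do not anticipate any serious obstacle here: the only subtle point is the \emph{parity} of $k$---if $k$ were even then both cases would have the same matching size and the reduction would fail, so the proof needs to be written so that the hard $k$ is odd (via the padding argument in \Cref{thm:proof-main-ngc}, which applies to any sufficiently large $k$ and in particular to an odd one). Everything else is routine arithmetic and an invocation of results already in the excerpt.
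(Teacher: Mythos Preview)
Your proposal is correct and follows essentially the same approach as the paper: both reduce from $\NGC_{n,k}$ with odd $k$, compute that the maximum matching size differs by exactly $n/(4k)$ between the two cases (since odd cycles lose one edge each), and then invoke \Cref{prop:stream-cc} and \Cref{thm:main-ngc} exactly as in \Cref{thm:ccest}. Your explicit remark on why the parity of $k$ matters is a nice addition that the paper leaves implicit.
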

\begin{proof}
	Firstly, it is easy to see that any instance of $\NGC_{n,k}$ is a planar graph with degree two for any $n,k$ since it is just a collection of cycles and paths. 
	
	For any \emph{odd} $k$, in any instance of $\NGC_{n,k}$, each path of length $k-1$ has a matching of $(k-1)/2$ size. If the instance contains cycles of length $2k$, each such cycle has a matching with $k$ edges, giving us a maximum matching of size $(n/4k) \cdot k + (n/2k) \cdot (k-1)/2$. If the instance instead has cycles of length $k$, then each such cycle has a matching of $(k-1)/2$ edges, and the maximum matching size is $(n/2k) \cdot (k-1)/2 + (n/2k) \cdot (k-1)/2$. There is a difference of $(n/4k)$ in the maximum matching size in either case of $\NGC_{n,k}$. 

	The rest of the lower bound follows exactly as in~\Cref{thm:ccest} by combining~\Cref{prop:stream-cc} with our~\Cref{thm:main-ngc}. 
\end{proof}

\begin{remark}\label{rem:applies}
	For $(1+\eps)$-multiplicative approximation instead, we can prove a lower bound of $2^{\Omega(1/\eps)}$ on space even for constant choices of $\eps > 0$, by simply using $\NGC_{2^{\Theta(1/\eps)},\Theta(1/\eps)}$ and pad 
	the remaining graph with singleton vertices. 
\end{remark}

As a corollary of the standard equivalence between estimating matching size and computing
the rank of the Tutte matrix~\cite{Tutte47} with random entries established by~\cite{Lovasz79} (see \cite{BuryS15} streaming implementation of the reduction), we get the following result as well.

\begin{corollary}\label{thm:rankest}
	Any single-pass streaming algorithm that for every given $\eps > 0$ processes entries of any $n$-times-$n$ matrix $A$ in a random-order stream 
	in $2^{o(1/\eps)} \cdot n^{o(1)}$ space cannot estimate the rank of $A$ to within an  $\eps \cdot n$ additive factor with probability at least $2/3$. 
	
	The lower bound holds even for sparse matrices with two non-zero entries per row and column. 
\end{corollary}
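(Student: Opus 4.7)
The plan is to reduce to the matching-size lower bound of~\Cref{thm:matchingest} via the classical correspondence between maximum matchings and ranks of Tutte matrices, instantiated in the streaming setting by~\cite{BuryS15}. Given a graph $G=(V,E)$, let $T(G)$ denote the $n \times n$ skew-symmetric Tutte matrix whose $(u,v)$-entry is an indeterminate $x_{uv}$ (with $T_{vu} = -x_{uv}$) for every $(u,v) \in E$ with $u<v$, and zero otherwise. A theorem of~\cite{Lovasz79} guarantees that $\mathrm{rank}(T(G)) = 2\mu(G)$, where $\mu(G)$ is the maximum matching number of $G$. Substituting each indeterminate independently with a uniformly random element of a sufficiently large field $\mathbb{F}$ (say $|\mathbb{F}| \geq n^3$) preserves this rank with probability $1-o(1)$ by Schwartz--Zippel. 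Crucially, when $G$ has maximum degree two --- as do the hard \NGC-based instances underlying~\Cref{thm:matchingest} --- the resulting matrix has at most two non-zero entries per row and per column, matching the sparsity promised by the corollary.

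The streaming reduction is local and immediate: whenever an edge $(u,v)$ arrives in the edge stream of $G$, sample a field element $r_{uv} \in \mathbb{F}$ using public randomness and feed the two matrix entries $(u,v,r_{uv})$ and $(v,u,-r_{uv})$ to a hypothetical streaming algorithm $\mathcal{A}$ for rank. Because the transformation is purely local --- each edge independently determines its own pair of matrix entries via shared randomness --- invoking $\mathcal{A}$ inside the robust communication protocol of~\Cref{prop:stream-cc} (exactly as in~\Cref{thm:matchingest}) allows both players to simulate the matrix stream from their own shares of the edges. An $\eps\cdot n$ additive estimate of $\mathrm{rank}(T(G))$ yields an $(\eps/2)\cdot n$ additive estimate of $\mu(G)$, so applying~\Cref{thm:matchingest} with $\eps' = \eps/2$ immediately gives the claimed $2^{\Omega(1/\eps)} \cdot n^{o(1)}$ lower bound; the $O(\log n)$ overhead per matrix entry and the factor-of-two slack in the error are both absorbed into the $n^{o(1)}$ slack, since our exponential-in-$1/\eps$ dependence is unaffected by constant rescalings of $\eps$.

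The main subtlety --- and what I view as the principal technical obstacle --- is ensuring that the induced stream of matrix entries qualifies as a ``random-order stream'' of entries in the sense required by the statement: each edge emits two naturally coupled matrix entries that cannot be fully decoupled by our local reduction, so the distribution on orderings of the $\Theta(n)$ non-zero entries is not uniform. This is not a real obstacle, however, because the lower bound flows through the robust communication framework of~\Cref{prop:stream-cc}, where the randomness is at the level of (edges and hence) entry-pairs that are randomly partitioned between the two players, and each player may randomly permute their own entries. This is precisely the distributional model in which~\Cref{thm:matchingest} was proved, and any single-pass algorithm for rank under truly random-order streams is in particular a single-pass algorithm against the slightly more structured ordering produced by our reduction, so the reduction goes through without loss.
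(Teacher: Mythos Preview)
Your approach is essentially the same as the paper's: both reduce to the matching lower bound of \Cref{thm:matchingest} via the Tutte matrix rank correspondence of \cite{Lovasz79}, invoking \cite{BuryS15} for the streaming implementation. The paper in fact gives no proof beyond that one-sentence citation, so your write-up is considerably more detailed; in particular, you explicitly flag the random-order subtlety (that each edge produces two coupled matrix entries) and argue it is absorbed by working at the level of the robust communication reduction, which is more than the paper spells out.
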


\subsection{Planar Maximum Independent Set} 

We can also prove a similar lower bound for the problem of estimating the maximum independent set size on planar graphs. Given a planar graph $G=(V,E)$ in a random-order stream, 
estimate the size of the largest independent set in $G$ to within an additive $\eps \cdot n$ factor. 

\cite{PengS18} gave an algorithm for this problem that outputs a $(1+\eps)$ multiplicative approximation using $2^{2^{2^{\log^{O(1)}\!{(1/\eps)}}}} \cdot (\log{n})$ space. 
We prove the first lower bound for this problem showing that at least an exponential dependence on $\eps$ is necessary even on an extremely simple family of graphs. 

\begin{corollary}\label{thm:indsetest}
	Any single-pass streaming algorithm that for every given $\eps > 0$ processes edges of any graph $G$ with $n$ vertices in a random-order stream 
	in $2^{o(1/\eps)} \cdot n^{o(1)}$ space cannot estimate the size of maximum independent set in $G$ to within an  $\eps \cdot n$ additive factor with probability at least $2/3$. 
	
	The lower bound holds even for planar graphs of maximum degree two, namely, on disjoint-unions of cycles and paths. 
\end{corollary}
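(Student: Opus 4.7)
The plan is to mirror the reduction already used for connected components (Corollary~\ref{thm:ccest}) and for matching size (Corollary~\ref{thm:matchingest}): embed the two cases of $\NGC_{n,k}$ directly, verify that the maximum independent set sizes in the two cases differ by $\Theta(n/k)$, and then invoke~\Cref{thm:main-ngc} through~\Cref{prop:stream-cc}. Since $\NGC_{n,k}$ instances are disjoint unions of cycles and paths, they are already planar with maximum degree two, so no additional graph surgery is required.

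The one combinatorial fact to check is the gap in the maximum independent set size between the two cases. Recall that a path on $k$ vertices has independence number $\lceil k/2 \rceil$ and a cycle of length $\ell$ has independence number $\lfloor \ell/2 \rfloor$. The $(n/2k)$ noisy paths of length $k-1$ (i.e., $k$ vertices each) contribute the same amount in both cases. For the cycles I would restrict attention to \emph{odd} $k$ (just as for the matching lower bound), which is consistent with the range $k \geq 1600 \ln n$ of~\Cref{thm:main-ngc}. Then in case $(i)$ the $(n/4k)$ cycles of length $2k$ contribute $(n/4k)\cdot k = n/4$, while in case $(ii)$ the $(n/2k)$ cycles of length $k$ (odd) contribute $(n/2k)\cdot (k-1)/2 = n/4 - n/(4k)$. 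Thus the maximum independent set sizes of the two cases differ by exactly $n/(4k)$.

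Given this gap, the rest is immediate. Fix $k = \Theta(\log n)$ odd and $k \geq 1600\ln n$, and choose $\eps < 1/(8k)$ so that an $\eps \cdot n$ additive estimate to the maximum independent set size decides between the two cases of $\NGC_{n,k}$. By~\Cref{prop:stream-cc}, any such random order streaming algorithm yields a robust one-way protocol with the same communication bound, and by~\Cref{thm:main-ngc} the latter requires $\Omega(n)$ bits. Since $2^{o(1/\eps)} \cdot n^{o(1)} = 2^{o(k)} \cdot n^{o(1)} = n^{o(1)}$ for this choice of $\eps$, such space would contradict the $\Omega(n)$ lower bound, completing the proof.

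There is no real obstacle beyond the combinatorial check; the only minor care-point is ensuring $k$ is odd (to avoid the degenerate case of even $k$, where a length-$2k$ cycle and two length-$k$ cycles have identical independence numbers), which is handled by shifting $k$ by one if necessary without changing any asymptotic parameter in the reduction.
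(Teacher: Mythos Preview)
Your proposal is correct and follows essentially the same approach as the paper: restrict to odd $k$, compute the independence numbers of the paths and cycles to exhibit a gap of $n/(4k)$ between the two cases of $\NGC_{n,k}$, and then conclude via~\Cref{prop:stream-cc} and~\Cref{thm:main-ngc} exactly as in~\Cref{thm:ccest}. Your remark about the necessity of odd $k$ is also the same care-point the paper (implicitly) relies on.
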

\begin{proof}
	For any odd $k$, in an instance of $\NGC_{n,k}$, each path of length $k-1$ has an independent set of size at most $(k+1)/2$ vertices. If the instance contains cycles of length $2k$, each such cycle has an independent set with $k$ vertices. The size of the maximum independent set is $(n/4k) \cdot k + (n/2k) \cdot \paren{(k+1)/2}$. If the instance instead has cycles of length $k$, then each such cycle has an independent set of $(k-1)/2$ vertices, and the maximum independent set is of size $(n/2k) \cdot \paren{(k+1)/2} + (n/2k) \cdot (k-1)/2$. There is a difference of $(n/4k)$ in either case of $\NGC_{n,k}$. 
	
	The rest of the lower bound follows exactly as in~\Cref{thm:ccest} by combining~\Cref{prop:stream-cc} with our~\Cref{thm:main-ngc}.  
\end{proof}

We note that~\Cref{rem:applies} applies to this problem as well. 

\subsection{Random Walk Generation}\label{sec:rw}

Next, we show that we also get an exponential lower bound for random walk generation in random order streams. To define this problem, we need some further notation. 

\begin{definition}[Pointwise $\eps$-closeness of distributions]\label{def:randwalkcloseness}
	We say that a distribution $p$ over support $\Omega$ is $\eps$-close to distribution $q$ over the same support $\Omega$, if for each $s \in \Omega$, $p(s) \in [1-\eps, 1+\eps] \cdot q(s)$.  
\end{definition}

\begin{definition}[$\eps$-approximate sample]\label{def:randwalkeps}
	Given a graph $G = (V,E)$ and a vertex $u \in V$, we say that $(X_0, X_1, \ldots, X_{\ell})$ is an $\eps$-approximate sample of the $\ell$ step random walk starting at $u$ if the distribution of $(X_0, \ldots, X_{\ell})$ is $\eps$-close pointwise (from \Cref{def:randwalkcloseness}) to the distribution of the $\ell$-step random walk starting at vertex $u$. 
\end{definition}

\begin{problem}[$\bm{(\ell,b,\eps,\delta)}$\textbf{-Random Walk}]\label{prob:randomwalk}
	Given a graph $G = (V, E)$ in a random order stream, a length $\ell$, a budget $b$, and error parameters $\eps, \delta$, generate $b$ independent $\eps$-approximate samples of the $\ell$-step random walk started at a vertex picked uniformly at random from $V$ with error bounded by $\delta$ in total variation distance.
\end{problem}

\cite{KallaugherKP22} gives an upper bound of $(1/\eps)^{O(\ell)} \cdot 2^{O(\ell^2)} \cdot b$ space for this problem for $\delta = 1/10$. For constant $\eps = \delta = 1/10$, \cite{ChipKKP22} gives a lower bound of $\ell^{\Omega(\sqrt{\ell})}$ when $b= 1$, and a lower bound of $\ell^{\Omega(\ell)}$ when $b = \Omega(4^{\ell})$ in the batch random order model. We give exponential in $\ell$ lower bounds in the random stream model for generating random walks.  

\begin{corollary}\label{thm:randwalklb}
	Any algorithm for generating $(\ell, c_1, 1/10, 1/10)$-random walks requires $2^{\Omega(\sqrt{\ell})}$ space and  generating $(\ell, c_2\cdot 4^{\ell}, 1/10, 1/10)$-random walks requires $2^{\Omega(\ell)}$ space for sufficiently large constants $c_1, c_2$, 
	and $\ell = \Theta(\log{n})$.  
\end{corollary}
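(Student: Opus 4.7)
The plan is to derive both lower bounds by reducing from $\NGC_{n',k}$ for suitable choices of $k$ and embedding a hard instance into the input graph for random walk generation. Given a streaming algorithm for the $(\ell,b,1/10,1/10)$-random walk problem, we use its $b$ approximate samples to build a statistical test distinguishing the two cases of $\NGC$; invoking~\Cref{prop:stream-cc} together with~\Cref{thm:main-ngc} will then translate the $2/3$-success test into an $\Omega(n')$ space lower bound for random-order streaming algorithms. Since the starting vertex of the walk is uniform over $V$, we will arrange that a constant (or suitably large) fraction of walks begin inside the embedded $\NGC$ portion.

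The central analytic ingredient is the observation that a $t$-step simple random walk on a $k$-cycle versus a $2k$-cycle has a distribution that is sensitive to the cycle length precisely when $t$ is comparable to $k^2$ (the mixing regime) or to $k$ (the wrap-around regime). Concretely, after $t$ steps on $\IIZ/m\IIZ$ the position has a known Fourier/heat-kernel form $\sum_{j} \cos^{t}(2\pi j/m)\, e^{2\pi i j x/m}$, and for $m \in \{k, 2k\}$ these distributions differ in total variation by a quantity $\Delta(t,k)$ that is $\Omega(1)$ when $t = \Theta(k^2)$ (because one walk has mixed and the other has not) and is $2^{-\Theta(k)}$ when $t = \Theta(k)$ (because wrap-around is a rare but quantifiable event). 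The $1/10$-pointwise closeness in~\Cref{def:randwalkeps} and the $1/10$ total-variation slack only perturb the test statistic by a constant factor and do not destroy detectability.

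For the first bound (constant $b$, space $2^{\Omega(\sqrt{\ell})}$), I would set $k = c \log n'$ with $c$ large enough to invoke~\Cref{thm:main-ngc} and choose $\ell = \Theta(k^2)$, so $k = \Theta(\sqrt{\ell})$. Embed $\NGC_{n',k}$ as the input graph (no padding needed; the noisy paths already make it balanced in length), take $c_1$ random walks, and apply a likelihood-ratio test on the empirical distribution of the walk endpoints restricted to the component containing the start; by the $\Omega(1)$ gap in $\Delta(\ell,k)$ this succeeds with probability $\ge 2/3$, so the space must be $\Omega(n') = 2^{\Omega(k)} = 2^{\Omega(\sqrt{\ell})}$. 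For the second bound (budget $b=c_2\cdot 4^{\ell}$, space $2^{\Omega(\ell)}$), I would instead set $k = \Theta(\ell)$ so $\ell = \Theta(\log n')$; a single walk has only a $2^{-\Theta(\ell)}$ advantage per component, but since $b/n' = \Theta(1)$ walks land in each typical cycle and we have $\Theta(n'/k)$ cycles, standard Chernoff-style amplification over $\Theta(b)$ independent walks boosts the total advantage to $\Omega(1)$, again yielding an $\Omega(n') = 2^{\Omega(\ell)}$ space lower bound via~\Cref{thm:main-ngc}.

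The main obstacle will be carrying out the quantitative comparison $\Delta(t,k)$ cleanly in the two regimes and proving that it survives both the $1/10$ pointwise approximation and the loss from the reduction — in particular, ruling out that an adversarial random walk sampler could concentrate its $1/10$ slack precisely on the test statistic we use. I expect this to be handled by choosing a symmetric test (e.g.\ parity of the number of distinct vertices visited, or displacement modulo an appropriate divisor) whose value is a function of the walk's trajectory and that deviates between the two cycle lengths by a margin substantially larger than any $\eps$-pointwise perturbation of the joint trajectory distribution can induce.
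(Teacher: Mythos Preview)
Your overall reduction scheme (embed $\NGC_{n,k}$, use the walk samples to distinguish $k$-cycles from $2k$-cycles, then invoke \Cref{prop:stream-cc} and \Cref{thm:main-ngc}) is correct and matches the paper. The difference is in the distinguishing test, and the paper's choice is substantially simpler than what you outline.

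The paper does not use any Fourier/heat-kernel analysis of the position distribution or a likelihood-ratio test. Instead it relies on the cover-time bound (\Cref{fac:covtime}): on a cycle of length $\leq 2k$, a walk of length $c_3 k^2$ visits \emph{every} vertex with constant probability, so one simply counts the number of distinct vertices appearing in the trajectory and reads off whether it is $k$ or $2k$. For the second bound the paper takes $\ell = 2k$ and observes that the event ``the walk always steps in the same direction for $2k$ steps and hence covers the cycle'' has probability at least $2^{-2k} = 2^{-\ell}$; with $c_2 \cdot 4^{\ell}$ independent walks, at least one such covering walk appears with constant probability, and again counting distinct vertices decides the case. In both regimes the test is a single trajectory event, so the $1/10$-pointwise closeness of \Cref{def:randwalkeps} can only shrink its probability by a factor $9/10$, and the $1/10$ total-variation slack costs an additive $1/10$---the robustness concern you raise at the end simply does not arise.

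Your route via the TVD gap $\Delta(t,k)$ between position distributions can be made to work, but it requires calibrating $\ell$ so the $k$-cycle has mixed while the $2k$-cycle has not (a specific constant window, since both mixing times are $\Theta(k^2)$), extracting cycle coordinates from vertex labels via the trajectory, and arguing that an adversarial $1/10$-close sampler cannot flatten the test statistic. None of this is necessary: the ``number of distinct vertices visited'' statistic you mention in your last sentence is exactly the paper's test and sidesteps all of these issues.
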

To prove this result, we need the  following classical result on the cover time of random walks. 
\begin{fact}[cf.~{\cite[Chapter 6]{MotwaniR95}}]\label{fac:covtime}
	Any random walk in a connected graph $G=(V,E)$ visits all the vertices in $G$ in expected time $O(\card{V} \cdot \card{E})$. 
\end{fact}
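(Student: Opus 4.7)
The plan is to prove this classical cover-time bound via the Aleliunas--Karp--Lipton--Lov\'asz--Rackoff spanning tree tour argument, which establishes the stronger quantitative bound $2m(n-1)$ on the expected cover time of any connected graph $G=(V,E)$ with $n=|V|$ and $m=|E|$. The approach has three conceptual stages: first set up hitting and commute times, then prove a per-edge bound, then aggregate across a spanning-tree tour.

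First, I would recall the basic setup. The simple random walk on $G$ is a reversible Markov chain with stationary distribution $\pi(v)=\deg(v)/(2m)$. For vertices $u,v\in V$ let $H(u,v)$ denote the expected hitting time from $u$ to $v$, and let the commute time be $C(u,v)=H(u,v)+H(v,u)$. The main technical ingredient I would establish is the following per-edge bound: for every edge $(u,v)\in E$, the commute time satisfies
\[
C(u,v)\;\leq\;2m.
\]
The cleanest way to see this is to lift the walk to the Markov chain on directed edges: the process $(X_t,X_{t+1})$ is itself an irreducible Markov chain on the $2m$ directed edges of $G$ with uniform stationary distribution, so the expected return time to any directed edge equals $2m$. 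Applied to the directed edge $(v,u)$, starting right after a transition $u\to v$, the walk must first hit $u$ (expected time $H(v,u)$) and then, once at $u$, traverse the edge to $v$; chaining these bounds and using that $H(u,v)$ is at most the expected time from $u$ to return by traversing $(u,v)$ gives $H(u,v)+H(v,u)\leq 2m$.

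With the per-edge commute bound in hand, I would fix any spanning tree $T$ of $G$ (which exists because $G$ is connected) and choose a closed walk $v_0,v_1,\ldots,v_{2n-2}=v_0$ on $T$ that traverses each of the $n-1$ tree edges exactly twice in opposite directions; such a tour exists by performing a depth-first traversal of $T$. Starting the random walk at $v_0$, the expected time to complete this tour in order is at most
\[
\sum_{i=0}^{2n-3} H(v_i,v_{i+1})\;=\;\sum_{\{u,v\}\in T}\bigl(H(u,v)+H(v,u)\bigr)\;=\;\sum_{\{u,v\}\in T}C(u,v)\;\leq\;(n-1)\cdot 2m,
\]
and any such tour in particular visits every vertex of $G$. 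Since the expected cover time from any starting vertex is bounded by the expected cover time after first walking to $v_0$, plus the tour length, and since one can always restart the argument at any vertex, the expected cover time is at most $2m(n-1)=O(|V|\cdot|E|)$, as claimed.

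The main obstacle is the per-edge commute bound $C(u,v)\leq 2m$; everything else is a clean combinatorial reduction through the tree tour. I would therefore present the directed-edge stationarity argument carefully, since the subtle point is translating ``expected return time to directed edge $(u,v)$ equals $2m$'' into a bound on $H(u,v)+H(v,u)$ rather than on return times of the vertex chain (which depend on $\deg(u)$ and are not by themselves useful). Once that bound is in place, the spanning-tree aggregation is immediate and yields the stated $O(|V|\cdot|E|)$ cover time.
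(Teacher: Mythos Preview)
The paper does not actually prove this statement; it is stated as a Fact with a citation to Motwani--Raghavan (Chapter~6) and used as a black box in the proof of Corollary~\ref{thm:randwalklb}. Your proposal is precisely the classical Aleliunas--Karp--Lipton--Lov\'asz--Rackoff spanning-tree tour argument that the cited reference contains, so there is nothing to compare against.

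One small remark on exposition: your sketch of the per-edge commute bound via the directed-edge chain is slightly garbled (you switch between the directed edges $(u,v)$ and $(v,u)$ mid-sentence and the ``chaining'' step is not quite the decomposition that yields $H(u,v)+H(v,u)\le 2m$). The clean version is: the expected return time to the directed edge $(v,u)$ is $2m$; starting just after traversing $(v,u)$ (so at $u$), this return time is at least the time to hit $v$ and then hit $u$ again, i.e., at least $H(u,v)+H(v,u)$. This is not a genuine gap---the statement and the overall architecture are correct---but you should tighten that paragraph before writing it up.
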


\begin{proof}[Proof of~\Cref{thm:randwalklb}]
	For the first lower bound, given an instance $G$ of $\NGC_{n,k}$, we pick $\ell =  c_3 k^2$. We generate $c_1$ independent $1/10$-approximate random walks of length $\ell$ with $1/10$ error in total variation distance. 
	We know that if $c_1$ vertices are picked uniformly at random, at least one vertex is a part of a cycle in $G$ with probability at least $1-\frac1{2^{c_1}}$. 	This is because only half of the vertices in $G$ are present in vertex disjoint paths of length $k-1$.
	
	For sufficiently large constant $c_3$, any random walk beginning at $v \in V$ of length at least $c_3 k^2$ visits all the vertices in the connected component of $G$ containing $v$ by \Cref{fac:covtime} with constant probability. Since we generate $1/10$-approximate random walks, the probability that any such walk visits all the vertices reduces by at most a $9/10$ multiplicative factor. The $1/10$ error in total variation distance reduces this probability by at most a $1/10$ additive factor. 
	
 If $v$ is a part of a cycle in $G$, Alice and Bob can use an algorithm for generating random walks to find if $v$ is part of a $k$-cycle or a $2k$-length cycle with probability at least 2/3. As $k =\sqrt{\ell/c_3} =  \Theta(\ln n)$, we require $\Omega(n) = 2^{\Omega(\sqrt{\ell})}$ space to do so by \Cref{thm:main-ngc}. 
	
	For the next lower bound, we pick $\ell =  2k$ and generate $c_2 \cdot 4^{\ell}$ $1/10$-approximate random walks with error at most $1/10$. As we are finding $c_2 4^{\ell}$ independent $1/10$-approximate random walks, with high probability, at least a constant fraction of these random walks begin with cycles in $G$.
	
	The probability that a truly random walk of length at least $2k$ in a cycle with at most $2k$ vertices visits all the vertices in the cycle is at least $2^{-2k}$ (the random walk has to always pick the unvisited edge which happens with probability $1/2$ for $2k $ steps).
	For a $1/10$-approximate random walk, this happens with probability at least $9/10 \cdot 2^{-2k} = 9/10 \cdot 2^{-\ell}$. 
	
	 For a large constant $c_3$, at least one of the $1/10$-approximate random walks visits all the vertices in the cycle with constant probability as the error due to the total variation distance is at most $1/10$.   
	 A cycle of $G$ is found with probability at least 2/3 and Alice and Bob can use this algorithm to find whether $G$ has cycles of length $k$ or $2k$. For $\ell = 2 k = \Theta(\ln n)$, we know that we require $\Omega(n) = 2^{\Omega(\ell)}$ space, again by \Cref{thm:main-ngc}. 
\end{proof}

\subsection{Lower Bounds for Stochastic Streams}\label{sec:stochastic} 

In the stochastic stream model, the algorithm receives $c \cdot m$ edges picked uniformly at random and independently from the graph for some $c > 0$, where $m$ denotes the number of edges in $G$. The goal as before is to estimate a certain parameter of the underlying graph 
with minimal space, and now additionally, minimal number of samples also, i.e., by minimizing the parameter $c$. To our knowledge, this model was first studied by~\cite{CrouchMVW16}. 

We show that our lower bounds continue to hold on stochastic streams of $c \cdot m$ edges for any constant $c > 0$. 
We will begin by defining the $\NGCst_{n,k}$ communication problem, an adaptation of $\NGC_{n,k}$ from random streams for a stochastic stream of $cm $ edges.  

\begin{problem}[\textbf{Stochastic Noisy Gap Cycle Counting (\NGCst)}]\label{prob:ngcstoch}
	For any integers $n,k \geq 1$, In $\NGCst_{n,k}$, we have a graph $G = (V,E)$ on $n $ vertices such that $G$ either contains: $(i)$  $(n/4k)$ vertex-disjoint cycles of length $2k$, or $(ii)$ $(n/2k)$ vertex-disjoint cycles of length $k$.
	In addition, in both cases, $G$ contains $(n/2k)$ vertex-disjoint paths of length $k-1$. 
	
	We pick $c \cdot \card{E}/2$ edges uniformly and independently from $E$ (with repetition) and add them to $E_A$, and similarly $c \cdot \card{E}/2$ edges are picked at random and added to $E_B$ (here, $c$, is the same constant as the one defined above for stochastic streams). Alice is given $E_A$ and Bob is given $E_B$. Their goal is to decide which 
	case the graph belongs to by Alice sending a single message to Bob. 
\end{problem}

Lower bounds on the communication of $\NGCst_{n,k}$ translate into lower bounds for stochastic streams, as follows.
\begin{proposition}\label{obs:stream-stoch}
	Any single-pass streaming algorithm that given a graph $G = (V, E)$ from $\mathcal{G}_n$ with $n/2k$ paths of length $k-1$, and either
	\begin{enumerate}[label=$(\roman*)$]
		\item $n/2k$ cycles of length $k$, or,
		\item $n/4k$ cycles of length $2k$,
	\end{enumerate} 
in a stochastic stream with $c \cdot \card{E}$ edges, computes which case the graph belongs to w.p. at least $1-\delta$ requires at least as much space as the communication complexity of $\NGCst_{n,k}$ with error at most $\delta$. 
\end{proposition}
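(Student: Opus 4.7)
The plan is to mirror almost verbatim the proof of \Cref{prop:stream-cc}, replacing random partitioning with i.i.d. sampling. The key observation is that a stochastic stream of length $c \cdot \card{E}$ is nothing more than a sequence of $c \cdot \card{E}$ i.i.d. draws from $E$, and such a sequence is exchangeable: its joint distribution is invariant under any permutation, and in particular under the concatenation order of any two sub-sequences drawn independently from the same distribution.

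Concretely, the protocol I would design proceeds as follows. Given a streaming algorithm $\alg$ that uses $s$ bits of space and solves the promise problem on a stochastic stream of $c\cdot\card{E}$ edges with error at most $\delta$, Alice runs $\alg$ on her $c\cdot\card{E}/2$ samples in some (e.g.\ the received) order, then sends the resulting memory state (of size at most $s$) as her message $\pi$ to Bob. Bob resumes $\alg$ from the state $\pi$, feeding it his $c\cdot\card{E}/2$ samples one by one, and finally outputs whatever $\alg$ outputs.

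The correctness step is the only thing to verify: the concatenated sequence of Alice's samples followed by Bob's samples is a sequence of $c\cdot\card{E}$ i.i.d. draws from $E$, since Alice's and Bob's samples are independent by the definition of $\NGCst_{n,k}$ and each individual draw is uniform on $E$. Therefore the view of $\alg$ is \emph{exactly} the distribution of a stochastic stream of length $c\cdot\card{E}$, and by assumption $\alg$ outputs the correct case with probability at least $1-\delta$ over this distribution. The communication used is the size of the intermediate memory state, namely at most $s$ bits, which upper bounds the communication complexity of $\NGCst_{n,k}$ by $s$.

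There is no real obstacle in this proof: the argument is a routine black-box simulation argument and is in fact slightly simpler than the random-order case in \Cref{prop:stream-cc}, since no additional random permutation step is needed to produce the right input distribution for $\alg$, only an exchangeability observation on i.i.d. samples.
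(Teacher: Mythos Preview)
Your proposal is correct and follows essentially the same simulation argument as the paper: Alice runs the streaming algorithm on her $c\card{E}/2$ samples, sends the memory state, and Bob continues on his samples and outputs the answer. The only cosmetic difference is that the paper has Alice and Bob each randomly permute their samples before feeding them in (mirroring the proof of \Cref{prop:stream-cc}), whereas you correctly note that for i.i.d.\ samples this permutation is redundant: the concatenation of two independent blocks of i.i.d.\ uniform draws from $E$ is already distributed as a single block of $c\card{E}$ i.i.d.\ uniform draws, so the algorithm sees exactly a stochastic stream either way.
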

\begin{proof}
	Alice and Bob can use the streaming algorithm to solve $\NGCst_{n,k}$: 
	Alice randomly permutes the edges of $G_A$, runs the algorithm over them, and sends the memory content as the message $\pi$ to Bob; 
	Bob takes a random permutation of $G_B$, continues running the algorithm, and outputs the answer. 
	This simulates a stochastic stream of $c\card{E}$ samples as each edge is sampled independently and given to Alice and Bob. The algorithm solves $\NGCst_{n,k}$ with error at most $\delta$. 
\end{proof}

By building on the approach in~\Cref{thm:main-ngc}, we further show that $\NGCst_{n,k}$ requires $\Omega(n)$ bits of communication using the same hard distribution $\distNGC$ for the underlying graphs. 

\begin{lemma}\label{lem:ngcstoch}
	For sufficiently large $n\geq 1$ and constant $c = O(1)$ with $k \geq 24 \cdot e^{9c}  \cdot \ln n$, 
	any one-way communication protocol for $\NGCst_{n,k}$ with probability of success at least $2/3$ requires $\Omega(n)$ communication. 
\end{lemma}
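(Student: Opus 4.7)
The plan is to mirror the proof of \Cref{thm:main-ngc}, adapting the two key structural notions---clean indices and active blocks---to the stochastic edge-sampling model. We reuse the hard distribution $\distNGC$ on the underlying graph $G$ and apply the same decorrelation/hybrid reduction as in \Cref{lem:main-ngc} and \Cref{alg:samplngch}, reducing $\NGCst_{n,k}$ on $\distNGC$ to a stochastic analogue $\DHXORst_{w,t}$ of the distributional hidden XOR problem, where the underlying graph is distributed as $\distDHX$ but its edges are delivered to Alice and Bob via the stochastic mechanism of \Cref{prob:ngcstoch}. This reduction transfers essentially unchanged because \Cref{alg:samplngch} embeds a $\DHXORst$ instance into a $\distNGC$ sample using public randomness and private inputs alone, and the stochastic partition mechanism is independent of the graph inputs $(X,\Sigma)$ (just as random coin-flips were in the original case); the only additional bookkeeping is sampling the multiplicities of the non-embedded edges from public randomness.

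The substantive task is then to prove an $\Omega(wt)$ communication lower bound for $\DHXORst_{w,t}$, mirroring \Cref{lem:dhx-lower} with the definitions of clean and active revised for the stochastic partition. We will say that an index $j$ is \emph{clean} in block $B$ if its two middle edges lie in $E_A \setminus E_B$ while its four outer edges lie in $E_B \setminus E_A$. Because each of the $c\card{E}/2$ draws to each player is independent and uniform over $E$, the probability that a single edge lies in $E_A \setminus E_B$ equals $p := (1-e^{-c/2}) \cdot e^{-c/2}$ (and symmetrically for $E_B \setminus E_A$), so the probability that any given index is clean equals $p^{6} = \Theta(e^{-3c})$. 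The stated hypothesis $k \geq 24 \cdot e^{9c} \ln n$ is comfortably larger than what we require, and Chernoff bounds analogous to \Cref{lem:clean-index,lem:active-block} then give, with probability $1 - 1/\mathrm{poly}(n)$, at least $\Omega(w \, e^{-3c})$ clean indices per block and at least $\Omega(t \, e^{-3c})$ active blocks.

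With these revised parameters, the Fourier-analytic argument of \Cref{sec:hxorlower} goes through essentially verbatim. The only property of clean indices that argument uses is informational: at a clean index, Alice alone sees the relevant XOR bit while Bob alone sees the relevant inverse-permutation edge. Our cleanness definition enforces exactly this by the set-difference notation, which explicitly excludes any edge that ends up in both $E_A$ and $E_B$ (an event that did not arise at all in the original random-partition model). Plugging $\clw = \Omega(w \, e^{-3c})$ and $\actt = \Omega(t \, e^{-3c})$ into the KKL bias calculation of \Cref{lem:dhx-lower} via \Cref{prop:kklparity} gives a communication lower bound of $\Omega(\clw \cdot \actt) = \Omega(w \, t \cdot e^{-6c})$. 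Since $n = \Theta(k w)$ and $t = \Theta(k)$, this is $\Omega(n \cdot e^{-6c})$, which for constant $c = O(1)$ is $\Omega(n)$. Combined with \Cref{obs:stream-stoch} and Yao's principle, this yields the claimed lower bound on $\NGCst_{n,k}$.

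The main obstacle will be checking that every independence argument in \Cref{sec:hxorlower}---most notably \Cref{obs:alternative-sampling,lem:active-block-pij,obs:cond-X,clm:not-clean-fixed}---remains valid in the stochastic model, where edges can now appear in both $E_A$ and $E_B$ simultaneously. The resolution is that the stochastic sampling mechanism is itself chosen independently of $(X, \Sigma)$, and our cleanness condition refers only to which multisets an edge lies in, never to $(X, \Sigma)$. Hence conditioning on the set of clean indices and active blocks still leaves $X$ uniform and independent of Bob's view on the clean coordinates, and the non-black-box bias upper bound via \Cref{prop:kklparity} applies with only the expected $e^{-\Theta(c)}$ degradation in the constants.
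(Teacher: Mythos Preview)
Your proposal is correct and follows essentially the same route as the paper: define the stochastic analogue $\DHXORst_{w,t}$, carry over the hybrid reduction of \Cref{alg:samplngch} (with public-randomness sampling of the padded edges' multiplicities), redefine a clean index to require the two middle edges in $E_A\setminus E_B$ and the four outer edges in $E_B\setminus E_A$, and rerun the Chernoff and KKL arguments with the resulting constant clean-probability. Your per-edge probability estimate $p\approx(1-e^{-c/2})e^{-c/2}$ is sharper than the paper's cruder $1/e^{3c/2}$ bound (yielding $p^6=\Theta(e^{-3c})$ versus their $e^{-9c}$), which is why the stated hypothesis on $k$ has slack; the only extraneous step is your appeal to \Cref{obs:stream-stoch}, which is about streaming and is not needed for the communication lower bound itself.
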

The proof of \Cref{lem:ngcstoch} can be found in \Cref{app:ngcstoch}. 

All our lower bounds in for random order streams (\Cref{thm:ccest,thm:mstest,thm:matchingest,thm:rankest,thm:indsetest,thm:randwalklb}) apply for stochastic streams as well, as a direct consequence of \Cref{lem:ngcstoch} and \Cref{obs:stream-stoch}. In particular,

\begin{corollary}\label{thm:stochastic}
	Let $c > 0$ be a fixed constant. Any streaming algorithm that for every given $\eps > 0$ and graph $G = (V,E)$ processes $c \cdot \card{E}$ uniform samples of edges from $E$ in a stochastic stream and uses
	$2^{o(1/\eps)} \cdot n^{o(1)}$ space cannot solve any of the following problems with probability of success at least $2/3$: 
	\begin{enumerate}[label=$(\roman*)$]
		\item estimating the number of connected components in $G$ to within an  $\eps \cdot n$ additive factor; 
		\item estimating the weight of minimum spanning trees to within an $(1+\eps)$ multiplicative approximation factor (even when the weights are integers and constant); 
		\item estimating the size of maximum matching in $G$ to within an  $\eps \cdot n$ additive factor (even on bounded-degree planar $G$); 
		\item estimating the size of maximum independent set in $G$ to within an  $\eps \cdot n$ additive factor (even on bounded-degree planar  $G$).
	\end{enumerate}
\end{corollary}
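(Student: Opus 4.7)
The plan is to observe that the only place where the random-order nature of the stream was used in the proofs of \Cref{thm:ccest,thm:mstest,thm:matchingest,thm:indsetest} is in the invocation of \Cref{prop:stream-cc} to convert a robust one-way communication lower bound for $\NGC_{n,k}$ into a random-order streaming lower bound. Since \Cref{obs:stream-stoch} provides the analogous bridge from communication complexity of $\NGCst_{n,k}$ to stochastic-stream space, and \Cref{lem:ngcstoch} supplies an $\Omega(n)$ lower bound on the communication complexity of $\NGCst_{n,k}$ whenever $k \geq 24 \cdot e^{9c} \cdot \ln n$, the entire reduction pipeline carries through verbatim once we replace \Cref{thm:main-ngc} by \Cref{lem:ngcstoch} and \Cref{prop:stream-cc} by \Cref{obs:stream-stoch}.

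Concretely, for part~$(i)$ I would, given any stochastic-stream algorithm $\mathcal{A}$ for $\eps \cdot n$-additive connected-component estimation, instantiate $\NGCst_{n,k}$ with $k$ a sufficiently large constant multiple of $\ln n$ (large enough that \Cref{lem:ngcstoch} applies for the given $c$) and $\eps < 1/(8k)$: since the two cases of $\NGCst_{n,k}$ differ in the number of connected components by $n/(4k) > 2\eps n$, the algorithm distinguishes them and \Cref{obs:stream-stoch} yields an $\Omega(n) = 2^{\Omega(1/\eps)}$ lower bound on its space, contradicting a $2^{o(1/\eps)} \cdot n^{o(1)}$ upper bound. Parts~$(iii)$ and $(iv)$ follow identically, using the same gap calculations ($n/(4k)$ in maximum matching / maximum independent set size on odd-$k$ NGC instances) as in the proofs of \Cref{thm:matchingest,thm:indsetest}. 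Part~$(ii)$ uses the weighted padding construction from \Cref{thm:mstest}, which only augments the $\NGC$ instance with fixed (non-random) auxiliary edges of bounded integer weight and whose gap analysis is independent of how the stream is presented.

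The only point that requires a small sanity check is that the weighted reduction of \Cref{thm:mstest} remains valid in the stochastic model: the augmented graph $G'$ has $\card{E(G')} = \Theta(\card{E(G)})$, so a stochastic stream of $c \cdot \card{E(G')}$ samples of $G'$ can be viewed as (at most) a constant-factor-$c'$ stochastic stream of the underlying $\NGCst$ instance with the auxiliary edges hard-coded (e.g.\ the players can locally insert the appropriate number of copies of each auxiliary edge into their own streams). This preserves the premise of \Cref{lem:ngcstoch} with $c$ replaced by $c'$, at the cost only of enlarging the hidden constant in the bound $k \geq 24 \cdot e^{9c} \cdot \ln n$; since we take $k = \Theta(\log n)$ with a sufficiently large constant, the argument goes through without modification. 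I do not expect any genuinely new obstacle here—the whole point of \Cref{lem:ngcstoch} is that it was designed so that the downstream reductions from \Cref{sec:stream} apply as a black box, and so the proof of \Cref{thm:stochastic} reduces to citing the already-proven random-order proofs with the two bridge lemmas swapped.
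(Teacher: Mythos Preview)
Your proposal is correct and matches the paper's approach exactly: the paper states that \Cref{thm:stochastic} follows ``as a direct consequence of \Cref{lem:ngcstoch} and \Cref{obs:stream-stoch}'' by rerunning the reductions of \Cref{thm:ccest,thm:mstest,thm:matchingest,thm:indsetest} with \Cref{thm:main-ngc} and \Cref{prop:stream-cc} swapped out for their stochastic analogues. Your extra sanity check for part~$(ii)$---that the auxiliary edges in the MST reduction can be inserted locally by the players using public randomness---is precisely the mechanism the paper already uses in the proof of \Cref{lem:ngctodhxorstoch}, so no new idea is needed there either.
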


This concludes our list of implications of~\Cref{thm:main-ngc} to random order (and stochastic) streams.

\bigskip

\section*{Acknowledgement} 

We thank Michael Kapralov and Huacheng Yu for helpful discussions. 

\bigskip

\bibliographystyle{alpha}
\bibliography{ref}

\clearpage
\appendix

\section{Noisy Gap Cycle Counting in Stochastic Streams}\label{app:ngcstoch}

This section is dedicated to proving \Cref{lem:ngcstoch}. First, let us define the $\DHXORst_{\ww, t}$ problem, similar to $\DHXOR_{\ww, t}$ except for how the edges in $E_A, E_B$ are distributed for a stochastic stream of $cm$ edges.

\begin{problem}[\textbf{Stochastic Hidden-XOR Problem} (\DHXORst)]\label{def:dhxorstoch}
	In $\DHXORst_{\ww,t}$, we have a graph $G = \multiblock{X,\Sigma} \in \LG_{\ww,3t+1}$ for $X \in (\set{0,1}^{\ww})^{t}$ and $\Sigma \in (\mathcal{S}_{\ww})^t$ chosen independently and uniformly at random. The goal is to output $z_G(1)$ in this graph.
	
	We pick $c \cdot \card{E}/2$ edges uniformly and independently from $E$ (with repetition) and add them to $E_A$, and similarly $c \cdot \card{E}/2$ edges are picked at random and added to $E_B$ (here, $c$, is the same constant as the one defined above for stochastic streams). Alice is given $E_A$ and Bob is given $E_B$. Their goal is to decide which 
	case the graph belongs to by Alice sending a single message to Bob. \end{problem}

We can prove that $\DHXORst_{\ww, t}$ reduces to $\NGCst_{n,k}$ in stochastic streams as well. 

\begin{lemma}\label{lem:ngctodhxorstoch}
	For any sufficiently large $n \geq 1$ and $k = 3t+1$ for some $t \geq 1$ and $\ww = (n/4k)+1$, the distributional communication complexity of $\DHXORst_{\ww, t}$ for probability of success at least $\frac12 + \frac1{6\ww}$ is at most as much as the determinisitic communication complexity of $\NGCst_{n,k}$ over distribution $\distNGC$ for probability of success at least $2/3$.
\end{lemma}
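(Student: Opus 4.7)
The plan is to follow the proof of~\Cref{lem:main-ngc} essentially verbatim, replacing only the combinatorial partitioning inside~\Cref{alg:samplngch} by a sampling procedure that respects the stochastic-stream model. The hybrid argument of~\Cref{lem:hybridindex} carries over unchanged, since it depends only on the marginal output distribution of the protocol $\Pi$ and is insensitive to how edges are delivered: any protocol $\Pi$ for $\NGCst_{n,k}$ succeeding on $\distNGC$ with probability at least $2/3$ yields an index $\istar \in [m]$ for which the induced distinguishing advantage between $\distHNGC(\istar-1)$ and $\distHNGC(\istar)$ is at least $1/(6m)$.

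The main new step is a stochastic analog of~\Cref{alg:samplngch}. Given a $\DHXORst_{\ww,t}$ instance in which Alice and Bob each hold a stochastic stream $S_A, S_B$ of DHXOR-edge samples, the plan is for them to simulate stochastic streams on an NGC graph drawn from $\distHNGC^*$ whose $\istar$-th block is exactly the embedded DHXOR graph. The players first draw $\Sigma_{[m]\setminus\set{\istar}}$ and $X_{[m]\setminus\set{\istar}}$ using public randomness as in steps~$(i)$--$(v)$ of~\Cref{alg:samplngch}; this fixes every edge of the NGC graph outside the $\istar$-th block (the ``fixed'' set $E_{\textnormal{fix}}$) as well as an embedding $\phi$ sending each DHXOR edge to its image inside the $\istar$-th block. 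For each of her NGC-stream slots, a player then tosses a public coin of bias $|E_{\textnormal{dhx}}|/|E_{\textnormal{ngc}}|$: on heads she consumes the next unused sample from her DHXOR stream and outputs its image under $\phi$, on tails she draws a uniformly random edge of $E_{\textnormal{fix}}$ privately. Conditioned on not exhausting the DHXOR stream, each generated NGC sample is marginally uniform over the NGC edge set and mutually independent from the rest, so by the obvious extension of~\Cref{lem:hybridsampling} the simulated input is a bona fide stochastic stream on a graph from $\distHNGC^*$, and by~\Cref{clm:istaris1} the answer $z_{G'}(\istar)$ of this embedded NGC instance coincides with the desired $z_G(1)$ of the DHXOR instance. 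Running $\Pi$ on this simulated stream then solves $\DHXORst_{\ww,t}$ with exactly the same message length as $\Pi$.

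The main obstacle is bounding the ``overflow'' probability that a player exhausts her DHXOR stream. The number she consumes is distributed as $\mathrm{Bin}(c\,|E_{\textnormal{ngc}}|/2,\ |E_{\textnormal{dhx}}|/|E_{\textnormal{ngc}}|)$ with mean $c\,|E_{\textnormal{dhx}}|/2$, which matches her budget in expectation, so that a naive implementation overflows with probability about $1/2$---catastrophic compared with the hybrid advantage $1/(6m)$. The fix is to run the reduction with a slightly inflated DHXOR sample rate $c' = (1+\eta)c$ for a small constant $\eta > 0$, so that~\Cref{prop:chernoff} bounds each player's overflow probability by $\exp\paren{-\Omega(\eta^2 c\,|E_{\textnormal{dhx}}|)}$, which is $o(1/w)$ since $|E_{\textnormal{dhx}}| = \Omega(n)$. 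Because $m = w-1$, the resulting $(1 - o(1/w))$ factor on the hybrid advantage leaves the success probability for $\DHXORst_{\ww,t}$ at least $\tfrac12 + \tfrac{1}{6w}$, as claimed. The mild multiplicative inflation $c' = (1+\eta)c$ of the sample rate is harmless for the downstream invocation in~\Cref{lem:ngcstoch} and is absorbed into the hypothesis $k \geq 24 \cdot e^{9c} \cdot \ln n$ of that lemma.
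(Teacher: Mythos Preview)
Your approach matches the paper's: both reuse the hybrid argument of~\Cref{lem:hybridindex} and the embedding of~\Cref{alg:samplngch}, noting (as the paper does explicitly) that~\Cref{lem:hybridsampling} and~\Cref{clm:istaris1} depend only on the underlying graph distribution and not on how edges are delivered to the players. The paper's own proof is quite terse on precisely the step you scrutinize---it simply asserts that each player ``can generate a stochastic stream of $c\card{E'}/2$ edges in $G'$ with public randomness and the edges given to them from $G$''---whereas you correctly observe that the natural coin-flip implementation exhausts the available $c\card{E}/2$ DHXOR samples with probability close to $1/2$, which would swamp the $1/(6m)$ hybrid advantage. Your inflation to rate $c'=(1+\eta)c$ together with~\Cref{prop:chernoff} is a clean and correct fix; it technically proves the statement with $\DHXORst$ run at rate $c'$ rather than $c$, but since the downstream lower bound for $\DHXORst$ (\Cref{lem:dhxor-lower-stoch}) holds uniformly for any constant sampling rate, this slack is indeed absorbed in~\Cref{lem:ngcstoch}, exactly as you say.
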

\begin{proof}
	Alice and Bob use the first $(vi)$ steps of \Cref{alg:samplngch} to get an instance of $\NGCst_{n,k}$ given an instance $G = (V, E)$ of $\DHXORst_{\ww, t}$. \Cref{lem:hybridindex,lem:hybridsampling} still hold for $\NGCst_{n,k}$ and $\DHXORst_{\ww, t}$ as well because they are only about distribution $\distHNGC$ and the sampling process in \Cref{alg:samplngch}. Given a streaming algorithm for $\NGCst_{n,k}$, Alice and Bob have an instance $G'  = (V', E')$ of $\NGCst_{n,k}$ where the edges added to $G'$ which are not in $G$ are public. $c\card{E}/2$ edges in $G$ are given privately to both Alice and Bob, with each picked independently and randomly. They can each generate a stochastic stream of $c \card{E'}/2$ edges in $G'$ with public randomness and the edges given to them from $G$ without any additional communication. By \Cref{clm:istaris1}, we know that they can output $z_G(1)$ with probability at least $\frac12 + \frac1{6\ww}$. 
\end{proof}

If we prove a lower bound on the distributional communication complexity of $\DHXORst_{\ww, t}$ against probability of success $\frac12 + \frac1{6\ww}$ of $\Omega(n)$, we are done. We will show that all of the results in \Cref{sec:hxorlower} can be applied to stochastic streams as well. 

We will start by updating our definitions of clean indices and active blocks for stochastic streams. We will then show that these clean indices and active blocks appear with high probability in $\DHXORst_{\ww,t}$ problem. 

\begin{definition}\label{def:cleanindstoch}
	We say that an index $j \in [\ww]$ in some block $B = \block{x, \sigma}$ is \textbf{clean} iff the following holds for vertices of groups $g^2_j = (a^2_j,b^2_j)$ and $g^3_j = (a^3_j,b^3_j)$ in layers $V^2$ and $V^3$ of $B$.
	\begin{itemize} 
		\item Edges $(a^1_{\sigma^{-1}(j)}, a^2_j), (b^1_{\sigma^{-1}(j)}, b^2_j), (a^3_j, a^4_{\sigma^{-1}(j)}), (b^3_j, b^4_{\sigma^{-1}(j)})$ do not belong to $E_A$, and belong to $E_B$.  
		\item Edges from group $g^2_j$ to group $g^3_j$ do not belong to $E_B$, and belong to $E_A$. 
	\end{itemize}
We let $\clean{B}$ denote the set of clean indices in $B$.
\end{definition}

To analyze the number of clean indices, we will first lower bound the probability that an index is clean. 
\begin{claim}\label{clm:ind-clean-stoch}
	For any index $j \in [\ww]$, the probability that $j $ is a clean index is at least $\paren{\frac1{e^{9c}}}$. 
\end{claim}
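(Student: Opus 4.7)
My plan is to identify the six edges implicated in the definition of a clean index and exploit the independence of Alice's and Bob's sampling processes to factor the probability, then bound each factor by a short inclusion--exclusion calculation.

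First I will spell out the six edges: the four edges $S := \{(a^1_{\sigma^{-1}(j)},a^2_j),\,(b^1_{\sigma^{-1}(j)},b^2_j),\,(a^3_j,a^4_{\sigma^{-1}(j)}),\,(b^3_j,b^4_{\sigma^{-1}(j)})\}$ which must lie in $E_B \setminus E_A$, and the two edges $T$ from $g^2_j$ to $g^3_j$ which must lie in $E_A \setminus E_B$. Since $E_A$ and $E_B$ arise from \emph{independent} samplings from $E$, the probability that $j$ is clean factors as $P_A \cdot P_B$, where $P_A$ is the event that none of Alice's $N := c|E|/2$ samples hits $S$ while each edge of $T$ is hit at least once, and $P_B$ is the symmetric event with the roles of $S$ and $T$ swapped.

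To bound $P_A$ from below, I will first condition on all $N$ of Alice's samples avoiding $S$ (probability $(1-4/|E|)^N$), so that the remaining samples are i.i.d.\ uniform on $E \setminus S$. Then I will apply inclusion--exclusion on the two events ``edge $t_1 \in T$ missed'' and ``edge $t_2 \in T$ missed'' to obtain the conditional probability that both edges of $T$ are hit as $\bigl(1 - \bigl(\tfrac{|E|-5}{|E|-4}\bigr)^N\bigr)^2$. Using the estimate $(1 - a/|E|)^{c|E|/2} = e^{-ac/2}(1 \pm o(1))$, valid because $|E| = \Theta(n)$ is large, this yields $P_A \geq (1-o(1)) \cdot e^{-2c} \cdot (1 - e^{-c/2})^2$. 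An analogous computation for $P_B$, now with four ``hit'' and two ``miss'' constraints, gives $P_B \geq (1-o(1)) \cdot e^{-c} \cdot (1 - e^{-c/2})^4$. Multiplying, I obtain
\[
\Pr\paren{j \text{ is clean}} \;\geq\; (1-o(1)) \cdot e^{-3c} \cdot (1-e^{-c/2})^6.
\]

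Finally, to conclude that this quantity is at least $1/e^{9c}$, I only need the elementary inequality $1 - e^{-c/2} \geq e^{-c}$, which holds for any $c$ bounded below by a small absolute constant (numerically, already for $c \geq 1$: $1 - e^{-1/2} \approx 0.393 > 0.368 \approx e^{-1}$). Since the stochastic-stream model only becomes harder as $c$ decreases, the claim may be assumed with $c$ at least such a constant without loss of generality. The one piece of bookkeeping I anticipate needing care is the direction of the finite-$|E|$ inequalities: because the inclusion--exclusion expansions are signed, I will want to keep them in the factored form $(1 - q)^k$ above, so that all corrections point the same way and are absorbed into the $(1-o(1))$ prefactor. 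Modulo this standard bookkeeping, the argument is a short coupon-collector style calculation.
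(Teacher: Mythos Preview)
Your approach is correct and follows the same coupon-collector calculation as the paper, only more carefully. The paper simply bounds the per-edge probability $\Pr[e \notin E_B,\, e \in E_A] \geq e^{-3c/2}$ (and symmetrically) and then multiplies the six bounds, appealing to ``independence''---which is literally true between Alice's and Bob's samplings but not across the six edges within one player's sampling; the bound survives only because the per-edge estimate $e^{-3c/2}$ is already loose. You instead factor cleanly as $P_A \cdot P_B$ via the genuine independence of the two players' draws and then handle each factor by conditioning on the ``miss'' set, which is the right decomposition.

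One small slip to fix: inclusion--exclusion on the two events ``$t_1$ missed'' and ``$t_2$ missed'' yields $1 - 2p + q$ with $q = \bigl(\tfrac{|E|-6}{|E|-4}\bigr)^N$, not the product $(1-p)^2$. Since $q < p^2$, the conditional ``both hit'' probability is strictly smaller than your factored expression, so the inequality points the wrong way for a lower bound. However, $p^2 - q = O(1/|E|)$, so this is absorbed into your $(1-o(1))$ prefactor exactly as you anticipated; just state it as an asymptotic rather than an identity. Your final inequality $1 - e^{-c/2} \geq e^{-c}$ (needing roughly $c \geq 1$) is in fact weaker than what the paper implicitly uses ($1 - e^{-c/2} \geq e^{-c/2}$, needing $c \geq 2\ln 2$), and both proofs tacitly assume $c$ is bounded below by such a constant---your monotonicity remark is not needed and is not quite the right justification, since it is the \emph{claim} (not the downstream lower bound) whose validity depends on $c$.
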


\begin{proof}
		For any edge $e $ in block $B$, since $c\card{E}/2$ edges are given to Alice and $c \card{E}/2$ edges are given to Bob, for sufficiently large $\card{E}$,
	\begin{align*}
		\prob{ e \notin E_B} &= \paren{1-\frac1{\card{E}}}^{c \card{E}/2} \geq \frac1{e^{c}}, \\
		\prob{e \notin E_B, e \in E_A} &\geq \frac1{e^c} \cdot \paren{1-\paren{1-\frac1{\card{E}}}^{c\card{E}/2}} \geq \frac1{e^c} \cdot \paren{1-\frac1{e^{c/2}}} \geq \frac1{e^{3c/2}} \\
		\prob{ e \notin E_A, e \in E_B} &\geq \frac1{e^{3c/2}},
	\end{align*}
where we have used that $e^{-2x} \leq 1-x \leq e^{-x}$ for $0 < x < 1/2$. 
For index $j$ to be clean, the 6 edges associated with it must be distributed according to \Cref{def:cleanindstoch}. As the process of distributing edges is independent,
the probability that index $j$ is clean is at least $  \frac1{e^{9c}}$.
\end{proof}

 We can prove that a constant fraction of the indices are clean even with our updated definition.
\begin{lemma}\label{lem:clean-indexstoch}
	For any fixed block $B$, over the randomness of edges given to $E_A$ and $E_B$, 
	\[
	\Pr\Paren{\card{\clean{B}} \geq \ww/2e^{9c}} \geq 1- 1/\ww^4,
	\] 
	for some constant $\cons$. 
\end{lemma}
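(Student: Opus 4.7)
The plan is to mimic the proof of Lemma 4.1 almost verbatim. Let $R_j$ be the indicator that index $j \in [\ww]$ is clean (in the sense of Definition A.1), and set $R = \sum_{j \in [\ww]} R_j$. Claim A.3 already gives $\Exp[R_j] \geq 1/e^{9c}$, so $\Exp[R] \geq \ww/e^{9c}$. The target bound $R \geq \ww/(2 e^{9c})$ with probability at least $1 - 1/\ww^4$ will then follow from a lower-tail Chernoff bound with $\eps = 1/2$, provided we can apply such a bound to $R$.

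The only real difference with Lemma 4.1, and the place that requires care, is that the $R_j$'s are no longer independent: in the stochastic stream, whether a particular edge lands in $E_A$ (or $E_B$) depends on a multinomial sampling process over all $\card{E}$ edges. To handle this, I would make two observations. First, the six edges whose $E_A/E_B$-membership determine $R_j$ are disjoint from the corresponding six edges for any other index $j' \neq j$, since in each perfect matching of a block every edge is associated to a unique index. Second, the multinomial count vectors $(X_{e,A})_{e \in E}$ and $(X_{e,B})_{e \in E}$ are each negatively associated, the two vectors are independent of each other, and indicators of monotone events on disjoint coordinates of a negatively associated family remain negatively associated. Consequently $\{R_j\}_{j \in [\ww]}$ is a negatively associated family of $0/1$ random variables.

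The lower-tail Chernoff bound of Proposition 2.2 extends, by the standard moment generating function argument, to sums of negatively associated $0/1$ random variables without any change in constants. Applying it with $\eps = 1/2$ and $\Exp[R] \geq \ww/e^{9c}$ gives
\[
\Pr\!\left(R < \frac{\ww}{2 e^{9c}}\right) \leq \exp\!\left(-\frac{\Exp[R]}{12}\right) \leq \exp\!\left(-\frac{\ww}{12 \, e^{9c}}\right) \leq \frac{1}{\ww^4},
\]
where the last inequality holds for all sufficiently large $\ww$ because $c$ is an absolute constant (concretely, whenever $\ww \geq 48\, e^{9c} \ln \ww$). The main step needing attention is verifying negative association; everything else is a direct transcription of the proof of Lemma 4.1.
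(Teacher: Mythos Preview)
Your overall plan mirrors the paper's proof: bound $\Exp[R]$ via the preceding claim and apply a lower-tail Chernoff bound. The paper in fact simply asserts that the indicators $R_j$ are independent and invokes the Chernoff bound directly; you correctly notice that independence does not literally hold under multinomial sampling and propose negative association as the remedy.

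The gap is in your monotonicity claim. By the stochastic definition of a clean index, the event ``$j$ is clean'' requires four specific edges to be \emph{absent} from $E_A$ and two other edges to be \emph{present} in $E_A$ (and symmetrically for $E_B$). Thus $R_j$, viewed as a function of the count vector $(X_{e,A})_e$, is increasing in two coordinates and decreasing in four others---it is not a monotone function. The closure property you invoke (monotone functions of disjoint blocks of an NA family are again NA) requires all the functions to be monotone in the same sense, so it does not apply. One can in fact check on small instances (few edges, few draws) that the $R_j$'s may be \emph{positively} correlated, so negative association of $\{R_j\}$ genuinely fails in general.

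Your instinct to flag the dependence was right---the paper's own proof glosses over precisely this point---but the specific fix needs reworking. One clean repair is Poissonization: replace the deterministic $c\card{E}/2$ draws on each side by $\mathrm{Poisson}(c\card{E}/2)$ draws, so that the per-edge counts become independent Poissons and the $R_j$'s are genuinely i.i.d.; then bound the de-Poissonization error. Alternatively, apply McDiarmid's bounded-differences inequality to $R$ as a function of the $c\card{E}$ independent draws (changing one draw flips at most two of the $R_j$'s); in the intended regime $t = O(\ln \ww)$ this already yields a tail bound far stronger than $1/\ww^4$.
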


\begin{proof}
	Let $\indclean_j$ be the indicator function for whether index $j \in [\ww]$ is clean in block $B$. We want to prove that $\indclean = \sum_{j \in [w]} \indclean_j$ is at least $\ww/2e^{9c}$ with high probability. 

	For each $j \in [\ww]$, $\indclean_j = 1$ with probability at least $1/e^{9c}$  and 0 otherwise for each $j \in [\ww]$ independently by \Cref{clm:ind-clean-stoch}. Chernoff bound from \Cref{prop:chernoff} can be applied here. We know $\expect{\indclean} = \sum_{j=1}^{\ww} \expect{\indclean_j} = \ww/e^{9c}$. Thus, for large $\ww$, 
	\begin{align*}
		\prob{\indclean < \frac{\ww}{2e^{9c}}} &= \prob{\indclean < \paren{1-\frac12} \cdot \expect{\indclean}} \\
		&\leq \exp\paren{\frac{-1}{3} \cdot \frac{\ww}{2e^{9c}} \cdot \frac1{4}} \\
		&\leq \frac1{\ww^4}. \qedhere
	\end{align*}
\end{proof}

We retain our definition for active blocks, that is blocks where index $\sigma^i(1)$ is clean are active.
It is easy to see that even with our new notion of clean indices, $\sigma(1)$ is uniform over $\clean{B}$ conditioned on $B$ being active. 

\begin{lemma}\label{lem:active-block-pijstoch}
	Suppose we sample $B = \block{x,\sigma}$ conditioned on $B$ being active. Then, $\sigma(1)$ is chosen uniformly at random from $\clean{B}$.
\end{lemma}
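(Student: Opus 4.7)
Plan for the proof of \Cref{lem:active-block-pijstoch}.

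The proof will proceed in close parallel to that of \Cref{lem:active-block-pij}, exploiting the same independence structure, with only the source of randomness for the edge partition replaced. The key observation I will verify first is that the random assignment of edges to $E_A$ and $E_B$ in \Cref{def:dhxorstoch}---which is performed by independently sampling $c\card{E}/2$ edges with repetition for each player based only on $\card{E}$ and the constant $c$---is carried out independently of the sampling of the permutation $\sigma$ (and of the string $x$). Consequently, by the updated \Cref{def:cleanindstoch}, the set $\clean{B}$ is a measurable function solely of the stochastic edge-partition randomness, and in particular $\clean{B}$ is independent of $\sigma$.

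Given this independence, I will fix any realization of $\clean{B}$ and argue over the randomness of $\sigma$ alone. Because $\sigma$ is uniform on $\mathcal{S}_\ww$, the marginal $\sigma(1)$ is uniform on $[\ww]$. The event ``$B$ is active'' is, by \Cref{def:active-block}, exactly the event $\sigma(1) \in \clean{B}$, which therefore occurs with conditional probability $\card{\clean{B}}/\ww$ given the fixed $\clean{B}$. For any $c \in \clean{B}$, the identity $\{\sigma(1)=c\} \subseteq \{B \text{ active}\}$ then yields
\[
\prob{\sigma(1) = c \mid B \text{ active}, \clean{B}} \;=\; \frac{\prob{\sigma(1) = c \mid \clean{B}}}{\prob{B \text{ active} \mid \clean{B}}} \;=\; \frac{1/\ww}{\card{\clean{B}}/\ww} \;=\; \frac{1}{\card{\clean{B}}},
\]
exactly as in the calculation in the proof of \Cref{lem:active-block-pij}. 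Thus, conditioned on $B$ being active (and on $\clean{B}$), $\sigma(1)$ is uniform over $\clean{B}$.

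There is essentially no technical obstacle here: the proof is a direct translation of \Cref{lem:active-block-pij}, and the only thing one truly needs to check is the independence of $\clean{B}$ from $(\sigma,x)$ under the new stochastic edge-partition model. That independence holds by construction, since neither the number $c\card{E}/2$ of edges sampled by each player nor the per-edge inclusion probabilities depend on $\sigma$ or $x$. If anything deserves care, it is making the independence precise given that edges are sampled with repetition; however, ``is the edge $e$ contained in $E_A$?'' and ``is the edge $e$ contained in $E_B$?'' for each edge $e$ depend only on the $c\card{E}$ i.i.d.\ coordinate choices, which are a priori independent of $(\sigma, x)$, so this causes no issue.
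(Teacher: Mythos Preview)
Your proposal is correct and follows essentially the same approach as the paper's proof: both hinge on the observation that whether an index $j$ is clean depends only on the edge-partition randomness (equivalently, on which ``edge slots'' land in $E_A$ versus $E_B$) and not on $\sigma$, after which the Bayes computation $\Pr[\sigma(1)=c \mid B\text{ active}] = (1/\ww)/(\card{\clean{B}}/\ww)$ is identical. Your added remark about sampling with repetition is a reasonable point of care, but as you note it causes no difficulty.
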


\begin{proof}
	In any block, whether or not an index $j \in [\ww]$ is clean depends only on how the edges in block $B$ are given to Alice and Bob in $E_A$ and $E_B$. It does not depend on choice of permutation $\sigma(j)$. 
	Any block is active with probability $\clean{B}/{\ww}$, since index $\sigma(1)$ is chosen uniformly at random from $[\ww]$. 
	
	Let $\ell$ be any index in $\clean{B}$. Then,
	\begin{align*}
		\prob{\sigma(1) = \ell \mid \textnormal{$B$ is active}} &= \frac{\prob{\sigma(1) = \ell }}{\prob{ \textnormal{$B$ is active}}} \tag{$\ell \in \clean{B}$, $\sigma(1)=\ell$ implies $B$ is active}  \\
		&= \prob{\sigma(1)=\ell} \cdot\frac{\ww}{\clean{B}} \tag{as $B$ is active with probability $\frac{\clean{B}}{\ww}$}  \\
		&= \frac1{\ww} \cdot \frac{\ww}{\clean{B}}  \tag{as $\sigma(1)$ is uniformly random over $[\ww]$} \\
		&= \frac1{\clean{B}}.  \qedhere
	\end{align*}
\end{proof}

 We are able to prove that the number of active blocks is high in any multi-block graph in stochastic streams as well.

\begin{lemma}\label{lem:active-blockstoch}
	For $G = \multiblock{X,\Sigma}$ and $\mathcal{F}$ sampled in $\DHXORst_{w,t}$ with $t \geq 8 \cdot e^{9c}\cdot \ln \ww$: 
	\[
	\Pr\Paren{\card{\act{G}} \geq \ln{\ww}} \geq 1-1/\ww^2. 
	\]
\end{lemma}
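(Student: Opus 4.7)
The plan is to mirror the proof of~\Cref{lem:active-block} almost verbatim, substituting the stochastic-stream clean-index probability from~\Cref{clm:ind-clean-stoch} for the $1/2^6$ bound used in the random-partitioning setting. Concretely, for each $i \in [t]$ let $R_i$ be the indicator that block $B_i$ is active under $G \sim \DHXORst_{\ww,t}$. The first step is to observe that $R_i$ depends only on the permutation $\sigma^i$ and on how the edges of $B_i$ are sampled into $E_A,E_B$; since the blocks are edge-disjoint, the permutations are independent, and the $c|E|$ samples drawn for $E_A$ and $E_B$ are independent across edges, the $R_i$'s are mutually independent.

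Next I would lower-bound $\Pr(R_i = 1)$. By definition $R_i = 1$ iff $\sigma^i(1)$ is clean. Conditioning on the value of $\sigma^i(1) \in [\ww]$, which is uniform and independent of the stochastic sampling, and applying \Cref{clm:ind-clean-stoch} to the resulting fixed index yields
\begin{align*}
\Pr(R_i = 1) \;=\; \Exp_{\sigma^i(1)}\!\bigl[\Pr(\sigma^i(1)\text{ is clean} \mid \sigma^i(1))\bigr] \;\geq\; \frac{1}{e^{9c}}.
\end{align*}
Setting $R = \sum_{i=1}^{t} R_i$, linearity of expectation combined with the hypothesis $t \geq 8 \cdot e^{9c} \cdot \ln\ww$ gives $\Exp[R] \geq t / e^{9c} \geq 8 \ln \ww$.

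Finally, since the $R_i$'s are independent $0/1$ variables, I would apply the Chernoff bound in~\Cref{prop:chernoff} with deviation parameter $\eps = 7/8$:
\begin{align*}
\Pr(R < \ln \ww) \;\leq\; \Pr\!\Paren{R \leq \paren{1-\tfrac{7}{8}}\Exp[R]} \;\leq\; \exp\!\paren{-\frac{(7/8)^2 \cdot 8\ln\ww}{3}} \;\leq\; \frac{1}{\ww^2},
\end{align*}
which is the claimed bound. There is no real obstacle here; the only subtlety is verifying independence of the $R_i$'s under the stochastic sampling model, and this is immediate because the blocks partition the edge set of $G$ and each edge's inclusion in $E_A$ or $E_B$ is determined by the independent $c|E|/2$ i.i.d.\ draws given to each player.
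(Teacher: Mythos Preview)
Your proposal is correct and follows the paper's proof essentially line for line: both define indicators $R_i$ for block activity, argue their independence from the fact that $R_i$ depends only on $\sigma^i$ and on how the edges of $B_i$ are sampled into $E_A,E_B$, invoke~\Cref{clm:ind-clean-stoch} to get $\Pr(R_i=1)\geq e^{-9c}$ and hence $\Exp[R]\geq 8\ln\ww$, and finish with~\Cref{prop:chernoff} at deviation $\eps=7/8$.
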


\begin{proof}
	Block $B_i$ being active is independent of other blocks for $i \in [t]$, as it depends only on $\sigma^i$ and how the edges in block $B_i$ are distributed in $E_A$ and $E_B$. For any $i \in [t]$, by \Cref{clm:ind-clean-stoch},
	\begin{align*}
		\prob{B_i \textnormal{ is active}} &= \prob{\sigma^i(1) \textnormal{ is clean}}  \geq 1/e^{9c}.
	\end{align*}
	Let $R_i$ be the indicator variable for whether block $B_i$ is active. We want to prove that $R = \sum_{i=1}^t R_i$ is larger than $ \ln w$ with high probability. By linearity of expectation, $\expect{R} \geq t \cdot \frac1{e^{9c}} \geq 8 \ln w$. Chernoff bound from \Cref{prop:chernoff} can be used here as $R_i$s are independent. 
	\begin{align*}
		\prob{R \geq  \ln \ww} &= \prob{R \geq \paren{1-\frac{7}{8}} \cdot \expect{R}} \\
		&\geq 1-\exp\paren{\frac{-49}{64} \cdot \frac{8 \ln \ww}{3}} \\
		&\geq 1- \frac1{\ww^2}. \qedhere
	\end{align*}
\end{proof}

\Cref{clm:cond-act-g,clm:not-clean-fixed} still hold here as they are only related to the probability of success of the protocol and not how the edges are distributed. Hence, after conditioning on the input graph having $\actt \geq \ln \ww$ active blocks,  $\clw = \ww/2{e^{9c}}$ clean indices in each block, $(x^i, \sigma^i) $ for $B_i \notin \act{G}$, and all the edges associated with indices $[\ww] \setminus \clean{B_i}$ for $B_i \in \act{G}$, Alice and Bob are again left with the following scenario.

	\begin{itemize}
		\item Alice has $\actt \geq  \ln \ww$ strings of length $\clw = \ww/2e^{9c}$ each,  $x^i_j$ for $i \in [\actt], j \in \clean{B_i}$ picked uniformly at random and independently. 
		\item Bob has $\actt$ permutations from $\mathcal{S}_{\clw}$ (though the numbering of the indices may differ), $\sigma^i$ restricted from $\clninv{B_i}$ to $\clean{B_i}$ for $i \in [\actt]$ chosen uniformly at random and independently. 
	\end{itemize}  
 \Cref{lem:avgargument} holds here as well for the updated number of clean indices. We can prove the following lemma for stochastic streams.
 \begin{lemma}\label{lem:dhxor-lower-stoch}
 	For sufficiently large $\ww \geq 1$ and constant $c = O(1)$ with $t \geq  8 \cdot e^{9c} \cdot \ln \ww$, the distributional communication complexity of $\DHXORst_{\ww, t}$ for probability of success at least $\frac12 + \frac1{6\ww} $ is $\Omega(\ww \cdot t)$.
 \end{lemma}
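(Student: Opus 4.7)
The plan is to mirror the proof of \Cref{lem:dhx-lower} almost verbatim, substituting each ingredient with its stochastic-stream analogue that has already been established in this appendix. The key conceptual point, which makes this transfer essentially mechanical, is that the notions of ``clean index'' and ``active block'' were designed precisely to factor the hardness argument through properties of the edge partitioning alone. Since in the stochastic model the partitioning is again independent of $(X,\Sigma)$, and since \Cref{lem:clean-indexstoch}, \Cref{lem:active-blockstoch}, and \Cref{lem:active-block-pijstoch} supply exactly the analogues of \Cref{lem:clean-index}, \Cref{lem:active-block}, and \Cref{lem:active-block-pij}, the only change is the value of $\clw$ (now $\ww/2e^{9c}$ instead of $\ww/100$) and the threshold on $t$.

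The first step is to establish a stochastic version of \Cref{clm:cond-act-g}: any protocol $\Pi$ for $\DHXORst_{\ww, t}$ with success probability at least $\tfrac{1}{2}+\tfrac{1}{6\ww}$ still succeeds with probability at least $\tfrac{1}{2}+\tfrac{1}{8\ww}$ conditioned on the event that $\card{\act{G}} \geq \ln \ww$ and $\card{\clean{B_i}} \geq \clw$ for every $i \in [t]$. The argument is a line-by-line replay of \Cref{clm:cond-act-g}, invoking \Cref{lem:active-blockstoch} in place of \Cref{lem:active-block} and \Cref{lem:clean-indexstoch} in place of \Cref{lem:clean-index}; the hypothesis $t \geq 8 e^{9c} \ln \ww$ ensures these bounds kick in. As before, we truncate each $\clean{B_i}$ to exactly $\clw$ indices lexicographically, a step that depends only on $E_A,E_B$.

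The second step is to apply \Cref{clm:not-clean-fixed} without change. This claim only uses that $(X,\Sigma)$ is uniform and independent of the partitioning, and that $\sigma^i$ restricted from $\clninv{B_i}$ to $\clean{B_i}$ is uniform over $\mathcal{S}_{\clw}$ conditioned on the remainder; both facts are purely combinatorial and hold in the stochastic setting. Fixing the inactive blocks in full and the non-clean parts of the active blocks, the residual randomness consists of $\actt$ uniform strings $x^i \in \{0,1\}^{\clw}$ held by Alice and $\actt$ independent uniform permutations $\sigma^i \in \mathcal{S}_{\clw}$ held by Bob; the target bit $z_G(1)$ equals $\bigoplus_{i\in[\actt]} x^i_{\sigma^i(1)}$ XORed with a publicly-known constant, and \Cref{lem:active-block-pijstoch} ensures each $\sigma^i(1)$ is uniform over its clean block.

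At this point the problem is literally identical to the one analyzed at the end of \Cref{sec:hxorlower}, only with the new value of $\clw$. Invoking \Cref{lem:avgargument} verbatim yields a message $\msg$ sent on at least $2^{\clw\actt - 4\msglgth}$ inputs with success probability at least $\tfrac{1}{2}+\tfrac{1}{8\ww}-\tfrac{1}{2^{3\msglgth}}$. Then the Fourier-analytic bound via \Cref{prop:kklparity}, applied exactly as in the proof of \Cref{lem:dhx-lower} (treating the valid subsets $J=\{\sigma^1(1),\ldots,\sigma^{\actt}(1)\}$ and comparing the KKL upper bound on the expected squared bias to the $\tfrac{1}{64\ww^2}$ lower bound from the success probability), forces $\msglgth = \Omega(\clw \cdot \actt)$. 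Since $c$ is a fixed constant, $\clw = \Omega(\ww)$, and since $\actt$ can be taken to be $\Omega(t)$ with high probability by a Chernoff bound on the independent active-block indicators in \Cref{lem:active-blockstoch} (whose expectation is $\Omega(t/e^{9c})$), we obtain $\msglgth = \Omega(\ww \cdot t)$, exactly as in the corresponding step of \Cref{thm:proof-main-ngc}. No step presents a real obstacle: the only delicate point is confirming that the ``clean''/``active'' abstraction is robust to replacing the uniform $(\Aedge/\Bedge)$ partition with the stochastic sampling process, and this is precisely what \Cref{clm:ind-clean-stoch} and the independence between partitioning and $(X,\Sigma)$ give for free.
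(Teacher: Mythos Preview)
Your proposal is correct and takes essentially the same approach as the paper, which simply states that the proof is identical to that of \Cref{lem:dhx-lower} with $\clw = \ww/2e^{9c}$ and $\actt \geq \ln \ww$. Your explicit remark that $\actt$ can be taken $\Omega(t)$ via Chernoff is, if anything, slightly more careful than the paper's write-up in justifying the $\Omega(\ww\cdot t)$ dependence.
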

The proof of \Cref{lem:dhxor-lower-stoch} is identical to the proof of \Cref{lem:dhx-lower} for $\clw = \ww/2e^{9c}$ and $\actt \geq \ln \ww$.

Putting things together, we prove \Cref{lem:ngcstoch} to finish this section. 

\begin{proof}[Proof of \Cref{lem:ngcstoch}]
	We can extend our hard distribution $\distNGC$ for any $n,k$ by a padding argument described in the proof of \Cref{thm:main-ngc}. 
	We know by \Cref{prop:minmax} that the randomized communication complexity of $\NGCst_{n,k}$ is lower bounded by the distributional communication complexity of $\NGCst_{n,k}$ over distribution $\distNGC$. 

	We also know that, for $\ww  = (n/4k)+1$ and $t \geq (k-3)/3 \geq  8 \cdot e^{9c}  \cdot \ln w$, the deterministic communication complexity of $\NGCst_{n,k}$ over distribution $\distNGC$ for probability of success at least $2/3$ is lower bounded by the distributional communication complexity of $\DHXORst_{\ww, t}$ for probability of success at least $\frac12 + \frac1{6\ww}$ by \Cref{lem:ngctodhxorstoch}.
	
	We also know that $\DHXORst_{\ww, t}$ requires $\Omega(\ww \cdot t) = \Omega(n/k \cdot k) = \Omega(n)$ space by \Cref{lem:dhxor-lower-stoch}, which completes the proof.  
\end{proof} 
\clearpage

\end{document}